\documentclass[11pt]{article}

\usepackage[utf8]{inputenc}
\usepackage[margin=1in]{geometry}
\usepackage{amsmath, amssymb, amsthm, dsfont, mathtools}
\usepackage{xcolor}
\usepackage[numbers,sort&compress]{natbib}
\usepackage{float}
\usepackage{hyperref}
\providecommand{\doi}[1]{\href{https://doi.org/#1}{doi: \nolinkurl{#1}}}
\usepackage[noabbrev, nameinlink, capitalise]{cleveref}
\usepackage{authblk}
\usepackage{enumitem}
\usepackage{aliascnt}

\usepackage{tocloft}
\usepackage{pgfplots}
\pgfplotsset{compat=1.9}
\usetikzlibrary{arrows.meta}

\definecolor{TealDark}{HTML}{008080}
\definecolor{SlateBlueDark}{HTML}{6A5ACD}
\definecolor{SubmissionPurple}{HTML}{7B3294}
\definecolor{SubmissionRed}{HTML}{B2182B}
\hypersetup{
	pdftitle={Dissipation-Enabled Operator-Norm Locality Bounds for Bose-Hubbard Hamiltonians},
	colorlinks,
	linkcolor={TealDark},
	citecolor={SlateBlueDark},
	urlcolor={SlateBlueDark}
}

\newtheoremstyle{myplain}{5pt}{5pt}{\itshape}{0pt}{\bfseries}{}{5pt plus 1pt minus 1pt}{}
\newtheoremstyle{mydefinition}{5pt}{5pt}{\normalfont}{0pt}{\bfseries}{}{5pt plus 1pt minus 1pt}{}

\theoremstyle{myplain}
\newtheorem{thm}{Theorem}[section]
\newaliascnt{conj}{thm}

\aliascntresetthe{conj}
\newaliascnt{cor}{thm}
\newtheorem{cor}[cor]{Corollary}
\aliascntresetthe{cor}
\newaliascnt{lem}{thm}
\newtheorem{lem}[lem]{Lemma}
\aliascntresetthe{lem}
\newaliascnt{prop}{thm}
\newtheorem{prop}[prop]{Proposition}
\aliascntresetthe{prop}
\newaliascnt{assum}{thm}

\aliascntresetthe{assum}

\theoremstyle{mydefinition}
\newaliascnt{defi}{thm}
\newtheorem{defi}[defi]{Definition}
\aliascntresetthe{defi}

\theoremstyle{remark}

\newtheorem{rmk}[thm]{Remark}

\crefname{thm}{theorem}{theorems}
\Crefname{thm}{Theorem}{Theorems}
\crefname{lem}{lemma}{lemmas}
\Crefname{lem}{Lemma}{Lemmas}
\crefname{cor}{corollary}{corollaries}
\Crefname{cor}{Corollary}{Corollaries}
\crefname{prop}{proposition}{propositions}
\Crefname{prop}{Proposition}{Propositions}
\crefname{defi}{definition}{definitions}
\Crefname{defi}{Definition}{Definitions}
\crefname{assum}{assumption}{assumptions}
\Crefname{assum}{Assumption}{Assumptions}
\crefname{conj}{conjecture}{conjectures}
\Crefname{conj}{Conjecture}{Conjectures}
\crefname{ex}{example}{examples}
\Crefname{ex}{Example}{Examples}
\crefname{rmk}{remark}{remarks}
\Crefname{rmk}{Remark}{Remarks}
\crefname{equation}{equation}{equations}
\Crefname{equation}{Equation}{Equations}
\crefname{section}{section}{sections}
\Crefname{section}{Section}{Sections}
\crefname{appendix}{appendix}{appendices}
\Crefname{appendix}{Appendix}{Appendices}
\crefname{figure}{figure}{figures}
\Crefname{figure}{Figure}{Figures}
\crefname{table}{table}{tables}
\Crefname{table}{Table}{Tables}

\newcommand{\N}{\mathbb{N}}
\newcommand{\cB}{\mathcal{B}}
\newcommand{\cD}{\mathcal{D}}
\newcommand{\cH}{\mathcal{H}}
\newcommand{\cI}{\mathcal{I}}
\newcommand{\cL}{\mathcal{L}}

\newcommand{\cS}{\mathcal{S}}
\newcommand{\cT}{\mathcal{T}}
\newcommand{\cW}{\mathcal{W}}
\newcommand{\id}{\mathrm{id}}
\newcommand{\1}{\mathds{1}}

\newcommand{\ketbra}[2]{\left\vert#1\right\rangle\!\! \left\langle#2\right\vert}
\newcommand{\ket}[1]{\left\vert #1 \right\rangle}
\newcommand{\bra}[1]{\left\langle #1 \right\vert}
\newcommand{\tr}[1][]{\operatorname{tr}%
  \ifx#1\empty \else \left[ #1 \right] \fi}

\title{\texorpdfstring{Dissipation-Enabled Operator-Norm\\ Locality Bounds for Bose-Hubbard Hamiltonians}{Dissipation-Enabled Operator-Norm Locality Bounds for Bose-Hubbard Hamiltonians}}
\date{}
\author[1]{Marius Lemm\thanks{marius.lemm@uni-tuebingen.de}}
\author[2,1]{Tim M\"{o}bus\thanks{Corresponding author: moebustim@gmail.com}}

\affil[1]{Department of Mathematics, University of T\"ubingen, 72076 T\"ubingen, Germany}
\affil[2]{Department of Applied Mathematics and Theoretical Physics, University of Cambridge, Cambridge CB3 0WA, United Kingdom}

\begin{document}
\hypersetup{pageanchor=false}
\maketitle
\vspace*{-6ex}
\begin{abstract}
	Closed bosonic lattice systems do not generally admit operator norm Lieb--Robinson bounds uniformly in the initial state. The reason is that the information propagation velocity can grow with the local boson occupancy, which can be macroscopically large. Here, we show that local dissipation restores an operator-norm Lieb--Robinson bound. We consider the dissipative Bose--Hubbard model described by a Lindbladian operator with on-site $\ell$-photon loss, $\ell>2$. For our main result, we establish an almost-ballistic Lieb-Robinson bound. Dissipation rapidly depletes highly occupied sites, thus regularizing the state on the Sobolev-type scale of local particle moments. The resulting moment bounds diverge as time approaches zero but remain integrable near zero at sufficiently low orders, which is crucial for our argument. We extend these ideas to treat cat-code dissipation and, for initial states in the code space, we prove a local adiabatic approximation uniform in the total volume. Further consequences include local channel approximation, a thermodynamic limit, and efficient hybrid bosonic--qubit simulation. The point is that these applications are now available \textit{uniformly in the input state}, as for quantum spin systems, but in contrast to closed Bose--Hubbard systems.
\end{abstract}

\vspace*{-1ex}
\setlength{\cftbeforesecskip}{2pt}
\setlength{\cftbeforesubsecskip}{1pt}
\tableofcontents
\vspace*{\fill}~

\thispagestyle{empty}
\newpage
\addtocounter{page}{-1}
\hypersetup{pageanchor=true}

\section{Introduction}\label{sec:intro}
    The principle of locality is a cornerstone of quantum many-body physics. For lattice systems with finite-dimensional local Hilbert spaces and short-range interactions, also known as quantum spin systems, the seminal theorem of Lieb and Robinson shows that information cannot propagate arbitrarily fast, but is effectively confined to a linear light cone with exponentially small tails outside \cite{Lieb.1972}. The Lieb--Robinson bound, later extended in \cite{hastings2006spectral,Nachtergaele.2006}, has become a fundamental tool in mathematical physics and quantum information theory, with far-reaching consequences for existence of thermodynamic limits, correlation decay and area laws for gapped ground states, and efficient simulability. See  \cite{chen2023speed} for a recent review. Analogous locality bounds are also known for broad classes of irreversible finite-dimensional dynamics generated by Lindbladians \cite{Poulin.2010,Nachtergaele.2011,Rubiliani.2026}.

	For bosonic lattice systems, however, this picture becomes substantially more delicate. The local Hilbert spaces are infinite-dimensional, and the generator is built from unbounded local terms such as the creation, annihilation, and number operators. The infinite-dimensionality of the local Hilbert space breaks the standard Lieb--Robinson bound, in which the information propagation velocity is proportional to the operator norm of the local interaction.  Hence, we cannot expect bosonic Lieb--Robinson bounds to hold uniformly over the choice of initial states, in the absence of special structure, e.g., for perturbations of harmonic oscillator models \cite{nachtergaele2007lieb}.

	A paradigmatic class of models for lattice bosons is given by \textit{Bose--Hubbard-type Hamiltonians}, which were introduced in \cite{Gersch.1963} and later became a standard effective model for ultracold bosons in optical lattices \cite{Jaksch.1998,Greiner.2002,Bloch.2008}.  Since the standard Lieb--Robinson bounds are essentially meaningless for these models and the special structure exploited in \cite{nachtergaele2007lieb} is unavailable, substantial recent literature in mathematical physics has been devoted to studying \textit{state-dependent Lieb--Robinson type bounds}. These bounds then hold for a class of ``well-behaved'' initial states and establish Lieb--Robinson type locality properties of the dynamics uniformly within this class. Examples of suitable initial states include perturbations of stationary states \cite{KuwaharaSaito.2021}, states with particle-free regions \cite{schuch2011information,Faupin.2022}, and states of controlled boson density \cite{YinLucas.2022,kuwahara2024effective,KuwaharaLemm.2024,lemm2025quantum,deuchert2025dynamics,lemm2026microscopic,lemm2026lieb}.
	
	The following free-hopping example, based on the finite-chain calculation in \cite[Supplemental Material, Section~I]{FaupinLemmSigalZhang.2025}, shows why uniformity fails. On the chain $V=\{0,\ldots,r\}$, consider the free hopping Hamiltonian $H=-\frac12\sum_{x=0}^{r-1}(a_x^\dagger a_{x+1}+a_{x+1}^\dagger a_x)$.
	Let $\tau_t$ and $\tau_t^R$ denote its full and restricted Heisenberg evolutions, where $R=\{j,\ldots,r\}$ for any integer $1\le j\le r$, and let $A_r=\1-\ketbra{0}{0}_r$ detect a nonzero occupation at $r$.
	Starting with $n$ bosons at $0$ and vacuum elsewhere (see Figure~\ref{fig:free-hopping-counterexample}), the detection probability tends to one as $n\to\infty$, whereas the restricted evolution predicts zero. Consequently,
	\begin{equation*}
		\|\tau_t(A_r)-\tau_t^R(A_r)\|_\infty=1,
		\qquad 1\le j\le r,\quad 0<t\le1.
	\end{equation*}
	The region $R$ contains every site within distance $r-j$ of the detector, yet the error is one for every cut position $j$. Thus no local approximation bound can decay with the buffer radius uniformly over all input states and finite volumes.
	\begin{figure}[H]
		\vspace{-2ex}
		\centering
		\includegraphics[width=0.84\linewidth]{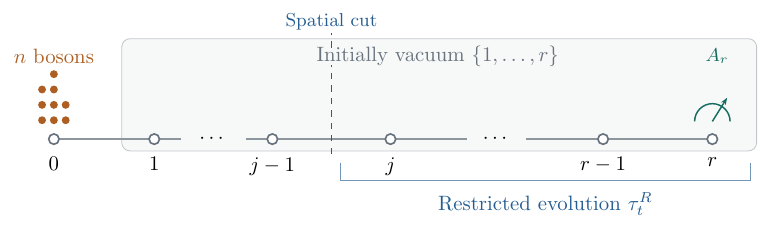}
		\vspace{-3ex}
		\caption{Failure of uniform locality under free hopping. See Appendix~\ref{app:free-hopping-obstruction} for the proof.}
		\label{fig:free-hopping-counterexample}
	\end{figure}

	Stronger counterexamples were constructed by Eisert and Gross: their one-dimensional, translationally invariant, nearest-neighbor bosonic models exhibit exponentially accelerating information propagation with an encoding independent of the transmission distance \cite{Eisert.2009}.
    Moreover, \cite{kuwahara2024effective} showed that even for bounded-density initial states,  certain Bose-Hubbard type Hamiltonians are still capable of propagating information a distance $t^D$ within time $t$, where $D$ is the spatial dimension. (Step 2 of the construction by \cite{kuwahara2024effective} is general and applies at any local particle density, also a macroscopically large one, and thus gives examples of bosonic lattice Hamiltonians with macroscopically large information propagation speed arising for general initial states.) To summarize, closed Bose--Hubbard dynamics only permit state-dependent Lieb--Robinson type bounds and such bounds are often super-ballistic (but uniform in system size).

    In this work, we study how on-site dissipation can be used to improve the picture for the dissipative Bose--Hubbard model. Dissipative Bose--Hubbard dynamics appears naturally when lattice bosons are subject to localized one-body losses, controlled two-body losses, or engineered reservoirs \cite{Syassen.2008,GarciaRipoll.2009,Kepesidis.2012,Kordas.2015,Tomita.2019}. It also belongs to the broader program of using dissipation as a resource for preparing and controlling many-body quantum systems \cite{Diehl.2008,Mueller.2012}. We consider a bosonic lattice model with local photon loss, described on the bosonic Fock space $\cH$ over a lattice $V$ with edges $E$ by the Lindblad master equation
	\begin{equation}\label{eq:lindblad}
        \frac{d\rho_t}{dt} = \cL(\rho_t) = -i[H,\rho_t] + \sum_{j \in V} \gamma_j \cL[a_j^\ell](\rho_t),
	\end{equation}
    where $\cL[L]=L\cdot L^\dagger-\frac{1}{2}\{L^\dagger L,\cdot\}$. Throughout, the loss rates are uniformly bounded away from zero and infinity,
	\begin{equation*}
		0<\gamma_{\min}\coloneqq\inf_{j\in V}\gamma_j
		\le \gamma_j\le
		\gamma_{\max}\coloneqq\sup_{j\in V}\gamma_j<\infty,
	\end{equation*}
	with constants independent of the finite volume. The Hamiltonian $H$ is the Bose--Hubbard Hamiltonian
	\begin{equation}\label{eq:bose-hubbard}
        H = -\sum_{(i,j)\in E} (J_{ij}a_i^\dagger a_j + \overline{J_{ij}}a_j^\dagger a_i) + \frac{1}{2} \sum_{i\in V} U_i N_i(N_i - 1)-\sum_{i\in V}\mu_i N_i.
	\end{equation}
	Here $\ell\in\mathbb N$ and $U_i,\mu_i\in\mathbb R$. We write $e=(i,j)$ for an undirected edge with endpoints $i,j$, identifying $(i,j)$ and $(j,i)$. Sums over edges count each edge once. When an ordered hopping matrix is used in estimates, we extend the coefficients by $J_{ji}=\overline{J_{ij}}$ and set $J_{ij}=0$ if $(i,j)\notin E$.
	Here $\rho$ denotes a density operator in the domain $\cD(\cL)$ of the, in general unbounded, generator $\cL$. The guiding idea is that local loss suppresses large on-site occupations and can therefore act as a dynamical regulator for bosonic transport. Dissipative bosonic transport has also been studied from a macroscopic optimal-transport perspective for long-range systems \cite{Kuwahara.2025}. Moment-controlled locality and simulation bounds for bosonic systems, including engineered loss for polynomial interactions, were developed in \cite{Moebus.2024Learning}. Here the predual dynamics regularizes instantaneously into weighted Sobolev spaces, yielding uniform local approximation without assumptions on the initial moments. Recent progress on functional inequalities and dissipative bosonic systems supports this perspective by establishing moment-propagation estimates \cite{Gondolf.2024,Gondolf.2025thesis,Moebus.2025}. In particular, the Sobolev-regularization result of one of the authors is the analytic tool that turns local loss into a uniform estimate \cite{Rico.2025}.

	The main result of this paper is an operator-norm local approximation estimate for Equation~\eqref{eq:lindblad} when $\ell>2$. Theorem~\ref{thm:adaptive-moment-locality} compares the full Heisenberg evolution with the physical evolution on a finite buffer, uniformly in the initial state and the finite volume. 
      Theorem \ref{thm:adaptive-moment-locality} shows that in the presence of suitably strong multiphoton loss ($\ell>2$), the landmark result of \cite{kuwahara2024effective} for closed Bose--Hubbard systems can be improved in two key ways:
    \begin{itemize}
        \item Theorem \ref{thm:adaptive-moment-locality}  gives a bosonic Lieb--Robinson bound that is uniform in the choice of initial state. In particular, no bounded-density assumption on the initial state as in \cite{kuwahara2024effective} is required.
        \item The bound gives an almost-linear light cone $r\gtrsim t^{1+\eta}$ for any $\eta>0$. This should be compared with $r\gtrsim t^D$ in \cite{kuwahara2024effective}  which is super-ballistic in $D>1$ and shown there to be sharp for suitable time-dependent Hamiltonians.
    \end{itemize}
    The proof uses a time dependent occupation cutoff and optimizes the cutoff scale and moment order.  On $\mathbb Z^D$, it gives superpolynomial decay $e^{-c[\log(2+r)]^2}$ in the buffer radius $r$ at fixed time in every fixed dimension. The restriction $\ell>2$ ensures that the integrated cutoff is finite despite the regularization singularity at time zero.\\[2ex]
    The proof has three ingredients:
    \begin{enumerate}[label=(\roman*), leftmargin=2.2em]
        \item We replace the unbounded hopping terms by locally particle-number truncated hopping terms with a time dependent cutoff.
        \item The truncated dynamics satisfies a dissipative Lieb--Robinson bound controlled by the time integral of the bounded interaction strength. The on-site terms enter through contractive, support-preserving semigroups.
        \item The particle-number truncation error is estimated in the predual picture by Sobolev regularization, which is the point at which dissipation enters: local loss supplies the high-moment bounds needed for a uniform, state-independent estimate.
    \end{enumerate}
    Then, Proposition~\ref{prop:cat-code-polynomial-lrb} extends the adaptive bound to quartic polynomial interactions with shifted four-photon loss. Its constants are uniform in the common dissipation strength $\gamma\ge1$. For initial states in the code space, Theorem~\ref{thm:local-cat-adiabatic-estimate} gives a local adiabatic error $\mathcal O(\gamma^{-1+\eta})$ for every $0<\eta<1$, uniformly in the volume. Extending the present state-independent locality argument to two-photon loss is an interesting direction for future work.

	The paper is organized as follows. Section~\ref{sec:prelim} fixes the notation and functional-analytic setting. Section~\ref{sec:uniform-lrb} states the physical-buffer and polynomial cat-code bounds. Section~\ref{sec:main-proofs} collects their input results and proofs. Section~\ref{sec:appl} gives local channel approximation, Hamiltonian stability, correlation spreading and preparation-time bounds, a local cat-code adiabatic limit, and finite-register simulation. The appendices give the detailed Sobolev and bounded-interaction arguments and the finite-region adiabatic estimate, together with the free-hopping counterexample.

\section{Preliminaries}\label{sec:prelim}
	We establish the notation and recall the basic mathematical framework for bosonic quantum systems. The \emph{bosonic Fock space} is the separable Hilbert space $\cH$, with orthonormal basis $\{\ket{n} : n \in \mathbb{N}_0\}$ indexed by the photon number. Each basis vector $\ket{n}$ represents a state with exactly $n$ photons. On this space, the \emph{annihilation} and \emph{creation operators} $a$ and $a^\dagger$ act by $a\ket{n} = \sqrt{n}\ket{n-1}$ and $a^\dagger\ket{n} = \sqrt{n+1}\ket{n+1}$, satisfying the canonical commutation relation $[a, a^\dagger] = \1$. The \emph{number operator} is defined by $N \coloneqq a^\dagger a$. For multimode systems with $m$ bosonic modes indexed by $i \in V$ for a set of vertices $|V| = m$, the Fock space becomes $\cH_m = \bigotimes_{i \in V} \cH_i$ or $\cH$ if clear from the context, and for each mode $i$ we denote the annihilation, creation, and number operators as $a_i$, $a_i^\dagger$, and $N_i = a_i^\dagger a_i$, respectively. The total number operator is $N_{\mathrm{tot}}\coloneqq\sum_{i\in V}N_i$.

	For multimode systems, we work on finite spatial volumes $V\subset \mathbb{Z}^D$, with constants chosen uniformly in $V$, and an undirected edge set $E$ describing couplings. We take the finite volume $V$ connected, so the concentration weights below are strictly positive. Disconnected full models can be treated componentwise. Buffers $R\subseteq V$ need not be connected. The graph is equipped with the graph distance $\mathrm{dist}(i,j)$ measuring the shortest path between $i,j\in V$. For a region $R \subseteq V$, its vertex boundary is $\partial R = \{i \in V \setminus R : \exists j \in R \text{ with } (i,j) \in E\}$ and the $k^\text{th}$-neighborhood is $R[k] = \{i \in V : \mathrm{dist}(i, R) \leq k\}$. Infinite-volume statements should be understood through these uniform finite-volume estimates.

	The space of \emph{trace-class operators} is denoted $\cT_1(\cH)$ with trace norm $\|\rho\|_1 = \tr[|\rho|]$ and the subset of density operators (quantum states) $\cS = \{\rho \in \cT_1(\cH) : \rho \geq 0, \tr[\rho]=1\}$. Its dual is identified with the bounded operators $\cT_\infty(\cH)=\cB(\cH)$, equipped with the operator norm. For unbounded operators, the crucial domains are the space of \emph{finite vectors}. In one mode,
	\begin{equation*}
		\cH_f \coloneqq \mathrm{span}\{\ket{n} : n \in \mathbb{N}_0, \, n \le M \text{ for some } M \in \mathbb{N}\} \subset \cH,
	\end{equation*}
	while in a finite volume $V$, we use multi-indices $\mathbf n=(n_i)_{i\in V}$, write $\ket{\mathbf n}=\bigotimes_{i\in V}\ket{n_i}$, and set
	\begin{equation*}
		\cH_{V,f}
		\coloneqq
		\mathrm{span}\{\ket{\mathbf n}:\mathbf n\in\mathbb N_0^V\},
		\qquad
		\cT_{V,f}
		\coloneqq
		\mathrm{span}\{\ketbra{\mathbf n}{\mathbf m}:\mathbf n,\mathbf m\in\mathbb N_0^V\}.
	\end{equation*}
	Thus $\cT_{V,f}$ is the algebraic span of matrix units between tensor-product number vectors and is dense in $\cT_1(\cH_V)$. We write $\cH_f$ and $\cT_f$ when the volume is clear from context. All polynomial Hamiltonians below are symmetric on $\cH_f$. A bounded superoperator $K_Y$ is supported in $Y$ if it preserves every local algebra containing $Y$ and vanishes on observables supported in regions disjoint from $Y$.

	A \emph{quantum Markov semigroup} (QMS) is a strongly continuous family $(e^{t\cL})_{t \geq 0}$ of completely positive, trace-preserving contractions on $\cT_1$ satisfying $e^{t\cL} \circ e^{s\cL} = e^{(t+s)\cL}$ and $e^{0 \cdot \cL} = \1$. Its Heisenberg evolution is the normal weak* continuous adjoint contraction semigroup on $\cT_\infty=\cB(\cH)$, defined by the duality relation
	\begin{equation*}
		\tr[\tau_t(A)\rho]=\tr[Ae^{t\cL}(\rho)],\qquad A\in\cB(\cH),\ \rho\in\cT_1.
	\end{equation*}
	Thus $\tau_t$ should not be read as $\exp(t\cL^\dagger)$ acting as a $C_0$-semigroup on all of $\cB(\cH)$ in operator norm. Together with trace-norm continuity of the predual, complete positivity and unitality imply that $t\mapsto\tau_t(A)$ is strong-* continuous for each $A\in\cB(\cH)$ \cite{FagnolaGirotti.2025}. For fixed $A$, this map need not be continuous in the operator norm of $\cB(\cH)$. In the finite-volume and locally truncated arguments below we use the same dual definition. The adjoint evolutions are contractive and generated by local terms.

	For multimode systems on a lattice, we introduce \emph{quantum Sobolev spaces} with locally concentrated Sobolev norms following \cite{Rico.2025}. The reference operator $\cW_v^{k}(x)$ centered at mode $v \in V$ with decay constant $\kappa > 0$ is defined by the closure of the operator on $\cT_f$ given by
	\begin{equation}\label{eq:multi-mode-sobolev-operator}
		\cW_v^{k}(x)=M_{v,k}^{1/2}xM_{v,k}^{1/2},
		\qquad M_{v,k}=\sum_{i\in V}\frac{e^{-\kappa\,\mathrm{dist}(v,i)}}{Z_v}(N_i+\1)^k,
	\end{equation}
	where $Z_v=\sum_{j \in V} e^{-\kappa \, \mathrm{dist}(v,j)}$ is the normalization constant. For finite nearest-neighbor subgraphs of $\mathbb Z^D$, $Z_\kappa^*\coloneqq\sup_{V,v}Z_v<\infty$ for fixed $\kappa>0$. The Sobolev space centered at $v$ is
	\begin{equation}\label{eq:sobolev-norm}
		W^{k,1}_v \coloneqq \cD(\cW_v^{k}) = \left\{ x \in \cT_1 : \|x\|_{W^{k,1}_v} < \infty \right\},
	\end{equation}
	equipped with the norm $\|x\|_{W^{k,1}_v} = \| \cW_v^{k}(x) \|_1$, which forms a Banach space by the isometric sandwich identification with $\cT_1$, with inverse $y\mapsto M_{v,k}^{-1/2}yM_{v,k}^{-1/2}$ (see \cite[Lem.~D.1]{Rico.2025}). The unbounded sandwich is understood as a form with a trace-class extension. For positive $\rho$,
	\begin{equation*}
		\|\rho\|_{W_v^{k,1}}=\operatorname{tr}(M_{v,k}\rho)
		=\sum_{i\in V}\frac{e^{-\kappa\,\mathrm{dist}(v,i)}}{Z_v}
		  \operatorname{tr}[(N_i+\1)^k\rho].
	\end{equation*}
	Thus the positive moment calculations below use precisely these spaces \cite{Rico.2025}.

	The spaces introduced in \cite{Rico.2025} are an approximately local version of the single-mode Sobolev space discussed in \cite{Gondolf.2024}. The present version allows to treat many-mode systems uniformly in the total particle number, which is essential for us.

	\begin{defi}[Multi-mode Sobolev-preserving QMS]\label{def:multimode-sobolev-preserving-qms}
		Let $k>0$ and $v\in V$. We call a quantum Markov semigroup $(e^{t\cL})_{t\ge0}$ \emph{$(k,v)$-Sobolev preserving} if $e^{t\cL}$ defines a $C_0$-semigroup on $W^{k,1}_v$ and there exist constants $C,\omega>0$ such that
	\begin{equation*}
			\|e^{t\cL}\|_{W^{k,1}_v\to W^{k,1}_v}\le C e^{\omega t},\qquad t\ge0.
	\end{equation*}
	\end{defi}

	\begin{defi}[Multi-mode Sobolev-regularizing QMS]\label{def:multimode-sobolev-regularizing-qms}
		Let $k>0$ and $v\in V$. We call a QMS $(e^{t\cL})_{t\ge0}$ \emph{$(k,v)$-Sobolev regularizing} if $e^{t\cL}$ is a bounded map from $\cT_1$ to $W_v^{k,1}$ for every $t>0$, i.e.,
	\begin{equation*}
			\|e^{t\cL}\|_{1\to W^{k,1}_v}<\infty,\qquad t>0.
	\end{equation*}
	\end{defi}

\section{Main Results}\label{sec:uniform-lrb} 
	Lieb--Robinson bounds quantify the propagation of information through a lattice by limiting the influence of distant sites on a local observable. Here we bound that influence by comparing the full evolution with the evolution on a finite region containing the observable. We give bounds for the dissipative Bose--Hubbard model and polynomial cat-code dynamics. Their proofs appear in Section~\ref{sec:main-proofs}.

\subsection{Uniform bound for the Bose--Hubbard dynamics}\label{subsec:main-result}\label{subsec:adaptive-moment-cutoffs}
		For $R\subseteq V$, let $\cL_R$ be the generator obtained by retaining the on-site terms in $R$ and the hopping terms on $E(R)=\{e\in E:e\subset R\}$. Write $\tau_t$ and $\tau_t^R$ for the full and restricted Heisenberg evolutions, with $\tau_t^R$ extended by the identity outside $R$.
		Set $E_R=\{e\in E:e\cap R\ne\varnothing\}$ and $\partial_E R=\{e\in E:|e\cap R|=1\}$. For a nonempty vertex set $X\subseteq V$ and an edge set $F\subseteq E$, define
		\begin{equation*}
			d(X,F)=\min\{\mathrm{dist}(x,z):x\in X,\ z\in e,\ e\in F\},
		\end{equation*}
		with $d(X,\varnothing)=+\infty$.

		\begin{thm}[Uniform-topology dissipative Lieb--Robinson bound]\label{thm:adaptive-moment-locality}
			Let $(\cL,\cT_f)$ be the dissipative Bose--Hubbard model \eqref{eq:lindblad} on a finite graph $(V,E)$ of maximum degree at most $\Delta_G$, with $\ell>2$, $J_*=\sup_{(i,j)\in E}|J_{ij}|<\infty$, loss rates $0<\gamma_{\min}\le\gamma_j\le\gamma_{\max}<\infty$, and Sobolev weights satisfying $Z_\kappa^*<\infty$. We set $J_*=0$ when $E=\varnothing$.
			There are constants $C,c>0$, depending only on $\Delta_G$, $\ell$, $J_*$, the loss-rate bounds, and the fixed Sobolev weights, such that, for every $T\ge0$, nonempty $X\subseteq R\subseteq V$, and $A\in\cB(\cH_X)$,
			\begin{equation}\label{eq:adaptive-graph-locality}
			\begin{aligned}
				\sup_{0\le t\le T}\|\tau_t(A)-\tau_t^R(A)\|_\infty
				&\le C|E_R|(1+T)\|A\|_\infty\\
				&\qquad\times\exp\!\left[-c\log^2\!\left(2+\frac{d(X,\partial_E R)}{1+T}\right)\right].
			\end{aligned}
			\end{equation}
			On nearest-neighbor subgraphs of $\mathbb Z^D$, for every integer $r\ge0$, one also has
			\begin{equation}\label{eq:adaptive-log-gaussian-locality}
				\sup_{0\le t\le T}\|\tau_t(A)-\tau_t^{X[r]}(A)\|_\infty
				\le C|X|(1+T)\|A\|_\infty
				\exp\!\left[-c\log^2\!\left(2+\frac{r}{1+T}\right)\right],
			\end{equation}
			where $C,c$ may additionally depend on $D$. All constants are independent of $T$, $X$, $R$, $r$, the finite volume, and the on-site coefficients $U_i,\mu_i$. If $J_*=0$ or $d(X,\partial_E R)=+\infty$, then $\tau_t(A)=\tau_t^R(A)$; the right side of \eqref{eq:adaptive-graph-locality} is interpreted as zero when the distance is infinite.
		\end{thm}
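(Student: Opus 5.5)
We follow the three-step strategy announced in the introduction: truncate, apply a bounded-interaction Lieb--Robinson bound, and control the truncation error through Sobolev regularization. Fix a nonincreasing cutoff profile $n(s)>0$ and let $P^{(i)}_{n}$ be the spectral projection of $N_i$ onto $\{N_i\le n\}$. Replace each hopping term $h_{ij}=J_{ij}a_i^\dagger a_j+\mathrm{h.c.}$ by $\tilde h_{ij}(s)=P_{ij}(s)h_{ij}P_{ij}(s)$, where $P_{ij}(s)=P^{(i)}_{n(s)}P^{(j)}_{n(s)}$. Then $\|\tilde h_{ij}(s)\|\le 2J_*(n(s)+1)$. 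All on-site terms are kept unchanged: the interaction $U_iN_i(N_i-1)$, the chemical potential $\mu_iN_i$, and the loss $\gamma_i\cL[a_i^\ell]$. These generate contractive, support-preserving semigroups on each site, so the constants cannot depend on $U_i,\mu_i$. Work in the interaction picture with respect to the product of on-site semigroups.

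For this time-dependent truncated generator $\tilde\cL$ and its restriction $\tilde\cL_R$, we run the standard dissipative Lieb--Robinson argument of Nachtergaele--Vershynina--Zagrebnov. A Duhamel expansion over edges crossing $\partial_E R$, iterated along paths in the graph of degree $\le\Delta_G$, gives
\begin{equation*}
\|\tilde\tau_t(A)-\tilde\tau^R_t(A)\|\le C\|A\|\,|E_R|\sum_{k\ge d}\frac{\bigl(2\Delta_G\int_0^t\|\tilde h(s)\|ds\bigr)^k}{k!},\qquad d=d(X,\partial_E R).
\end{equation*}
The key quantity is therefore $I(t)=\int_0^t(n(s)+1)\,ds$.

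Next we bound the difference between the true and truncated evolutions. Using Duhamel again,
\begin{equation*}
\tau_t(A)-\tilde\tau_t(A)=\int_0^t\tau_s\bigl(i[h-\tilde h(s),\cdot\,]\bigr)\tilde\tau_{t-s}(A)\,ds,
\end{equation*}
and we estimate it in duality against an arbitrary state $\rho$. The integrand involves $(h_{ij}-\tilde h_{ij})e^{s\cL}\rho$, and $h_{ij}-\tilde h_{ij}$ is supported on $\{N_i>n\}\cup\{N_j>n\}$. Markov's inequality in the $k$-th moment therefore bounds it by $J_*\,n(s)^{-(k-1)/2}$ times a localized moment $\operatorname{tr}[(N_i+1)^k e^{s\cL}\rho]$. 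Here we invoke the Sobolev regularization of \cite{Rico.2025}: for $\ell$-photon loss, $\|e^{s\cL}\|_{1\to W^{k,1}_v}\le C_k\,s^{-k/(\ell-1)}$ for $s\le1$ and it is bounded afterwards, uniformly in the volume, with $\omega$ controlled by $J_*,\Delta_G$. The localized weight $e^{-\kappa\,\mathrm{dist}}/Z_v\ge e^{-\kappa\cdot 1}/Z^*_\kappa$ recovers a single-site moment. The same bound is needed for the restricted dynamics, and we treat it identically.

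We choose $n(s)=\lambda\,s^{-1/(\ell-1)}$ near zero and constant afterwards. The hypothesis $\ell>2$ makes $1/(\ell-1)<1$, so $I(t)\lesssim\lambda(1+T)$ is finite. This is exactly where $\ell>2$ is used, and it is the main obstacle. The truncation error is then of order $|E_R|(1+T)\lambda^{-(k-1)/2}C_k$. Balancing this against the Lieb--Robinson tail $(e\,c\lambda(1+T)/d)^{d}$, we take $\lambda\sim d/(1+T)$ divided by a constant, with $k\sim\log(2+d/(1+T))$. If $C_k$ grows at most like $e^{ck^2}$, this yields the stated $\exp[-c\log^2(2+d/(1+T))]$. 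The delicate point is the $k$-dependence of the constants in the regularization estimate. I would extract them explicitly from the differential inequality for $\operatorname{tr}[M_{v,k}\rho_s]$, whose loss term scales like $-\gamma_{\min}N^{k+\ell-1}$, and check that they grow at most like $k^{ck}$. The $\mathbb Z^D$ statement then follows by taking $R=X[r]$, using $d\ge r$ and $|E_R|\le C_D|X|r^D$, with the polynomial factor absorbed into the log-Gaussian. Finally, when $J_*=0$ or $d=\infty$, no hopping couples $R$ to its complement, so $\tau_t(A)=\tau^R_t(A)$ exactly.
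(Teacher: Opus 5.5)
You have the paper's architecture: a time-dependent occupation cutoff at scale $\lambda s^{-1/(\ell-1)}$, a bounded-interaction Lieb--Robinson bound with integrated strength $\int_0^t n(s)\,ds$, Sobolev moments for the tails, then $\lambda\sim d/(1+T)$ and $k\sim\log(d/(1+T))$. But two steps fail as written.

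\textbf{Volume uniformity.} Routing through the full-volume truncated dynamics $\tilde\tau_t$ loses it. In $\tau_t(A)-\tilde\tau_t(A)$ the generator difference is the sum of truncation tails over \emph{all} $e\in E$. The dual observable $\tilde\tau_{s,t}(A)$ is only quasi-local, so pairing with $T_s(\rho)$ gives a bound proportional to $|E|$, not $|E_R|$. Nothing in your argument removes the far edges; doing so would need an extra Lieb--Robinson summation over all edges, which is delicate on general bounded-degree graphs. The paper instead compares both $\tau_t$ and $\tau_t^R$ with the truncated propagator on $R$ itself, $(T^{R,M}_{t,s})^*$. Since $(T^{R,M}_{t,s})^*(A)$ is strictly supported in $R$, every on-site and hopping term outside $R$ drops out of the Duhamel integrand by trace preservation. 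What remains are tails on $E_R$ and the bounded boundary hoppings on $\partial_E R$; the latter are exactly your Lieb--Robinson term. Truncating only the edges of $E_R$ would also work.

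\textbf{The tail estimate neither holds nor integrates.} With the moment to the first power, your cutoff $n(s)=\lambda s^{-1/(\ell-1)}$ and $\operatorname{tr}[(N+1)^kT_s\rho]\lesssim C_ks^{-k/(\ell-1)}$ give the integrand $\lambda^{-(k-1)/2}C_ks^{-(k+1)/(2(\ell-1))}$. This is not integrable at $s=0$ once $k\ge 2\ell-3$, which is exactly the large-$k$ regime the $\log^2$ decay needs. So the claimed error $|E_R|(1+T)\lambda^{-(k-1)/2}C_k$ does not follow, and $\ell>2$ must be used on the tail side too, not only for $I(t)$.

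The bound is also false in general. Take an edge with $|J_{xy}|=J_*$ and $\psi=\sqrt{1-P}\ket{0,0}+\sqrt{P}\ket{n+1,n+1}$ with $P=n^{-k-1/2}$. Then $\|(h_e-\tilde h_e)\psi\|\sim J_*n^{-k/2+3/4}$, while both local moments stay bounded, so your right side is only $\sim J_*n^{-(k-1)/2}$.

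The missing ingredient is the Cauchy--Schwarz form of Lemma~\ref{lem:hopping-truncation-error}. With $Q$ the tail projection, $\|h_{xy}Q\sigma\|_1\le\operatorname{tr}[Q(N_x+1)(N_y+1)Q\sigma]^{1/2}$. On the range of $Q$ one has $(N_x+1)(N_y+1)\le c_kn^{-(k-2)}\bigl((N_x+1)^k+(N_y+1)^k\bigr)$. Together these give $CJ_*n^{-(k-2)/2}\sum_{z\in e}\operatorname{tr}[(N_z+1)^k\sigma]^{1/2}$.

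The square root is what makes the cutoff calibration work. Take $n(s)=\lambda b_k(s)$, where $b_k(s)^k$ bounds the $k$-th moment of $T_s\rho$. The integrand becomes $\lambda^{-(k-2)/2}b_k(s)\sim s^{-1/(\ell-1)}$, integrable for every $k$ precisely because $\ell>2$.

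With these repairs your optimization goes through. The paper also caps the cutoff near $s=0$ and removes the cap at the end. The explicit constants give $b_k\lesssim k2^k$, i.e.\ $C_k=e^{O(k^2)}$ rather than $k^{O(k)}$, which still suffices.
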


	The most important point is that the bound involves the operator norm and is thus uniform in the choice of the initial state. 

    A second point concerns the shape of the light cone, which is \textit{almost ballistic} in our case.
         For fixed support and error tolerance, Equation~\eqref{eq:adaptive-log-gaussian-locality} permits a buffer radius $r=\lceil C_\eta(1+T)^{1+\eta}\rceil$ for every $\eta>0$, with a constant depending on $\eta$, the support, the tolerance, and the model bounds. An estimate where $r$ is linear in $t$ would correspond to a finite-velocity cone as in the standard Lieb-Robinson bound for quantum spin systems. In our case, the presence of the arbitrarily small $\eta>0$ amounts to an \textit{almost-ballistic propagation bound}. The constants also depend on the loss order $\ell$. In particular, the moment estimates used in the proof cannot be uniform in $\ell$: on one site, $|\ell-1\rangle\langle\ell-1|$ is stationary under $\ell$-photon loss and a Hamiltonian depending on $N$, and its expectation of $N+\1$ is $\ell$. 

    The two-photon loss case $\ell=2$ remains open. It regularizes all number moments by Proposition~\ref{main-prop:sobolev-regularizing}, but the resulting cutoff scales like $s^{-1}$ near time zero. Its time integral therefore diverges, and the present adaptive-cutoff argument does not establish locality uniformly over all input states.

\subsection{Multiphoton cat-code dissipators and polynomial interactions}\label{subsec:cat-code-extension}
	We consider shifted four-photon jumps and polynomial interactions. On a finite volume $V$, set
	\begin{equation}\label{eq:cat-code-jump}
	\begin{aligned}
		L_x&=a_x^4-\alpha_x^4,&\qquad
		H^{\mathrm{cat}}&=\sum_{x\in V}H_x+\sum_{e\in E}H_e,\\
		\cL^\gamma&=-i[H^{\mathrm{cat}},\cdot]
			+\gamma\sum_{x\in V}\gamma_x^{\mathrm{cat}}\cL[L_x],
			&\qquad\gamma&\ge1.
	\end{aligned}
	\end{equation}
	For $R\subseteq V$, define $H_R^{\mathrm{cat}}$ and $\cL_R^\gamma$ by restricting these sums to $x\in R$ and $e\in E(R)$.
	Each $H_e$ is a Hermitian polynomial of total degree at most four, where degree counts creation and annihilation operators. For $e=(x,y)$, this includes beam-splitter terms $a_x^\dagger a_y+a_y^\dagger a_x$, two-mode squeezing terms $a_xa_y+a_x^\dagger a_y^\dagger$, cross-Kerr terms $N_xN_y$, and pair-exchange terms $(a_x^\dagger)^2a_y^2+(a_y^\dagger)^2a_x^2$. Let $J_{\mathrm{cat}}$ bound the sum of absolute coefficients in each edge polynomial $H_e$. It does not include the coefficients of the on-site polynomials $H_x$, which are assumed to satisfy a separate uniform bound. Each $H_x$ is a Hermitian polynomial whose terms either commute with $N_x$ or have degree at most four. Write $\tau_t^\gamma$ for the full Heisenberg evolution and $\tau_t^{\gamma,R}$ for its restriction to $R$, extended by the identity outside $R$.

	\begin{prop}[Four-photon polynomial Lieb--Robinson bound]\label{prop:cat-code-polynomial-lrb}
		Let $(\cL^\gamma,\cT_f)$ be the polynomial bosonic model with shifted four-photon loss defined in Equation~\eqref{eq:cat-code-jump} on a lattice $(V,E)$ of uniformly bounded maximum degree $\Delta_G$, with weight satisfying $Z_\kappa^*<\infty$ and uniform bounds on the amplitudes, Hamiltonian coefficients, and rates $0<\gamma_-\le\gamma_x^{\mathrm{cat}}\le\gamma_+<\infty$. For every $p>4$, there are $C_p,C>0$, depending only on $p$ and these fixed bounds, such that, for $\lambda\ge2$, $t>0$, $X\subset R\subseteq V$, $A\in\cB(\cH_X)$, and $\gamma\ge1$,
		\begin{equation}\label{eq:cat-code-lrb}
		\begin{aligned}
			\|\tau_t^\gamma(A)-\tau_t^{\gamma,R}(A)\|_\infty
				&\le C_pJ_{\mathrm{cat}}|E_R|\,\|A\|_\infty
					\lambda^{-(p-4)/2}(t^{1/3}+t)\\
				&\quad+C|\partial_E R|\,\|A\|_\infty e^{-d(X,\partial_E R)}
					\left(e^{C_pJ_{\mathrm{cat}}(\lambda+1)^2(t^{1/3}+t)}-1\right).
		\end{aligned}
		\end{equation}
		In particular, on $\mathbb Z^D$, for every $X$, $T>0$, and $q>0$, there is $C_{q,T,X}<\infty$ such that, for $R=X[r]$,
		\begin{equation}\label{eq:cat-locality-lattice}
			\sup_{0\le t\le T}
			\|\tau_t^\gamma(A)-\tau_t^{\gamma,R}(A)\|_\infty
				\le C_{q,T,X}(1+r)^{-q}\|A\|_\infty.
		\end{equation}
		The constants are independent of the volume, $r$, and the common dissipation strength $\gamma\ge1$. They can be chosen with $C_p\le Ce^{Cp}$, where $C$ is independent of $p$.
	\end{prop}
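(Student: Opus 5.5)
We follow the three-step scheme of the proof of Theorem~\ref{thm:adaptive-moment-locality}, with a fixed (not time-dependent) occupation cutoff $\lambda$. For each edge $e=(x,y)$ we replace the edge polynomial $H_e$ by its truncation $H_e^{\lambda}=P_{e}^{\lambda}H_eP_{e}^{\lambda}$, where $P_e^{\lambda}$ projects onto $\{N_x\le\lambda,N_y\le\lambda\}$. Because $H_e$ has degree at most four, $\|H_e^{\lambda}\|_\infty\le J_{\mathrm{cat}}(\lambda+1)^2$. The truncated generator $\cL^{\gamma,\lambda}$ keeps all on-site terms $-i[H_x,\cdot]+\gamma\gamma_x^{\mathrm{cat}}\cL[L_x]$. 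These generate contractive semigroups supported on single sites, and their norms do not enter the estimate, so the bounds are automatically uniform in $\gamma$. We then bound
\[
\|\tau_t^\gamma(A)-\tau_t^{\gamma,R}(A)\|_\infty\le \|\tau_t^\gamma(A)-\tau_t^{\gamma,\lambda}(A)\|_\infty+\|\tau_t^{\gamma,\lambda}(A)-\tau_t^{\gamma,\lambda,R}(A)\|_\infty+\|\tau_t^{\gamma,\lambda,R}(A)-\tau_t^{\gamma,R}(A)\|_\infty .
\]
The middle term comes from the bounded-interaction dissipative Lieb--Robinson bound (the interaction-picture/Duhamel argument on the boundary edges $\partial_E R$). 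Summing the path expansion with edge strength $J_{\mathrm{cat}}(\lambda+1)^2$ over a graph of degree $\Delta_G$ gives the second line of \eqref{eq:cat-code-lrb}, namely $C|\partial_E R|\,\|A\|_\infty e^{-d(X,\partial_E R)}(e^{v\int_0^t\dots}-1)$. Here the exponent is the integrated interaction strength, which we dominate by $C_pJ_{\mathrm{cat}}(\lambda+1)^2(t^{1/3}+t)$. The factor $e^{-d}$ is obtained by rescaling the path-counting constant into $C_p$.

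The two truncation errors are handled by Duhamel in the predual. For $\rho\in\cS$,
\[
\operatorname{tr}\bigl[(\tau_t^\gamma-\tau_t^{\gamma,\lambda})(A)\rho\bigr]=\int_0^t \operatorname{tr}\bigl[\tau^{\gamma,\lambda}_{t-s}(A)\,(-i)[H_e-H_e^{\lambda},e^{s\cL^\gamma}\rho]\bigr]ds,
\]
summed over $e\in E_R$; the restricted evolution is treated the same way. Each integrand is bounded by $2\|A\|_\infty\|(H_e-H_e^{\lambda})\sigma_s\|_1$ with $\sigma_s=e^{s\cL^\gamma}\rho$. Since $H_e-H_e^\lambda$ is supported on $\{N_x>\lambda\}\cup\{N_y>\lambda\}$ and has degree at most four, Cauchy--Schwarz together with Markov's inequality gives
\[
\|(H_e-H_e^{\lambda})\sigma_s\|_1\le CJ_{\mathrm{cat}}\lambda^{-(p-4)/2}\bigl(\operatorname{tr}[(N_x+\1)^p\sigma_s]+\operatorname{tr}[(N_y+\1)^p\sigma_s]\bigr)^{1/2}\cdots,
\]
and after matching exponents this is $\lesssim \lambda^{-(p-4)/2}\,\|\sigma_s\|_{W^{p,1}_x}$ up to the factor $Z_\kappa^*$. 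To justify commuting the truncation with the dynamics, we work on $\cT_f$ and use core arguments.

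The key input, and the main obstacle, is a $\gamma$-uniform Sobolev regularization estimate for the four-photon cat dissipator with polynomial interactions:
\[
\|e^{s\cL^\gamma}\|_{1\to W^{p,1}_v}\le C_p(1+s^{-2/3}),\qquad \gamma\ge1 .
\]
Its time integral gives exactly $C_p(t^{1/3}+t)$. We would prove it as in Proposition~\ref{main-prop:sobolev-regularizing}, through a differential inequality for the weighted moment $m(s)=\operatorname{tr}[M_{v,p}\sigma_s]$. Four-photon loss contributes $-\gamma c\,(N+\1)^{p+3}$ up to lower-order terms, including those from the shift $\alpha^4$. The degree-four edge terms contribute at most $C(N+\1)^{p+1}$ after spreading over neighbours with the exponential weights, and the on-site terms either commute with $N_x$ or have degree at most four. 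Jensen then gives $m'\le -\gamma c\,m^{1+3/p}+Cm$. Comparison with the ODE yields $m(s)\lesssim(\gamma s)^{-p/3}+C$, which is not integrable for $p>3$. So we must instead interpolate between the trace norm and the moment bound, or use a lower moment in Cauchy--Schwarz. Concretely, we apply the regularization only at order $q=2$ (degree $4$ means we need the $(N+\1)^2$ weight to control $H_e$), obtaining $s^{-2/3}$ since $q/3=2/3$. The factor $\lambda^{-(p-4)/2}$ then comes from the tail $\{N>\lambda\}$ controlled through moment propagation of order $p$ at later times. The difficult part is balancing these exponents uniformly in $\gamma\ge1$; the strong dissipation only helps, since $(\gamma s)^{-2/3}\le s^{-2/3}$. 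The constants grow like $e^{Cp}$ from the binomial expansions of $(N+\1+\ell)^p$.

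For the lattice corollary \eqref{eq:cat-locality-lattice} we take $R=X[r]$, so that $|E_R|,|\partial_E R|\le C_{X}(1+r)^D$ and $d(X,\partial_E R)\ge r$. We choose $\lambda=r^{1/4}$, so that the exponential term $e^{-r+CT\sqrt r}$ decays faster than any polynomial. Taking $p$ large with $(p-4)/8\ge q+D$ then makes the first term at most $C_{q,T,X}(1+r)^{-q}$.
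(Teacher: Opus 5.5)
\textbf{Gap: the fixed cutoff $\lambda$.} With a time-independent truncation, the tail term at time $s$ is at best $\lambda^{-(p-4)/2}\operatorname{tr}[(N_z+\1)^p\sigma_s]^{1/2}$. Uniformly in the input state, the only control on the $p$-th moment is the regularization bound $\operatorname{tr}[(N_z+\1)^p\sigma_s]\lesssim_p s^{-p/3}+1$, which you derive yourself.

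\begin{itemize}
\item The integrand $\lambda^{-(p-4)/2}s^{-p/6}$ is not integrable at $s=0$ once $p\ge6$. So you cannot reach $\lambda^{-(p-4)/2}$ for every $p>4$. For $4<p<6$ you get $t^{1-p/6}$ instead of $t^{1/3}$.
\item Your remedy (regularize at low order, then extract $\lambda^{-(p-4)/2}$ from ``moment propagation of order $p$ at later times'') does not close this. Any order-$p$ bound valid for all inputs carries the same $s^{-p/3}$ singularity near $s=0$.
\item Splitting $[0,t]$ at $s_0$, using only fourth moments on $[0,s_0]$ and optimizing, gives $s_0\sim\lambda^{-3}$. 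The truncation error is then of order $\lambda^{-1}$, independent of $p$. This reflects the physics: for $s\lesssim\lambda^{-3}$ the typical occupation $s^{-1/3}$ exceeds $\lambda$.
\item The lattice corollary then fails along your route. The boundary term forces $\lambda\lesssim\sqrt{r/T}$, and $|E_R|\lambda^{-1}\sim r^{D-1/2}$ does not decay.
\item The estimate you list as the key input, $\|e^{s\cL^\gamma}\|_{1\to W^{p,1}_v}\le C_p(1+s^{-2/3})$, is false for $p>2$. The singularity $s^{-2/3}$ belongs to the square of the $p$-th root of the moment, not to the moment itself.
\end{itemize}

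\textbf{Missing idea: an adaptive cutoff.} The paper uses the adaptive cutoff from the proof of Theorem~\ref{thm:adaptive-moment-locality}.
\begin{itemize}
\item Let $b_p(s)=C_p^{\mathrm{mom}}(s^{-1/3}+1)$ bound the $p$-th root of the regularized moments. This is uniform in $\gamma\ge1$ because the drift inequality is applied to $\gamma^{-1}\cL_R^\gamma$, which yields $(\gamma s)^{-1/3}\le s^{-1/3}$.
\item Take $M(s)=\min\{\widetilde M,\lceil\lambda b_p(s)\rceil\}$. Then $M(s)^{-(p-4)/2}b_p(s)^{p/2}\le\lambda^{-(p-4)/2}b_p(s)^2$.
\item This is integrable: $\int_0^t b_p(s)^2\,ds\le 6(C_p^{\mathrm{mom}})^2(t^{1/3}+t)$.
\item The price is paid in the boundary term. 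The Lieb--Robinson bound is applied intervalwise after the time change $\int_0^s M(u)^2\,du$, so its exponent is at most $CJ_{\mathrm{cat}}\int_0^t M(s)^2\,ds\le C_pJ_{\mathrm{cat}}(\lambda+1)^2(t^{1/3}+t)$. This is why both lines of \eqref{eq:cat-code-lrb} carry the same factor $t^{1/3}+t$.
\item The cap $\widetilde M$, with order-$5$ tails where $M(s)=\widetilde M$, is removed at the end.
\end{itemize}

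\textbf{Secondary problem: volume dependence.} Your triangle decomposition routes through the truncated dynamics on all of $V$. Its Heisenberg evolution of $A$ is not supported in $R$. So the Duhamel sum for $\tau_t^\gamma-\tau_t^{\gamma,\lambda}$ runs over all of $E$, not $E_R$, and the bound depends on the volume. Instead, compare the full evolution directly with the truncated evolution on $R$, and then $\tau_t^{\gamma,R}$ with that same propagator. Then only tails on $E_R$ and bounded boundary parts on $\partial_E R$ appear.
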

	\section{Proof of main results}\label{sec:main-proofs}
\addtocontents{toc}{\protect\setcounter{tocdepth}{1}}
We collect the two analytic inputs and the truncation estimate before proving the locality bounds.

For the proofs, we truncate the hopping terms by particle number. For an integer $M\ge2$ and $e=(x,y)\in E$, set
\begin{align*}
	P_z^{(M)}&=\sum_{n=0}^M\ketbra{n}{n},
	& P_e^{(M)}&=P_x^{(M)}P_y^{(M)},\\
	H_e&=-J_{xy}a_x^\dagger a_y-\overline{J_{xy}}a_y^\dagger a_x,
	& H_e^{(M)}&=P_e^{(M)}H_eP_e^{(M)}.
\end{align*}
Here $\cI_e^{(M)}(\rho)=-i[H_e^{(M)},\rho]$ and $\cL_x^{\mathrm{on}}$ denotes the on-site summand of $\cL$. The truncated generator on $R\subseteq V$ is
\begin{equation}\label{eq:local-truncated-generator}
	\cL_{R,M}=\sum_{x\in R}\cL_x^{\mathrm{on}}+\sum_{e\in E(R)}\cI_e^{(M)}.
\end{equation}

\subsection{Sobolev regularization input}\label{subsec:sobolev-regularization}
		We use the following Sobolev regularization estimate for the dissipative Bose--Hubbard model, based on \cite[Cor.~3.8 and Thm.~3.12]{Rico.2025}.
		\begin{prop}[Dissipative Bose--Hubbard]\label{main-prop:sobolev-regularizing}
			Let $(\cL,\cT_f)$ be the dissipative Bose--Hubbard model \eqref{eq:lindblad} on a lattice $(V,E)$ of uniformly bounded maximum degree $\Delta_G$ and hopping coefficients $J_*=\sup_{(i,j)\in E}|J_{ij}|<\infty$ for any $\ell\ge2$ and loss rates $0<\gamma_{\min}\le\gamma_j\le\gamma_{\max}<\infty$. Then the closure of $(\cL,\cT_f)$ generates a QMS on $\cT_1$, which is Sobolev preserving and regularizing in the senses of Definitions~\ref{def:multimode-sobolev-preserving-qms} and~\ref{def:multimode-sobolev-regularizing-qms}. For every $k\ge1$, $v\in V$, and $t>0$,
			\begin{equation*}
				\|e^{t\cL}\|_{1\to W_v^{k,1}}\le\left(\frac{2k}{(\ell-1)\gamma_{\min}\ell t}\right)^{k/(\ell-1)}+\Delta_{\ell,k}^k,
			\end{equation*}
			where
			\begin{equation*}
				\Delta_{\ell,k}=\frac{2}{\gamma_{\min}\ell}\left(\gamma_{\max}\frac{(\ell+1)\ell^2}{2}+2^kkJ_*\Delta_G(1+e^\kappa)\right).
			\end{equation*}
			The constants are independent of the finite volume and center $v$.
		\end{prop}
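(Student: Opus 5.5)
The plan is to reduce the statement to a scalar differential inequality for the weighted moment $f(t)=\operatorname{tr}(M_{v,k}\rho_t)$, with $\rho_t=e^{t\cL}\rho$ and $\rho$ a state. Generation of a QMS by the closure of $(\cL,\cT_f)$, together with the Sobolev-preserving property, I would take from \cite[Cor.~3.8]{Rico.2025}. That result rests on a Chernoff/Lumer--Phillips argument, using that $M_{v,k}$ is a Lyapunov-type reference operator with $\cL^\dagger(M_{v,k})\le \omega M_{v,k}$ on $\cH_f$. The regularizing estimate would then follow from a sharper, superlinear version of this same computation. By positivity and the Jordan decomposition, it suffices to bound $f(t)$ for states $\rho$. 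The sandwich norm of a general trace-class $x$ is controlled by the positive parts, at the cost of a harmless constant that can be absorbed, or avoided via the Stinespring/CP structure. By density it also suffices to take $\rho\in\cT_f$.

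\textbf{Step 1: the adjoint generator on $M_{v,k}$.} I would compute $\cL^\dagger(M_{v,k})$ on finite vectors term by term. The on-site Hamiltonian terms $U_iN_i(N_i-1)$ and $\mu_iN_i$ commute with $M_{v,k}$ and drop out; this is why the constants are independent of $U_i$ and $\mu_i$. For the loss term with $g(N)=(N+\1)^k$,
\[
\cL[a^\ell]^\dagger(g(N))=N(N-1)\cdots(N-\ell+1)\,\bigl(g(N-\ell)-g(N)\bigr).
\]
The mean value theorem gives $g(N)-g(N-\ell)\ge k\ell (N+\1-\ell)^{k-1}$ on $N\ge\ell$. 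For large $N$ this yields $-\tfrac{k\ell}{2}(N+\1)^{k+\ell-1}$ up to a remainder that is bounded by $\tfrac{(\ell+1)\ell^2}{2}$-type constants times $(N+\1)^k$. These are exactly the constants appearing in $\Delta_{\ell,k}$, and they also cover the finitely many low levels $N<\ell$.

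For the hopping, $i[a_i^\dagger a_j+a_j^\dagger a_i,(N_i+\1)^k]$ produces $a_i^\dagger a_j\bigl((N_i+2)^k-(N_i+1)^k\bigr)$ and its adjoint. Cauchy--Schwarz together with $(N+2)^k-(N+1)^k\le k2^k(N+1)^{k-1}$ bounds this in form sense by $k2^kJ_*\bigl((N_i+\1)^k+(N_j+\1)^k\bigr)$. When reinserted into the weighted sum, the shift from $i$ to a neighbor $j$ costs at most a factor $e^\kappa$ in the weight $e^{-\kappa\,\mathrm{dist}(v,\cdot)}$. Summing over at most $\Delta_G$ neighbors gives a contribution of size $2^kkJ_*\Delta_G(1+e^\kappa)M_{v,k}$. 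Altogether I expect
\[
\cL^\dagger(M_{v,k})\le -\tfrac{\gamma_{\min}k\ell}{2}\sum_i w_i(N_i+\1)^{k+\ell-1}+\tfrac{\gamma_{\min}k\ell}{2}\Delta_{\ell,k}M_{v,k},
\]
where $w_i=e^{-\kappa\,\mathrm{dist}(v,i)}/Z_v$ and the second coefficient is at most of this size.

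\textbf{Step 2: closing the inequality.} Since $\sum_iw_i=1$ and $\rho$ is a state, Jensen's inequality applied to the convex map $s\mapsto s^{(k+\ell-1)/k}$ (over the joint spectral measure of $\rho$ and the weights) gives
\[
\operatorname{tr}\Bigl(\sum_i w_i(N_i+\1)^{k+\ell-1}\rho\Bigr)\ge f^{1+\beta},\qquad \beta=\frac{\ell-1}{k}.
\]
Hence $f'\le -a f^{1+\beta}+a\Delta_{\ell,k}f$ with $a=\gamma_{\min}k\ell/2$. A comparison argument then finishes the estimate. As long as $f\ge(2\Delta_{\ell,k})^{k}$, the linear term is at most half the nonlinear one, so $f'\le-\tfrac a2f^{1+\beta}$. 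Integrating gives $f(t)\le(\tfrac{a\beta t}{2})^{-1/\beta}=\bigl(\tfrac{2k}{(\ell-1)\gamma_{\min}\ell t}\bigr)^{k/(\ell-1)}$, up to the matching of constants. Below the threshold, $f$ can never cross it upward. Taking the maximum of the two regimes gives the stated sum, uniformly in $V$ and $v$, and independently of $f(0)$, which may be infinite in the limit.

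\textbf{Main obstacle.} The main difficulty is rigor rather than algebra: justifying $\frac{d}{dt}f(t)=\operatorname{tr}(\cL^\dagger(M_{v,k})\rho_t)$ when $\rho_t$ need not stay in $\cT_f$ and $M_{v,k}$ is unbounded. I would first try replacing $M_{v,k}$ by the bounded truncations $M_{v,k}^{(n)}=M_{v,k}(\1+M_{v,k}/n)^{-1}$, or by moment cutoffs at level $n$, and deriving the inequality with an error vanishing as $n\to\infty$. Failing that, I would fall back on the a priori Sobolev-preservation bound from \cite[Cor.~3.8]{Rico.2025}, which puts $\rho_t$ in $W^{k+\ell-1,1}_v$ for $\rho\in\cT_f$ and so legitimizes the differentiation, exactly as in \cite[Thm.~3.12]{Rico.2025}. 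A monotone-convergence and density argument then removes the restriction on initial data.
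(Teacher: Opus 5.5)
Your architecture matches the paper's. You build a drift inequality for $f(t)=\operatorname{tr}[M_{v,k}\rho_t]$ from a one-site loss bound and a hopping commutator bound; your hopping estimate, the $e^{\kappa}$ weight shift and the $\Delta_G$ neighbour count are exactly the paper's. You then apply Jensen and an ODE comparison, and delegate generation, preservation and the differentiation to \cite[Cor.~3.8, Lem.~D.2, Thm.~3.12]{Rico.2025}.

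There is, however, a genuine gap in the loss part of Step~1, which is false as stated. The mean-value bound $g(N)-g(N-\ell)\ge k\ell(N+\1-\ell)^{k-1}$ is comparable to $k\ell(N+\1)^{k-1}$ only for $N\gtrsim k\ell$. So you cannot keep the factor $k$ in the dissipative coefficient with a remainder of the size in $\Delta_{\ell,k}$. Concretely, take one site, $J_*=0$, $\gamma\equiv1$, $\ell=2$ (so $\Delta_{2,k}=6$) and $k=20$. Then $\cL^\dagger((N+\1)^{20})=-N(N-1)\bigl((N+\1)^{20}-(N-\1)^{20}\bigr)$. At $N=10$ this equals $-90(11^{20}-9^{20})>-90\cdot11^{20}$, while your right-hand side is $-20\cdot11^{21}+120\cdot11^{20}=-100\cdot11^{20}$. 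In addition, for $\ell>2$ the loss remainder is of order $(N+\1)^{k+\ell-2}$, not $(N+\1)^k$. Turning it into a multiple of $M_{v,k}$ uses up part of the leading term and produces a constant of a different form (superexponential in $\ell$). So the claim that you recover exactly the constants in $\Delta_{\ell,k}$ fails.

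Step~2 inherits the problem. The threshold is $(2\Delta)^{1/\beta}=(2\Delta)^{k/(\ell-1)}$, not $(2\Delta)^k$. Keeping a remainder linear in $f$ and comparing only above the threshold spends half of the dissipation a second time. Even without the spurious $k$, your singular term is then $\bigl(4k/((\ell-1)\gamma_{\min}\ell t)\bigr)^{k/(\ell-1)}$ when $\ell>2$. For $\ell=2$, where no absorption is needed, your threshold route does reproduce the stated constants. Since the proposition asserts explicit constants, this is not cosmetic.

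The missing idea is to aim for an \emph{additive} constant rather than a linear one:
\begin{itemize}
\item For $x\ge\ell-1$, use $(x+1)^k-(x+1-\ell)^k=k\ell(x+1)^{k-1}\int_0^1(1-s\ell/(x+1))^{k-1}\,ds\ge\ell(x+1)^{k-1}$.
\item Combine this with $N(N-1)\cdots(N-\ell+1)\ge(N+\1)^\ell-\frac{(\ell+1)\ell}{2}(N+\1)^{\ell-1}$ to get the one-site bound $-\ell(N+\1)^{k+\ell-1}+\frac{(\ell+1)\ell^2}{2}(N+\1)^{k+\ell-2}$.
\item Raise the hopping contribution to the same order via $(N+\1)^k\le(N+\1)^{k+\ell-2}$.
\item Split off half of the leading term and apply $\sup_{x\ge0}(-x^\nu+\Delta_{\ell,k}x^{\nu-1})\le\Delta_{\ell,k}^\nu$ with $\nu=k+\ell-1$.
\end{itemize}
This gives $\operatorname{tr}[M_{v,k}\cL(\rho)]\le-\mu\operatorname{tr}[M_{v,k+\ell-1}\rho]+\mu\Delta_{\ell,k}^{k+\ell-1}$ for states, with $\mu=\gamma_{\min}\ell/2$. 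Hence $f'\le-\mu f^{1+\beta}+\mu\Delta_{\ell,k}^{k+\ell-1}$. Comparing $f-\Delta_{\ell,k}^k$ via superadditivity of $x\mapsto x^{1+\beta}$ yields $f(t)\le(\mu\beta t)^{-1/\beta}+\Delta_{\ell,k}^k$. This is exactly the stated bound, and it is what \cite[Thm.~3.12]{Rico.2025} encodes.
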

		\begin{proof}
			The proof is given in Appendix~\ref{sec:sobolev}.
		\end{proof}

	\subsection{Bounded-interaction Lieb--Robinson input}\label{subsec:lieb-robinson-bounds}
		We first consider bounded edge interactions and unbounded on-site terms. Write the generator as
		\begin{equation}\label{eq:onsite-plus-interaction-generator}
			\cL^{(M)}=\cL_{\mathrm{on}}+\sum_{e\in E}\cI^{(M)}_e,
			\qquad
			\cI^{(M)}_e(\rho)=-i[H^{(M)}_e,\rho],
		\end{equation}
		where $\cL_{\mathrm{on}}$ is the sum of all on-site Hamiltonian and dissipative terms. In the examples considered here, the concentrated Sobolev regularization condition established in Propositions~\ref{main-prop:sobolev-regularizing} and \ref{prop:cat-code-polynomial-lrb} remains valid after removing all edges. Its Heisenberg adjoint is therefore a contractive weak* adjoint semigroup on $\cT_\infty$ and preserves supports because the on-site maps factor over sites. The terms $H^{(M)}_e$ are arbitrary but bounded self-adjoint two-site Hamiltonians. In the application below, these are the truncated Bose--Hubbard hopping terms. On every finite volume, $B_M\coloneqq \sum_{e\in E}\cI_e^{(M)}$ is bounded. Hence \cite[Chap.~III, Thm.~1.3]{Engel.2000} gives generation of a $C_0$-semigroup on $\cT_1$ by $\cL_{\mathrm{on}}+B_M$. In our Hamiltonian-interaction case the perturbations $\cI_e^{(M)}=-i[H_e^{(M)},\cdot]$ generate isometric groups, and the Trotter product formula \cite[Cor.~III.5.8]{Engel.2000} preserves complete positivity, trace preservation, and contractivity of the resulting QMS and of its weak* Heisenberg adjoint. The space $\cT_f$ is a generator core for the on-site dynamics by the same generation result and remains a core under the bounded perturbation. This includes disconnected regions.

		Under these assumptions, the following Lieb--Robinson bounds proven in \cite{Rubiliani.2026} hold (see Section~\ref{sec:lrb} for details). The key point is that the Lieb--Robinson velocity is controlled by the bounded interaction part and is insensitive to arbitrary on-site terms.

		\begin{prop}[LR bound with unbounded on-site terms]\label{prop:bounded-interaction-lrb}
			Let $A$ be a bounded operator supported in $X\subset V$ and let $\tau_t^{(M)}$ be the Heisenberg adjoint semigroup generated, in the duality sense, by $\cL^{(M)}$. There is a constant $v_{\mathrm{LR}}>0$, depending only on the graph degree, such that every bounded normal superoperator $K_Y$ supported in $Y\subset V$ with $X\cap Y=\varnothing$ satisfies
			\begin{equation}\label{eq:bounded-interaction-lrb}
				\|K_Y\tau_t^{(M)}(A)\|_\infty
				\le \frac{1}{2}\|K_Y\|_{\infty\to\infty}\|A\|_\infty \min\{|X|,|Y|\}\,
				e^{-d(X,Y)}\bigl(e^{v_{\mathrm{LR}} J_M t}-1\bigr),
			\end{equation}
			where $J_M=\sup_{e\in E}\|\cI_e^{(M)}\|_{\infty\to\infty}$.
		\end{prop}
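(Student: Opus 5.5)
We follow the standard dissipative Lieb--Robinson strategy of \cite{Poulin.2010,Nachtergaele.2011}, in the form of \cite{Rubiliani.2026}. The on-site semigroup $e^{t\cL_{\mathrm{on}}}$ is handled through an interaction picture, and only the bounded edge terms $\cI_e^{(M)}$ are expanded. Write $\tau^{\mathrm{on}}_t$ for the Heisenberg adjoint of the on-site semigroup. It factorizes over sites, so it is a contraction on $\cB(\cH)$ that maps $\cB(\cH_Z)$ into itself for every $Z$ and commutes with any superoperator supported on sites where it acts trivially. For a region $Z\supseteq X$, let $\tau_t^{(M),Z}$ denote the evolution generated by $\cL_{\mathrm{on}}$ together with the edges in $E(Z)$, extended by the identity outside $Z$. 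We then use a Duhamel formula in the weak* sense, tested against $\rho\in\cT_f$; this is legitimate because $\cT_f$ is a core and the perturbation is bounded. It gives
\begin{equation*}
\tau_t^{(M)}(A)-\tau_t^{(M),Z}(A)=\int_0^t \tau_{s}^{(M)}\Bigl(\sum_{e\in\partial_E Z}\cI_e^{(M)\dagger}\,\tau^{(M),Z}_{t-s}(A)\Bigr)\,ds .
\end{equation*}
Here $\cI_e^{(M)\dagger}=i[H_e^{(M)},\cdot\,]$ has norm at most $J_M$.

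The core quantity is
\begin{equation*}
C_A(Y,t)=\sup_{\|K_Y\|\le1}\|K_Y\tau_t^{(M)}(A)\|_\infty .
\end{equation*}
First apply $K_Y$ to the Duhamel identity with $Z=V\setminus Y$. In that case $\tau^{(M),Z}_{t-s}(A)$ is supported away from $Y$, so $K_Y$ annihilates it (after subtracting the trace part, or by the support convention $K_Y\1=0$ on the relevant algebra). What remains is a sum over boundary edges $e$ of terms of the form $K_Y\tau_s^{(M)}(\cI_e^{(M)\dagger}B)$. A cleaner route is to differentiate $K_Y\tau_t^{(M)}(A)$ directly. Using contractivity of $\tau^{\mathrm{on}}$ together with its support preservation, we obtain
\begin{equation*}
C_A(Y,t)\le C_A(Y,0)+2J_M\sum_{e\,:\,e\cap Y\neq\varnothing}\int_0^t C_A(e,s)\,ds .
\end{equation*}
This is exactly the Nachtergaele--Sims recursion, with the on-site part contributing nothing because it is contractive and commutes with supports. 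We iterate the recursion. Since $C_A(Y,0)=0$ when $X\cap Y=\varnothing$ and $C_A(e,0)\le 2\|A\|$, the $n$-th term is a sum over edge paths of length $n$ from $Y$ to $X$. We bound the number of such paths by $\min\{|X|,|Y|\}\,(c\Delta_G)^n$. The resulting series is $\sum_{n\ge d(X,Y)}(2c\Delta_GJ_Mt)^n/n!$. Absorbing $e^{d}$ into the constant in the usual way, that is, bounding $\sum_{n\ge d}x^n/n!\le e^{-d}\sum_{n\ge1}(ex)^n/n!$, gives $e^{-d(X,Y)}(e^{v_{\mathrm{LR}}J_Mt}-1)$ with $v_{\mathrm{LR}}=2ec\Delta_G$. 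The factor $\tfrac12$ comes from $\|A\|$ appearing after subtracting the trace or identity part.

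The main obstacle is step two, namely justifying the differential inequality when $\cL_{\mathrm{on}}$ is unbounded and $\tau_t$ is only weak* continuous. We plan to avoid differentiating in operator norm. Instead, we write the interaction-picture Dyson series
\begin{equation*}
\tau_t^{(M)}=\tau^{\mathrm{on}}_t+\int_0^t\tau^{(M)}_s\circ\Bigl(\textstyle\sum_e\cI_e^{(M)\dagger}\Bigr)\circ\tau^{\mathrm{on}}_{t-s}\,ds .
\end{equation*}
This identity holds in the weak* sense on $\cT_f$ and converges in norm because $B_M$ is bounded on each finite volume. We then derive the integral inequality from it directly, using only the contractivity and support preservation of $\tau^{\mathrm{on}}$. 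Volume-independence of $v_{\mathrm{LR}}$ follows because only $\Delta_G$ enters the path counting.
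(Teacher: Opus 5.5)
\paragraph{Gap.} Your route differs from the paper's, which never works with the unbounded on-site generator directly. The paper replaces the on-site Hamiltonians and jumps by $P_z^{(L)}H_zP_z^{(L)}$ and $P_z^{(L)}L_zP_z^{(L)}$ and applies the bounded-generator adaptive Dyson expansion of \cite{Rubiliani.2026}. There one-site terms never enter self-avoiding edge paths, so the bound does not see their norms. The cutoff is then removed by strong convergence of the predual propagators, normality of $K_Y$, and weak* lower semicontinuity of the norm. A direct recursion is a legitimate alternative. However, the step you flag as the main obstacle is not handled by the identity you propose, and your recursion does not follow from it. Write $\tau_t$ for $\tau_t^{(M)}$. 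In $\tau_t=\tau^{\mathrm{on}}_t+\int_0^t\tau_s\circ\mathcal B\circ\tau^{\mathrm{on}}_{t-s}\,ds$, and likewise in your first Duhamel formula with $Z=V\setminus Y$, the full evolution sits next to $K_Y$. You therefore only get terms $K_Y\tau_s\bigl(\cI_e^{(M)\dagger}\tau^{\mathrm{on}}_{t-s}(A)\bigr)$, with a fixed probe and observables supported on $X\cup e$. Iterating grows a cluster around $X$. The number of admissible edge sequences at order $n$ is of order $\prod_{k<n}\Delta_G(|X|+k)$, which cancels the $1/n!$, so the series converges only for $\Delta_G J_M t<1$. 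No quantity of the form $C_A(e,s)$ ever appears. Putting $\tau^{\mathrm{on}}$ outermost instead keeps every edge of $E$ in the perturbation, including those at $X$, so the resulting inequality has neither spatial decay nor volume independence. Finally, differentiating $K_Y\tau_t(A)$ leaves the term $K_Y\cL_{\mathrm{on}}^\dagger|_Y\,\tau_t(A)$. The on-site generator at sites of $Y$ neither commutes with $K_Y$ nor is bounded, so it does not drop out by support considerations.

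\paragraph{Fix.} The missing idea is the other Duhamel ordering, with a reference dynamics that decouples $Y$. Let $\sigma_s$ be the Heisenberg evolution generated by all on-site terms and all edges except those in $\partial_E Y$, and put $\mathcal B_Y=\sum_{e\in\partial_E Y}\cI_e^{(M)\dagger}$. Both generators are bounded perturbations of the same on-site generator, so the variation-of-constants formula on $\cT_1$ dualizes to
\[
\tau_t(A)=\sigma_t(A)+\int_0^t\sigma_s\bigl(\mathcal B_Y\,\tau_{t-s}(A)\bigr)\,ds\qquad\text{(weak*)}.
\]
Since $\sigma$ factorizes over $Y$ and $V\setminus Y$ and is unital, $K_Y\sigma_t(A)=0$ when $X\cap Y=\varnothing$. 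Since $\sigma_s$ is contractive and $\cI_e^{(M)\dagger}/J_M$ is itself a normal probe supported in $e$,
\[
\|K_Y\sigma_s\cI_e^{(M)\dagger}\tau_{t-s}(A)\|_\infty\le\|K_Y\|_{\infty\to\infty}J_M\,C_A(e,t-s).
\]
This gives $C_A(Y,t)\le\epsilon_Y\|A\|_\infty+J_M\sum_{e\in\partial_E Y}\int_0^t C_A(e,s)\,ds$, where $\epsilon_Y\in\{0,1\}$ indicates whether $Y\cap X\neq\varnothing$. The on-site dynamics on $Y$ is absorbed into $\sigma$, not differentiated. Iterate over chains with $e_1\in\partial_E Y$, $e_{k+1}\in\partial_E e_k$, and $e_n\cap X\ne\varnothing$. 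Such chains need $n\ge d(X,Y)$ and number at most $\min\{|X|,|Y|\}\Delta_G(2\Delta_G)^{n-1}$. This yields exactly the stated bound with $v_{\mathrm{LR}}=2\mathrm e\Delta_G$. The prefactor $\frac12$ comes from this count, together with $C_A(e,0)\le\|A\|_\infty$, not from subtracting a trace part. Repaired in this way, your argument is a self-contained proof that bypasses the paper's on-site cutoff and approximation step.
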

		\begin{proof}
			The on-site cutoff hypothesis is verified in Appendix~\ref{sec:lrb}. Corollary~\ref{cor:lrb-bounded-interaction-graph} applies with $\|\Phi\|_1\le\Delta_GJ_M$ and gives Equation~\eqref{eq:bounded-interaction-lrb} with $v_{\mathrm{LR}}=2\mathrm e\Delta_G$. We retain this symbol after enlarging it by universal factors when using the concrete bounds $J_*M$ or $J_{\mathrm{cat}}M^2$ in place of $J_M$.
		\end{proof}

	\subsection{Moment-based truncation input}\label{subsec:hopping-truncation-input}
\noindent\emph{Tail estimate for the hopping truncation.}
		The following predual estimate controls the truncation error in the proofs below. Sobolev regularization supplies the required moments.
		\begin{lem}[Hopping truncation error]\label{lem:hopping-truncation-error}
			Let $p>2$ and $M\ge2$. For every positive trace-class operator $\sigma$ with $\tr\sigma\le1$ and finite local $p$-moments,
			\begin{equation}\label{eq:hopping-tail-bound}
				\|(H_e-H_e^{(M)})\sigma\|_1+\|\sigma(H_e-H_e^{(M)})\|_1
				\le C_p |J_{xy}|\,M^{-\frac{p-2}{2}}
				\sum_{z\in e}\tr[(N_z+\1)^p\sigma]^{1/2},
			\end{equation}
			where $C_p=4\left(\frac{(p-1)^{(p-1)/p}}{p}\right)^{1/2}\leq 4$.
		\end{lem}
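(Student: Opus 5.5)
The approach is to split the truncation error into pieces on which a high occupation number is forced, then use Cauchy--Schwarz in Hilbert--Schmidt norm to trade each trace-norm bound for a quadratic moment. Let $P=P_e^{(M)}$ and $Q=\1-P$. Then
\begin{equation*}
	H_e-H_e^{(M)}=QH_e+PH_eQ .
\end{equation*}
By the Hölder inequality $\|B\sigma\|_1\le\|B\sigma^{1/2}\|_2\,\|\sigma^{1/2}\|_2$ and $\tr\sigma\le1$, each of the four contributions ($QH_e\sigma$, $PH_eQ\sigma$ and their right-multiplied analogues $\sigma H_eQ$, $\sigma QH_eP$) is bounded by a quantity of the form $\tr[\sigma\,B^\dagger B]^{1/2}$. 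Here $B$ is either $QH_e$ or $H_eQ$, after dropping the contraction $P$. For the right multiplications I would use $\|\sigma B\|_1=\|B^\dagger\sigma\|_1$ and symmetry of $H_e$. This accounts for the prefactor $4$ in $C_p$. I will first argue on $\sigma$ with finitely supported number expansion, which is legitimate since $\cT_f$ is dense, and then pass to the limit using the finite $p$-moments and lower semicontinuity of the norms.

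The core step is an operator inequality in the number basis. Since $Q$ is diagonal, I write $H_e=-J_{xy}a_x^\dagger a_y-\overline{J_{xy}}a_y^\dagger a_x$ and use $(a+b)^\dagger(a+b)\le 2(a^\dagger a+b^\dagger b)$ to reduce to single hopping monomials. For instance,
\begin{equation*}
	a_y^\dagger a_x Q a_x^\dagger a_y = a_y^\dagger (N_x+\1)\,Q(N_x+\1,N_y)\,a_y,
\end{equation*}
which is diagonal with entries $(n_x+1)\,n_y\,\1[\max(n_x+1,n_y-1)>M]$. The analogous identity for $Q a_x^\dagger a_y$ and for $H_eQ$ produces the same kind of expression with shifted arguments. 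The indicator is bounded using $\1[n_z+1>M]\le((n_z+1)/M)^{p-2}$ together with $\1[\max>M]\le\sum_{z\in e}\1[n_z>M]$. The cross product $(n_x+1)n_y$ is split by Young's inequality with exponents adapted to $p$, so that every term becomes a constant times $M^{-(p-2)}(N_z+\1)^p$ for a single site $z\in e$. Optimizing the Young weights should produce the factor $\bigl((p-1)^{(p-1)/p}/p\bigr)$.

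Combining these, $\tr[\sigma B^\dagger B]\le c_p|J_{xy}|^2M^{-(p-2)}\sum_{z\in e}\tr[(N_z+\1)^p\sigma]$. Taking square roots and using $\sqrt{a+b}\le\sqrt a+\sqrt b$ gives the claimed bound $M^{-(p-2)/2}\sum_{z\in e}\tr[(N_z+\1)^p\sigma]^{1/2}$.

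The main obstacle is the bookkeeping in this step. The occupation shifts by one under $a^\dagger$ and $a$ must be handled so that the cutoff indicator still implies $N_z+\1\gtrsim M$ at one of the two sites. The mixed factor $(N_x+\1)N_y$ must then be reduced to single-site $p$-th moments with a $p$-dependent constant that stays bounded; the statement claims $C_p\le4$ uniformly in $p$. Should the exact constant fail, the structure of the bound, with exponent $M^{-(p-2)/2}$ and single-site moments of order $p$, is robust, and only the value of $C_p$ would need adjusting.
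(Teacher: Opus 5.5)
Your proposal gives a bound of the right form, with some $p$-uniform constant. It does not give the stated constant $C_p=4c_p^{1/2}$, where $c_p=(p-1)^{(p-1)/p}/p<1$, and the step where you claim it does is false. Write $Q=\1-P_e^{(M)}$. Your factor $4$ counts two pieces ($QH_e$ and $P_e^{(M)}H_eQ$) times two sides. Each $B$, however, still contains both hopping monomials. Your own inequality $(a+b)^\dagger(a+b)\le2(a^\dagger a+b^\dagger b)$ produces two diagonal terms, each at best $c_pM^{-(p-2)}\sum_{z\in e}(N_z+\1)^p$. So the constant in $\operatorname{tr}[\sigma B^\dagger B]$ is $4c_p$, not $c_p$.

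This loss is real. For $\sigma=\ketbra{M,M}{M,M}$ on $e$ one has $\operatorname{tr}[\sigma H_eQH_e]=2|J_{xy}|^2M(M+1)$. By contrast, $c_p|J_{xy}|^2M^{-(p-2)}\sum_{z\in e}\operatorname{tr}[(N_z+\1)^p\sigma]=2c_p|J_{xy}|^2M^2(1+1/M)^p$, which is smaller for large $M$.

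The scalar step also cannot give $c_p$ as planned. The union bound $\1[\max>M]\le\sum_{z\in e}\1[n_z>M]$ followed by Young leads to $r^{p-1}s+rs^{p-1}\le r^p+s^p$. That is an equality at $r=s$, so no choice of weights beats the constant $1$. Carried through correctly, your route proves the lemma with $C_p$ replaced by $8$, or by $8c_p^{1/2}=2C_p$ if you keep the max-indicator. That is enough for the later applications, which only need a $p$-uniform constant, but it is not the statement.

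The missing idea is the paper's reorganization, in four steps.
\begin{itemize}
\item Treat $h_{xy}=a_x^\dagger a_y$ and $h_{yx}$ separately by the triangle inequality. Then observe $P_e^{(M)}h_{xy}P_e^{(M)}=h_{xy}P_x^{(M-1)}P_y^{(M)}$. Hence $h_{xy}-P_e^{(M)}h_{xy}P_e^{(M)}=h_{xy}R_{xy}^{(M)}$ with the single diagonal projection $R_{xy}^{(M)}=\1-P_x^{(M-1)}P_y^{(M)}$.
\item For each monomial, your two pieces are $Qh_{xy}=h_{xy}R_1$ and $P_e^{(M)}h_{xy}Q=h_{xy}R_2$, with orthogonal diagonal projections satisfying $R_1+R_2=R_{xy}^{(M)}$. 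Splitting them and applying Cauchy--Schwarz twice therefore only loses a factor.
\item Since $h_{xy}(N_x+\1)^{-1/2}(N_y+\1)^{-1/2}$ is a contraction, a single Cauchy--Schwarz gives $\|h_{xy}R_{xy}^{(M)}\sigma\|_1\le\operatorname{tr}[R_{xy}^{(M)}(N_x+\1)(N_y+\1)R_{xy}^{(M)}\sigma]^{1/2}$.
\item On the range of $R_{xy}^{(M)}$ one has $\max\{N_x+\1,N_y+\1\}\ge M$. Keep this indicator and use the scaling identity $\sup_{\max\{u,v\}\ge1}uv/(u^p+v^p)=c_p$, which gives $rs\le c_pM^{-(p-2)}(r^p+s^p)$ there.
\end{itemize}
This yields $2c_p^{1/2}|J_{xy}|M^{-(p-2)/2}\sum_{z\in e}\operatorname{tr}[(N_z+\1)^p\sigma]^{1/2}$ for $\|(H_e-H_e^{(M)})\sigma\|_1$. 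The identity $\|\sigma(H_e-H_e^{(M)})\|_1=\|(H_e-H_e^{(M)})\sigma\|_1$ then doubles it to exactly $C_p$.
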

		\begin{proof}
			Set $h_{xy}=a_x^\dagger a_y$. Since
			\begin{equation*}
				P_e^{(M)}h_{xy}P_e^{(M)}
				=h_{xy}P_x^{(M-1)}P_y^{(M)},
			\end{equation*}
			we have
			\begin{equation*}
				h_{xy}-P_e^{(M)}h_{xy}P_e^{(M)}
				=h_{xy}R_{xy}^{(M)},\qquad
				R_{xy}^{(M)}=\1-P_x^{(M-1)}P_y^{(M)}.
			\end{equation*}
			The operator $h_{xy}(N_x+\1)^{-1/2}(N_y+\1)^{-1/2}$ is a contraction. Hence, Cauchy--Schwarz inequality and $\tr\sigma\le1$ show
			\begin{equation*}
			\begin{aligned}
				\|h_{xy}R_{xy}^{(M)}\sigma\|_1
				&\le
				\tr\!\left[R_{xy}^{(M)}(N_x+\1)(N_y+\1)R_{xy}^{(M)}\sigma\right]^{1/2}.
			\end{aligned}
			\end{equation*}
			On the range of $R_{xy}^{(M)}$, either $N_x+\1\ge M$ or $N_y+\1\ge M$. Therefore, with
			\begin{equation*}
				c_p=\sup_{\max\{r,s\}\ge1}\frac{rs}{r^p+s^p}
				=\frac{(p-1)^{(p-1)/p}}{p},
			\end{equation*}
			if $\max\{r,s\}\ge M$, rescaling $r=Mu$ and $s=Mv$ gives
			\begin{equation*}
				rs=M^2uv\le c_pM^2(u^p+v^p)
				=c_pM^{-(p-2)}(r^p+s^p).
			\end{equation*}
			Functional calculus for the commuting number operators therefore gives
			\begin{equation*}
				R_{xy}^{(M)}(N_x+\1)(N_y+\1)R_{xy}^{(M)}
				\le c_pM^{-(p-2)}
				R_{xy}^{(M)}\bigl((N_x+\1)^p+(N_y+\1)^p\bigr)R_{xy}^{(M)}.
			\end{equation*}
			It follows that
			\begin{equation*}
				\|h_{xy}R_{xy}^{(M)}\sigma\|_1
				\le
				c_p^{1/2}M^{-\frac{p-2}{2}}
				\sum_{z\in e}\tr[(N_z+\1)^p\sigma]^{1/2}.
			\end{equation*}
			The same estimate holds for $h_{yx}=a_y^\dagger a_x$, with $x$ and $y$ interchanged. Thus
			\begin{equation*}
				\|(H_e-H_e^{(M)})\sigma\|_1
				\le
				2c_p^{1/2}|J_{xy}|M^{-\frac{p-2}{2}}
				\sum_{z\in e}\tr[(N_z+\1)^p\sigma]^{1/2}.
			\end{equation*}
			Since $H_e-H_e^{(M)}$ is self-adjoint and $\sigma$ is positive,
			\begin{equation*}
				\|\sigma(H_e-H_e^{(M)})\|_1
				=\|(H_e-H_e^{(M)})\sigma\|_1.
			\end{equation*}
			Adding the two bounds gives Equation~\eqref{eq:hopping-tail-bound} with $C_p=4c_p^{1/2}$.
		\end{proof}

	\subsection{Proofs of the Bose--Hubbard bounds}
\noindent Figure~\ref{fig:adaptive-locality-proof} summarizes the two comparisons and the adaptive cutoff used in the proof.
\begin{proof}[Proof of Theorem~\ref{thm:adaptive-moment-locality}: the graph bound]
			We first prove Equation~\eqref{eq:adaptive-graph-locality}.

			If $T=0$, $J_*=0$, or $d(X,\partial_E R)=\infty$, the two evolutions agree on $A$: in the latter two cases the dynamics factor over sites or over the connected components meeting $X$, respectively. Otherwise fix $0<t\le T$, $p>2$, and $\lambda\ge2$. Constants $C,c>0$ may change from line to line, with the dependencies stated in the theorem; the cutoff comparison below has $C,v_{\mathrm{LR}}$ depending only on the graph degree.

			By decomposing $A$ into its real and imaginary parts, it is enough, up to a universal factor absorbed into $C$, to prove the estimate for self-adjoint $A$. Trace duality then reduces the operator norm to testing against positive trace-class operators $\rho$ with $\tr\rho\le1$.
			
			We first derive a Duhamel identity comparing the full and truncated evolutions. Since their generators are unbounded, we begin with $\rho\in\cT_f$ in a fixed finite volume. Write $T_s=e^{s\cL}$ and $\tau_s=T_s^*$. Let $M:[0,t]\to\{2,3,\ldots\}$ be a piecewise constant function with value $M_j$ on each interval of a partition $0=t_0<\cdots<t_n=t$. The bounded-perturbation construction before Proposition~\ref{prop:bounded-interaction-lrb} shows that the closure of each $\cL_{R,M_j}$ generates a QMS. Define its predual propagator by
			\begin{equation*}
				T^{R,M}_{t,0}=e^{(t-t_{n-1})\cL_{R,M_n}}\cdots e^{(t_1-t_0)\cL_{R,M_1}},
			\end{equation*}
			and define $T^{R,M}_{u,s}$ analogously. We have
			\begin{equation*}
				[N_{\mathrm{tot}},H_{\mathrm{on}}]=[N_{\mathrm{tot}},H_e]=[N_{\mathrm{tot}},P_e^{(M)}]=0,
				\qquad [N_{\mathrm{tot}},a_x^\ell]=-\ell a_x^\ell,
			\end{equation*}
			where $H_{\mathrm{on}}$ is the on-site part of Equation~\eqref{eq:bose-hubbard}. Since $\rho\in\cT_f$, we have $\rho=Q_K\rho Q_K$ for some $K<\infty$, where $Q_K=\sum_{\|\mathbf n\|_1\le K}\ketbra{\mathbf n}{\mathbf n}$. The relations above show that both evolutions preserve the finite-dimensional space $Q_K\cH$. Their restricted generators are bounded, so the intervalwise product rule for $T^{R,M}_{t,s}T_s(\rho)$ gives
			\begin{equation}\label{eq:adaptive-local-duhamel}
			\begin{aligned}
				&\tr\!\left[\left(\tau_t(A)-(T^{R,M}_{t,0})^*(A)\right)\rho\right]=\int_0^t\tr\!\left[
				(T^{R,M}_{t,s})^*(A)
				(\cL-\cL_{R,M(s)})T_s(\rho)\right]ds.
			\end{aligned}
			\end{equation}
			For $q\in(2,2(\ell-1))$, the bounds in Equations~\eqref{eq:scheduled-hopping-tail} and \eqref{eq:proof-boundary-term} are integrable. For each $s>0$, approximation by bounded cutoffs, together with the tail estimates and uniform higher-moment bounds, gives trace-norm convergence of the unbounded commutators. Dominated convergence and $Q_K\rho Q_K\to\rho$ in trace norm extend the identity to every positive $\rho$ with $\tr\rho\le1$. Note that terms supported  in $V\setminus R$ vanish in Equation~\eqref{eq:adaptive-local-duhamel}, because $(T^{R,M}_{t,s})^*(A)$ is supported in $R$ and the outside dynamics is trace preserving. The remaining difference of generators has two parts. 
			
			\textbf{First, for all edges in $E_R$,} decompose the full hopping into the truncation tail $H_e-H_e^{(M(s))}$ and the bounded part $H_e^{(M(s))}$. For $e\in\partial_E R$, the bounded part is absent from $\cL_{R,M(s)}$, and its contribution is estimated in the second step. Proposition~\ref{main-prop:sobolev-regularizing} gives, for every $q>2$,
			\begin{equation}\label{eq:adaptive-local-moments}
				\tr[(N_z+\1)^qT_s(\rho)]\le b_q(s)^q,
			\end{equation}
			where $z\in V$, $\Delta_{\ell,p}$ is as in Proposition~\ref{main-prop:sobolev-regularizing}, and, for $p>2$ and $s>0$,
			\begin{equation}\label{eq:aux-bp-fct}
				b_p(s)=(Z_\kappa^*)^{1/p}\left[
					\left(\frac{2p}{(\ell-1)\gamma_{\min}\ell s}\right)^{1/(\ell-1)}
					+\Delta_{\ell,p}\right]\qquad\text{and}\qquad I_p(t)=\int_0^t b_p(s)\,ds.
			\end{equation}
			The same bound holds also for the dynamics restricted to $S\subseteq V$. Then, Lemma~\ref{lem:hopping-truncation-error} implies
			\begin{equation}\label{eq:scheduled-hopping-tail}
				\sum_{e\in E_R}\left|\tr\!\left[
				(T^{R,M}_{t,s})^*(A)
				[H_e-H_e^{(M(s))},T_s(\rho)]\right]\right|\le C J_*|E_R|\,\|A\|_\infty
				M(s)^{-(q-2)/2}b_q(s)^{q/2}.
			\end{equation}
			Here $C$ is a constant independent of $q$ (see Lemma~\ref{lem:hopping-truncation-error}). For an integer $\widetilde M\ge2$, choose
			\begin{equation*}
				M(s)=\min\{\widetilde M,\lceil\lambda b_p(s)\rceil\},\qquad s>0,
				\qquad M(0)=\widetilde M.
			\end{equation*}
			In the case $M(s)<\widetilde M$, and $p=q$ in Equation~\eqref{eq:scheduled-hopping-tail}, one has
			\begin{equation*}
				M(s)^{-(p-2)/2}b_p(s)^{p/2}
				\le\lambda^{-(p-2)/2}b_p(s).
			\end{equation*}
			Otherwise, for $M(s)=\widetilde M$, fix an auxiliary $p_0\in(2,2(\ell-1))$ and use the same estimate with $q=p_0$. Integrating gives the total hopping-tail contribution
			\begin{equation}\label{eq:adaptive-capped-tail}
				C J_*|E_R|\,\|A\|_\infty\left[
				\lambda^{-(p-2)/2}I_p(t)
				+\widetilde M^{-(p_0-2)/2}\int_0^t b_{p_0}(s)^{p_0/2}\,ds\right].
			\end{equation}
			Both integrals are finite. Indeed, $b_p(s)=\mathcal O_p(s^{-1/(\ell-1)}+1)$, $\ell>2$, and $p_0<2(\ell-1)$. Note that the second term tends to zero as $\widetilde M\to\infty$.

			\textbf{Second, for all edges in $\partial_E R$,} the bounded terms $H_e^{(M(s))}$ introduced above are absent from $\cL_{R,M(s)}$ and are estimated by Proposition~\ref{prop:bounded-interaction-lrb}. Its proof applies intervalwise to the piecewise constant schedule after the time change $r(s)=\int_0^s M(u)\,du$. In the new time variable the edge superoperators have norm at most $4J_*$, which gives the bound with integrated strength $\int_s^t M(u)\,du$.

			For $d(X,e)>0$, apply Proposition~\ref{prop:bounded-interaction-lrb} with $K_{e,s}=i[H_e^{(M(s))},\cdot]$ to obtain
			\begin{align*}
				\|K_{e,s}(T^{R,M}_{t,s})^*(A)\|_\infty
				&\le\frac12\|K_{e,s}\|_{\infty\to\infty}\|A\|_\infty
				\min\{|X|,|e|\}e^{-d(X,e)}
				\left(e^{v_{\mathrm{LR}}J_*\int_s^t M(u)\,du}-1\right)\\
				&\le C J_*M(s)\|A\|_\infty e^{-d(X,e)}
				\left(e^{v_{\mathrm{LR}}J_*\int_s^t M(u)\,du}-1\right),
			\end{align*}
			where $\|K_{e,s}\|_{\infty\to\infty}\le4J_*M(s)$ and $\min\{|X|,|e|\}\le2$.
			For edges intersecting $X$, use the elementary commutator bound $4J_*M(s)\|A\|_\infty$. The time integrals of this bound and the Lieb--Robinson estimate above are controlled, for $J_*>0$, by
			\begin{equation*}
				J_*\int_0^t M(s)e^{v_{\mathrm{LR}}J_*\int_s^t M(u)\,du}\,ds
				=\frac{e^{v_{\mathrm{LR}}J_*\int_0^t M(u)\,du}-1}
					{v_{\mathrm{LR}}}.
			\end{equation*}
			Summing the boundary edges therefore gives
			\begin{equation}\label{eq:proof-boundary-term}
				\int_0^t\!\!\sum_{e\in\partial_E R}\!\left|\tr\!\left[
				(T^{R,M}_{t,s})^*(A)
				[H_e^{(M(s))}\!,T_s(\rho)]\right]\right|ds\le C|\partial_E R|\,\|A\|_\infty e^{-d(X,\partial_E R)}
				\!\left(\!e^{v_{\mathrm{LR}}J_*\int_0^t M(u)\,du}\!-\!1\!\right)\!.
			\end{equation}
			Since $b_p(s)\ge1$ and $M(s)\le\lceil\lambda b_p(s)\rceil\le(\lambda+1)b_p(s)$, we have $\int_0^t M(s)\,ds\le(\lambda+1)I_p(t)$.

			Combining Equations~\eqref{eq:adaptive-local-duhamel}, \eqref{eq:adaptive-capped-tail}, and \eqref{eq:proof-boundary-term} bounds the difference between $\tau_t$ and $(T^{R,M}_{t,0})^*$. Apply the same Duhamel argument inside $R$ to compare $\tau_t^R$ with this propagator. There is no boundary contribution, and the tail is bounded by Equation~\eqref{eq:adaptive-capped-tail} with $E_R$ replaced by $E(R)=\{e\in E:e\subset R\}$. Since $|E(R)|\le|E_R|$, the triangle inequality and $\widetilde M\to\infty$ give, after adjusting $C$,
			\begin{equation}\label{eq:adaptive-physical-buffer-bound}
			\begin{aligned}
				\|\tau_t(A)-\tau_t^R(A)\|_\infty
				&\le C J_*|E_R|\,\|A\|_\infty
					\lambda^{-(p-2)/2}I_p(t)\\
				&\quad+C|\partial_E R|\,\|A\|_\infty e^{-d(X,\partial_E R)}
					\left(e^{v_{\mathrm{LR}}J_*(\lambda+1)I_p(t)}-1\right).
			\end{aligned}
			\end{equation}
			Only the error bound is passed to the limit. Write $d=d(X,\partial_E R)$. For $d\ge8v_{\mathrm{LR}}J_*I_p(t)$, take $\lambda=d/(4v_{\mathrm{LR}}J_*I_p(t))\ge2$. Then $v_{\mathrm{LR}}J_*(\lambda+1)I_p(t)\le3d/8$, which gives
			\begin{equation}\label{eq:adaptive-distance-bound}
			\begin{aligned}
				\|\tau_t(A)-\tau_t^R(A)\|_\infty
				&\le C J_*|E_R|\,\|A\|_\infty I_p(t)
					\left(\frac{4v_{\mathrm{LR}}J_*I_p(t)}{d}\right)^{(p-2)/2}\\
				&\quad+C|\partial_E R|\,\|A\|_\infty e^{-d/2}.
			\end{aligned}
			\end{equation}

			The explicit moment constants give $I_p(t)\le I_p(T)\le C(1+T)p2^p$ for $p>2$. For $q>0$, choose $p=2(q+1)$ in Equation~\eqref{eq:adaptive-distance-bound}. For $d\ge\max\{1,8v_{\mathrm{LR}}J_*I_p(T)\}$, use $|\partial_E R|\le|E_R|$ and
			\begin{equation*}
				\log[(Cp2^p)^{q+1}]\le C(1+q^2),\qquad
				\sup_{d\ge0}(1+d)^qe^{-d/2}\le Ce^{Cq^2}.
			\end{equation*}
			Below the threshold, $(1+d)/(1+T)\le Cp2^p$, so contractivity gives the same bound. Including $t=0$, we obtain, uniformly for every $q>0$,
			\begin{equation*}
				\sup_{0\le t\le T}\|\tau_t(A)-\tau_t^R(A)\|_\infty
				\le Ce^{Cq^2}|E_R|(1+T)^{q+1}(1+d)^{-q}\|A\|_\infty.
			\end{equation*}
			For $d>1+T$, choose $q=\log(d/(1+T))/(2C)$ to obtain
			\begin{equation*}
				C|E_R|(1+T)\|A\|_\infty
				\exp\!\left[-\frac{\log^2(d/(1+T))}{4C}\right].
			\end{equation*}
			Since $\log(2+u)\le\log3+\log u$ for $u\ge1$, this proves Equation~\eqref{eq:adaptive-graph-locality} after adjusting $C,c$. For $d\le1+T$, the same estimate follows from contractivity and $|E_R|\ge1$.
		\end{proof}
		\begin{figure}[htbp]
			\centering
            \vspace*{-2ex}
			\includegraphics[width=0.9\linewidth]{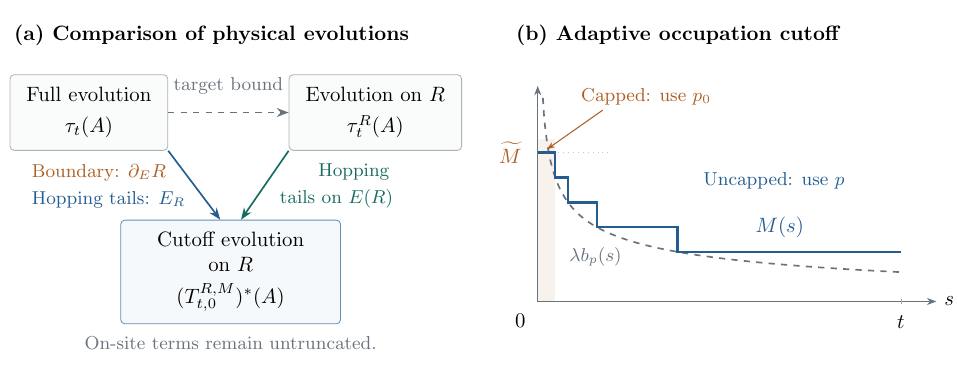}
            \vspace*{-2ex}
			\caption{The adaptive locality proof. (a) Comparing both physical evolutions with the same cutoff evolution on $R$ gives hopping tails controlled by moments and boundary terms controlled by a Lieb--Robinson bound. (b) Schematic cutoff $M(s)=\min\{\widetilde M,\lceil\lambda b_p(s)\rceil\}$. The capped interval uses $2<p_0<2(\ell-1)$, and the remainder uses $p>2$. For $\ell>2$, the integrated cutoff is bounded uniformly in $\widetilde M$.}
			\label{fig:adaptive-locality-proof}
		\end{figure}
		\begin{proof}[Proof of Theorem~\ref{thm:adaptive-moment-locality}: the lattice bound]
		We now prove Equation~\eqref{eq:adaptive-log-gaussian-locality}. Put $R=X[r]$. If $T=0$, $J_*=0$, or the ball has no outgoing edges, the two evolutions agree on $A$. Otherwise $d(X,\partial_E R)=r$ and
		\begin{equation*}
			|E_R|,|\partial_E R|\le2D|R|\le2D|X|(2r+1)^D.
		\end{equation*}
		The explicit moment constants give $I_p(t)\le I_p(T)\le C(1+T)p2^p$ for $p\ge3$, with $C$ depending only on the fixed model bounds. For $q>0$, choose $p=2(D+q+1)$ in Equation~\eqref{eq:adaptive-distance-bound}. Above its radius threshold, the truncation term is bounded by
		\begin{equation*}
			C|X|(Cp2^p)^{p/2}(1+T)^{D+q+1}r^{-q}\|A\|_\infty.
		\end{equation*}
		Since $\log[(Cp2^p)^{p/2}]\le C(1+q^2)$ and $\sup_{r\ge0}(1+r)^{D+q}e^{-r/2}\le Ce^{Cq^2}$, both terms are bounded by $Ce^{Cq^2}|X|(1+T)^{D+q+1}(1+r)^{-q}\|A\|_\infty$. The radius threshold is at most $C(1+T)p2^p$, so contractivity covers smaller radii with the same bound. Replace $q$ by $(D+1)q$ and apply
		\begin{equation*}
			\min\{2,u\}\le2^{D/(D+1)}u^{1/(D+1)},\qquad u\ge0,
		\end{equation*}
		to the normalized error. Contractivity and $|X|^{1/(D+1)}\le|X|$ then give, after enlarging $C$,
		\begin{equation*}
			\sup_{0\le t\le T}\|\tau_t(A)-\tau_t^R(A)\|_\infty
			\le Ce^{Cq^2}|X|(1+T)^{q+1}(1+r)^{-q}\|A\|_\infty.
		\end{equation*}

		For $r>1+T$, choose $q=\log(r/(1+T))/(2C)$ in this estimate. Using $(1+r)^{-q}\le r^{-q}$ gives
		\begin{equation*}
			C|X|(1+T)\|A\|_\infty
			\exp\!\left[-\frac{\log^2(r/(1+T))}{4C}\right].
		\end{equation*}
		Since $\log(2+u)\le\log3+\log u$ for $u\ge1$, this proves Equation~\eqref{eq:adaptive-log-gaussian-locality} after adjusting $C,c$. For $r\le1+T$, the same estimate follows from contractivity.
	\end{proof}

	\subsection{Proof of the four-photon bound}
\begin{proof}[Proof of Proposition~\ref{prop:cat-code-polynomial-lrb}]
		Following the calculation in \cite[Prop.~B.6, Lem.~B.5]{Rico.2025} applied to $\cL_R^\gamma$ gives, for every $k\ge1$,
		\begin{equation}\label{eq:cat-core-drift}
			\tr[M_{v,k}\gamma^{-1}\cL_R^\gamma(\rho)]
				\le-\mu\tr[M_{v,k+3}\rho]+c_k\tr[\rho],
				\qquad \rho\in\cT_f,\quad\rho\ge0,
		\end{equation}
		with $\mu=2\gamma_-$ and $c_k<\infty$ independent of $R$, $v$, and $\gamma$. This concentrated Sobolev regularization condition implies by \cite[Thm.~3.12]{Rico.2025} the generation theory as well as the Sobolev regularization property for the generated $C_0$-semigroup $e^{s\cL_R^\gamma}$:
		\begin{equation}\label{eq:cat-local-moments}
			\tr[(N_x+\1)^k e^{s\cL_R^\gamma}(\rho)]^{1/k}
				\le C_k^{\mathrm{mom}}\bigl((\gamma s)^{-1/3}+1\bigr)
				\le b_k(s)\coloneqq C_k^{\mathrm{mom}}(s^{-1/3}+1)
		\end{equation}
		for every state $\rho$, $s>0$, $x\in R$, and $C_k^{\mathrm{mom}}\ge1$ only depending on $k$ and the fixed model bounds. In the cited commutator estimate, occupation shifts of size at most four give coefficients bounded by $Ck5^{2k}$. The coefficient before scalar absorption is therefore at most $Ce^{Ck}$, so $(c_k/\mu)^{1/(k+3)}\le Ce^{Ck}$ and $C_k^{\mathrm{mom}}\le Ce^{Ck}$, with $C$ independent of $k$.

		We next verify the polynomial truncation estimate. For $M\ge8$, put
		\begin{equation*}
			S_e=\sum_{z\in e}(N_z+\1),\qquad
			Q_e^{(M)}=\1-P_e^{(M)},\qquad
			\widetilde Q_e^{(M)}=\1-P_e^{(M-4)}.
		\end{equation*}
		For a monomial $B$ of degree at most four, let $d_z$ count its creation and annihilation factors at $z$. The Fock-basis amplitudes give, for $\psi=\sum_{\mathbf n}\psi_{\mathbf n}\ket{\mathbf n}\in\cH_f$,
		\begin{equation*}
			\|B\psi\|^2\le C\sum_{\mathbf n}\prod_{z\in e}(n_z+1)^{d_z}|\psi_{\mathbf n}|^2
			\le C\|S_e^2\psi\|^2,\qquad \sum_{z\in e}d_z\le4.
		\end{equation*}
		Thus $BS_e^{-2}$ extends to a bounded operator. For comparison, see \cite[Lem.~E.2]{Gondolf.2024} for a related single-mode polynomial weight estimate. Moreover,
		\begin{equation*}
			B-P_e^{(M)}BP_e^{(M)}
			=BQ_e^{(M)}+Q_e^{(M)}BP_e^{(M)},
			\qquad
			Q_e^{(M)}BP_e^{(M)}
			=Q_e^{(M)}BP_e^{(M)}\widetilde Q_e^{(M)}.
		\end{equation*}
		The second identity follows because $B$ changes each occupation by at most four. Since the projections commute with $S_e$ and $Q_e^{(M)}\le\widetilde Q_e^{(M)}$, for every $\psi\in\cH_f$,
		\begin{equation*}
			\|(B-P_e^{(M)}BP_e^{(M)})\psi\|
			\le\|BQ_e^{(M)}\psi\|
				+\|Q_e^{(M)}BP_e^{(M)}\widetilde Q_e^{(M)}\psi\|
			\le C\|S_e^2\widetilde Q_e^{(M)}\psi\|.
		\end{equation*}
		Squaring gives the quadratic-form inequality
		\begin{equation*}
			(B-P_e^{(M)}BP_e^{(M)})^\dagger
			(B-P_e^{(M)}BP_e^{(M)})\le C\widetilde Q_e^{(M)}S_e^4\widetilde Q_e^{(M)}.
		\end{equation*}
		For $M\ge8$, the range of $\widetilde Q_e^{(M)}$ satisfies $S_e\ge M/2$. Hence, for $p>4$,
		\begin{equation*}
			\widetilde Q_e^{(M)}S_e^4\widetilde Q_e^{(M)}
			\le C_pM^{-(p-4)}\sum_{z\in e}(N_z+\1)^p.
		\end{equation*}
		Then, Cauchy--Schwarz shows, for every $\sigma\geq0$ with $\tr\sigma\le1$ and finite local $p$-moments,
		\begin{equation}\label{eq:cat-polynomial-tail}
			\|(H_e-H_e^{(M)})\sigma\|_1+\|\sigma(H_e-H_e^{(M)})\|_1\le C_pJ_{\mathrm{cat}}M^{-(p-4)/2}\sum_{z\in e}\tr[(N_z+\1)^p\sigma]^{1/2}.
		\end{equation}
		Enlarging $C_p$ covers $2\le M<8$. The scalar weight estimates above give $C_p\le Ce^{Cp}$. The same weighted bound gives $\|H_e^{(M)}\|_\infty\le CJ_{\mathrm{cat}}M^2$ and therefore $\|[-iH_e^{(M)},\cdot]\|_{\infty\to\infty}\le CJ_{\mathrm{cat}}M^2$.

		We now follow the two-part comparison in the proof of Theorem~\ref{thm:adaptive-moment-locality}. Let $T_s^\gamma=e^{s\cL^\gamma}$, and let $\cL_{R,M}^\gamma$ be obtained from $\cL_R^\gamma$ by replacing every $H_e$ with $H_e^{(M)}$. For a piecewise constant cutoff schedule $M$, denote the corresponding predual propagator by $T^{\gamma,R,M}_{t,s}$. Equation~\eqref{eq:adaptive-local-duhamel} then holds with $T_s$, $T^{R,M}_{t,s}$, $\cL$, and $\cL_{R,M(s)}$ replaced by $T_s^\gamma$, $T^{\gamma,R,M}_{t,s}$, $\cL^\gamma$, and $\cL_{R,M(s)}^\gamma$. For finite-Fock inputs, all-order Sobolev stability and \cite[Lem.~D.2]{Rico.2025} provide a sufficiently high invariant Sobolev space contained in the relevant generator domains, which justifies the intervalwise product rule without particle-number invariance.

		\textbf{First, for all edges in $E_R$,} decompose $H_e$ into the truncation tail $H_e-H_e^{(M(s))}$ and the bounded part $H_e^{(M(s))}$. For boundary edges, the bounded part is absent from the cutoff buffer generator and is estimated below. For an integer $\widetilde M\ge2$, choose
		\begin{equation*}
			M(s)=\min\{\widetilde M,\lceil\lambda b_p(s)\rceil\},\quad s>0,
			\qquad M(0)=\widetilde M.
		\end{equation*}
		Use Equation~\eqref{eq:cat-polynomial-tail} at order $p$ where $M(s)<\widetilde M$ and at order $5$ where $M(s)=\widetilde M$. By Equation~\eqref{eq:cat-local-moments}, the integrated tail is at most
		\begin{equation*}
			C_pJ_{\mathrm{cat}}|E_R|\,\|A\|_\infty
			\left[\lambda^{-(p-4)/2}\int_0^t b_p(s)^2\,ds
				+\widetilde M^{-1/2}\int_0^t b_5(s)^{5/2}\,ds\right].
		\end{equation*}
		Both integrals are finite because $b_p(s)^2=\mathcal O_p(s^{-2/3}+1)$ and $b_5(s)^{5/2}=\mathcal O(s^{-5/6}+1)$. The second term tends to zero as $\widetilde M\to\infty$. For each $s>0$, approximation by bounded cutoffs, together with the tail estimates and uniform higher-moment bounds, gives trace-norm convergence of the unbounded commutators. These bounds also extend the Duhamel identity to arbitrary states by positive finite-Fock approximation.

		\textbf{Second, for all edges in $\partial_E R$,} the bounded terms $H_e^{(M(s))}$ introduced above are absent from the cutoff buffer generator. Apply Proposition~\ref{prop:bounded-interaction-lrb} first with $K_{e,s}=i[H_e^{(M(s))},\cdot]$. Its proof applies intervalwise after the time change $r(s)=\int_0^sM(u)^2\,du$. In the new time variable the edge superoperators have norm at most $CJ_{\mathrm{cat}}$, because
		\begin{equation*}
			\|K_{e,s}\|_{\infty\to\infty}\le CJ_{\mathrm{cat}}M(s)^2.
		\end{equation*}
		Summing and integrating the resulting Lieb--Robinson estimate as in Equation~\eqref{eq:proof-boundary-term} gives
		\begin{equation*}
			C|\partial_E R|\,\|A\|_\infty e^{-d(X,\partial_E R)}
			\left(e^{C_pJ_{\mathrm{cat}}\int_0^tM(s)^2\,ds}-1\right).
		\end{equation*}
		The constants are independent of the on-site rates. Since
		\begin{equation*}
			\int_0^t M(s)^2\,ds
				\le(\lambda+1)^2\int_0^t b_p(s)^2\,ds
				\le6(C_p^{\mathrm{mom}})^2(\lambda+1)^2(t^{1/3}+t),
		\end{equation*}
		this is the boundary term in Equation~\eqref{eq:cat-code-lrb}, after adjusting $C_p$. The comparison above bounds the difference between $\tau_t^\gamma$ and $(T^{\gamma,R,M}_{t,0})^*$. Apply the same Duhamel argument inside $R$ to compare $\tau_t^{\gamma,R}$ with this propagator. This comparison has no boundary term, and its truncation tail is bounded by the first estimate with $E_R$ replaced by $E(R)$. The triangle inequality, $|E(R)|\le|E_R|$, and $\widetilde M\to\infty$ prove Equation~\eqref{eq:cat-code-lrb} by trace duality. Only the truncation constant and $(C_p^{\mathrm{mom}})^2$ enter its $p$ dependence, so $C_p\le Ce^{Cp}$.

		For the lattice assertion, if $J_{\mathrm{cat}}=0$ or the ball has no outgoing edges, the two evolutions agree on $A$. Otherwise, $d(X,\partial_E R)=r$ and
		\begin{equation*}
			|E_R|,|\partial_E R|\le2D|R|\le2D|X|(2r+1)^D.
		\end{equation*}
		For sufficiently large $r$, choose
		\begin{equation*}
			\lambda=\left(\frac{r}{8C_pJ_{\mathrm{cat}}(T^{1/3}+T)}\right)^{1/2}\ge2.
		\end{equation*}
		Since $(\lambda+1)^2\le9\lambda^2/4$, the boundary term is bounded by a constant times $(1+r)^De^{-r/2}\|A\|_\infty$. The remaining term is $\mathcal O_{p,T,X}(r^{D-(p-4)/4})\|A\|_\infty$. Taking $p>4(D+q)+4$ proves Equation~\eqref{eq:cat-locality-lattice} for these radii. For the remaining radii, contractivity gives $\|\tau_t^\gamma(A)-\tau_t^{\gamma,R}(A)\|_\infty\le2\|A\|_\infty$, which is absorbed by enlarging $C_{q,T,X}$.
	\end{proof}

	\begin{rmk}[Higher multiphoton loss]\label{rmk:standard-cat-code-scope}
		Given all-order moment bounds with root singularity $s^{-1/(\ell-1)}$, the same argument for spatial polynomials of degree $d$ requires $d<2(\ell-1)$, since the integrated interaction strength involves $s^{-d/(2(\ell-1))}$. For $\ell=2$, quadratic spatial interactions are critical. The present state-independent argument does not cover this case.
	\end{rmk}

\addtocontents{toc}{\protect\setcounter{tocdepth}{3}}
\section{Consequences}\label{sec:appl}
	We write $\|\Phi\|_\diamond$ for the completely bounded trace norm and measure channel errors by $\|\Phi-\Psi\|_\diamond$.

\subsection{Local channels and the thermodynamic limit}\label{subsec:local-projected-simulation}
	Denote the dissipative Bose--Hubbard evolution on $V$ by $T_t=e^{t\cL_V}$ and its restriction to $R$, extended by the identity outside $R$, by $T_t^R$.

	\begin{cor}[Local channel approximation]\label{cor:local-channel-approximation}
		Under the hypotheses of Theorem~\ref{thm:adaptive-moment-locality} on nearest-neighbor subgraphs of $\mathbb Z^D$, fix a finite $X\subset V$ and $T>0$, and put $R=X[r]$ for an integer $r\ge0$. There are constants $C,c>0$ such that
		\begin{equation}\label{eq:local-channel-approximation}
			\sup_{0\le t\le T}
			\left\|\tr_{V\setminus X}\circ T_t
				-\tr_{V\setminus X}\circ T_t^R\right\|_\diamond
			\le C|X|(1+T)
			\exp\!\left[-c\log^2\!\left(2+\frac{r}{1+T}\right)\right].
		\end{equation}
		The constants depend only on $D$, $\ell$, $J_*$, $\Delta_G$, the loss-rate bounds, and the fixed Sobolev weights. They are independent of $T$, $X$, $r$, the volume, input state, and reference dimension, and require no bound on the on-site Bose--Hubbard coefficients. For two such generators agreeing on $R$, the difference of their output channels on $X$ satisfies twice this bound, with common uniform coefficient bounds.
	\end{cor}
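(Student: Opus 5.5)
The plan is to deduce the corollary from the Heisenberg-picture bound \eqref{eq:adaptive-log-gaussian-locality} of Theorem~\ref{thm:adaptive-moment-locality} by duality, with the extension to a reference system handled by the uniformity of that theorem in $X$ and in the input state.

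\emph{Step 1 (trivial reference).} Let $\Phi_t=\tr_{V\setminus X}\circ T_t$ and $\Phi_t^R=\tr_{V\setminus X}\circ T_t^R$. For a trace-class $\rho$ on $\cH_V$ and $A\in\cB(\cH_X)$ we have
\begin{equation*}
\tr[A(\Phi_t-\Phi_t^R)(\rho)]=\tr[(\tau_t(A)-\tau_t^R(A))\rho].
\end{equation*}
Here $\tau_t^R$ is the Heisenberg adjoint of $T_t^R$, which is the identity outside $R$; since $A$ is supported in $X\subseteq R$, $\tau_t^R(A)$ is supported in $R$. Taking the supremum over $\|A\|_\infty\le1$ and using trace-norm duality gives $\|(\Phi_t-\Phi_t^R)(\rho)\|_1\le\|\rho\|_1\sup_{\|A\|\le1}\|\tau_t(A)-\tau_t^R(A)\|_\infty$. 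Equation~\eqref{eq:adaptive-log-gaussian-locality} then bounds the $1\to1$ norm.

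\emph{Step 2 (reference system).} For the diamond norm, tensor with $\id_{\mathbb C^m}$ and view $\mathbb C^m$ as an auxiliary site that does not interact and carries no dissipation. An operator $A\in\cB(\cH_X\otimes\mathbb C^m)$ can be expanded as $A=\sum_{ij}A_{ij}\otimes|i\rangle\langle j|$, but summing this expansion would introduce an $m^2$ factor, so I avoid it. The cleaner route is to note that the proof of Theorem~\ref{thm:adaptive-moment-locality} is insensitive to an idle ancilla. The Duhamel identity \eqref{eq:adaptive-local-duhamel} holds with $\tau_t\otimes\id$. Lemma~\ref{lem:hopping-truncation-error} uses only moments of $\rho$ on bosonic sites, and the Sobolev regularization of Proposition~\ref{main-prop:sobolev-regularizing} applies to $(T_s\otimes\id)(\rho)$ for positive $\rho$, because local moments of $(T_s\otimes\id)(\rho)$ are moments of $T_s(\tr_{\mathrm{ref}}\rho)$. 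The bounded-interaction Lieb--Robinson bound, Proposition~\ref{prop:bounded-interaction-lrb}, holds for $A$ supported on $X\cup\{\mathrm{ref}\}$ with the same constants, since the ancilla has no edges; the bounded-perturbation and Trotter construction preceding that proposition extends verbatim. This yields \eqref{eq:adaptive-log-gaussian-locality} for $(\tau_t\otimes\id)(A)-(\tau_t^R\otimes\id)(A)$ with unchanged $C,c$, and hence the stated bound via Step 1, independent of $m$.

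The part needing care is checking that nothing in the proof depends on the reference dimension. The self-adjoint reduction, the moment bounds, and the norm bound $\min\{|X|,|Y|\}\le2$ for edge superoperators all survive unchanged. The remaining point is the reduction to positive $\rho$, which is valid because trace-norm duality still lets us test against positive inputs of trace at most one. Finally, for two generators $\cL,\cL'$ that agree on $R$, their restrictions to $R$ coincide, so $T_t^R=T_t'^R$. The claimed factor of two then follows by the triangle inequality through this common restricted channel, since each of the two distances to it satisfies \eqref{eq:local-channel-approximation}, with constants taken from the common coefficient bounds.
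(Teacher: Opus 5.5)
Your proposal is correct and follows essentially the same route as the paper: trace-norm duality reduces the diamond norm to the Heisenberg bound \eqref{eq:adaptive-log-gaussian-locality} for observables on $\cH_X\otimes\cH_{\mathrm{ref}}$, an idle finite-dimensional ancilla changes no moments, supports or interaction norms so the bound holds uniformly in the reference, and the factor two comes from the triangle inequality through the common restricted evolution. The only difference is cosmetic: the paper attaches $\cH_{\mathrm{ref}}$ to a site of $X$, which leaves the graph and the hypotheses of Theorem~\ref{thm:adaptive-moment-locality} unchanged, whereas you add a separate edge-less, loss-free vertex and therefore need the ingredient-by-ingredient re-check of the proof that you supply.
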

	\begin{proof}
		Complete trace-norm duality gives
		\begin{equation*}
			\left\|\tr_{V\setminus X}\circ T_t
				-\tr_{V\setminus X}\circ T_t^R\right\|_\diamond=\sup_{\cH_{\mathrm{ref}}}\sup_{\rho\in\cS(\cH_V\otimes\cH_{\mathrm{ref}})}\sup_{\substack{O\in\cB(\cH_X\otimes\cH_{\mathrm{ref}})\\\|O\|_\infty\le1}}
				\left|\tr\!\left[((\tau_t-\tau_t^R)\otimes\id_{\cH_{\mathrm{ref}}})(O)\rho\right]\right|.
		\end{equation*}
		The first supremum runs over finite-dimensional reference spaces $\cH_{\mathrm{ref}}$. Since the generator acts trivially on $\cH_{\mathrm{ref}}$, adjoining it to a site in $X$ leaves all moments, supports, and interaction norms unchanged. Thus Theorem~\ref{thm:adaptive-moment-locality} applies uniformly in $\cH_{\mathrm{ref}}$, and the duality identity gives Equation~\eqref{eq:local-channel-approximation}. If two generators agree on $R$, their restricted evolutions coincide, and the triangle inequality through this common evolution gives twice the bound.
	\end{proof}

	\begin{cor}[Local stability under Hamiltonian perturbations]\label{cor:local-hamiltonian-stability}
		Let $T_t$ and $\widetilde T_t$ satisfy the hypotheses of Corollary~\ref{cor:local-channel-approximation} on the same graph, with identical loss terms and a common hopping bound $J_*$. Suppose that $\ell>3$ and
		\begin{equation*}
			|J_{xy}-\widetilde J_{xy}|\le\varepsilon,\qquad
			|\mu_x-\widetilde\mu_x|+|U_x-\widetilde U_x|\le\varepsilon
		\end{equation*}
		for every edge $(x,y)\in E$ and site $x\in V$. Then, for every finite $X\subset V$ and $T>0$,
		\begin{equation}\label{eq:local-hamiltonian-stability}
			\sup_{0\le t\le T}\left\|\tr_{V\setminus X}\circ(T_t-\widetilde T_t)\right\|_\diamond
			\le C\varepsilon |X|^2(1+T)^{D+1}
			\left(T^{1-2/(\ell-1)}+T\right).
		\end{equation}
		The constant has the dependencies stated in Corollary~\ref{cor:local-channel-approximation}. If $U_x=\widetilde U_x$ for every $x$, the assertion holds for $\ell>2$ with $T^{1-2/(\ell-1)}$ replaced by $T^{1-1/(\ell-1)}$.
	\end{cor}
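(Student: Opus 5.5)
The plan is to compare the two full evolutions directly by a Duhamel formula and to use Theorem~\ref{thm:adaptive-moment-locality} to localize each term. A buffer-truncation approach does not give the stated form: truncating both evolutions to a ball $X[r]$ introduces an error that does not scale with $\varepsilon$.

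\emph{Step 1: Duhamel formula and per-site decomposition.} As in the proof of Corollary~\ref{cor:local-channel-approximation}, I pass to the Heisenberg picture with a finite reference system adjoined to a site of $X$. It then suffices to bound
\[
\operatorname{tr}\bigl[(\tau_t-\widetilde\tau_t)(O)\rho\bigr]
=\int_0^t\operatorname{tr}\bigl[\widetilde\tau_{t-s}(O)\,(-i)[D,T_s(\rho)]\bigr]ds,
\qquad D=H-\widetilde H,
\]
uniformly in states $\rho$ and in $\|O\|_\infty\le1$ supported in $X$. This identity is justified as in \eqref{eq:adaptive-local-duhamel}: first for $\rho\in\cT_f$ using particle-number invariance, then extended to all states by moment bounds. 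I split $D=\sum_{e}D_e+\sum_y D_y$ into edge terms $D_e=-(J_{xy}-\widetilde J_{xy})a_x^\dagger a_y+\mathrm{h.c.}$ and on-site terms
\[
D_y=\tfrac12(U_y-\widetilde U_y)N_y(N_y-1)-(\mu_y-\widetilde\mu_y)N_y .
\]
For a term $D_Y$ with support $Y$ at distance $r_Y=d(X,Y)\ge1$, I write $\widetilde\tau_{t-s}(O)=\widetilde\tau^{X[r_Y-1]}_{t-s}(O)+\mathcal E_Y$. The first summand commutes with $D_Y$, since it is supported away from $Y$, so its contribution $\operatorname{tr}[[B,D_Y]T_s\rho]$ vanishes. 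For the finitely many terms with $r_Y=0$, I keep the full operator. By \eqref{eq:adaptive-log-gaussian-locality},
\[
\|\mathcal E_Y\|_\infty\le C|X|(1+T)\exp\bigl[-c\log^2\bigl(2+r_Y/(1+T)\bigr)\bigr].
\]

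\emph{Step 2: moment bounds on the commutators.} The remaining factor is bounded using
\[
\|[D_Y,\sigma]\|_1\le2\operatorname{tr}[D_Y^\dagger D_Y\sigma]^{1/2}
\]
for $\sigma=T_s(\rho)$, exactly as in Lemma~\ref{lem:hopping-truncation-error}, together with the moment bound \eqref{eq:adaptive-local-moments}.
\begin{itemize}
\item For hopping and chemical-potential terms, the bound is $\varepsilon\,C\,b_2(s)\lesssim\varepsilon(s^{-1/(\ell-1)}+1)$. Its time integral is $\lesssim\varepsilon(T^{1-1/(\ell-1)}+T)$, which is finite for $\ell>2$.
\item For the Hubbard term, $\operatorname{tr}[N_y^4\sigma]^{1/2}\le b_4(s)^2\lesssim s^{-2/(\ell-1)}+1$. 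This is integrable exactly when $\ell>3$ and yields $T^{1-2/(\ell-1)}+T$.
\end{itemize}
This accounts both for the restriction $\ell>3$ and for the improved exponent when $U=\widetilde U$.

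\emph{Step 3: summing over distances.} On $\mathbb Z^D$, the number of sites and edges at distance $r$ from $X$ is at most $C|X|(1+r)^{D-1}$. Summing over terms, each term is controlled by its locality factor times its moment factor, and the sum is bounded by
\[
C\varepsilon|X|^2(1+T)\bigl(T^{1-2/(\ell-1)}+T\bigr)\sum_{r\ge0}(1+r)^{D-1}e^{-c\log^2(2+r/(1+T))}.
\]
Substituting $u=r/(1+T)$ turns the sum into $\lesssim(1+T)^D\int_0^\infty(1+u)^{D-1}e^{-c\log^2(2+u)}du$, which is finite because log-squared decay beats every power. This gives \eqref{eq:local-hamiltonian-stability}.

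\emph{Main obstacle.} I expect the main difficulty to be the rigorous justification of the Duhamel identity and of the vanishing of the commuting part when $D_Y$ is unbounded. The first thing to try is the paper's existing route: cut off at $Q_K$ using particle-number invariance of both generators, then remove the cutoff by dominated convergence using the integrable moment bounds from Step 2. The locality of the dissipative part causes no extra trouble, since the loss terms are identical in the two models and cancel in $D$.
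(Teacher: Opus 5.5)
Your proposal is correct and essentially matches the paper's proof. Both use a Duhamel comparison, split the Hamiltonian difference into site and edge terms, and replace the evolved observable by its restriction to $X[\mathrm{dist}(X,Z)-1]$, which commutes with the term, with the error controlled by the locality bound. Both then use order-$4$ (order-$2$ when $U=\widetilde U$) moment bounds to produce the $s^{-2/(\ell-1)}$ (resp.\ $s^{-1/(\ell-1)}$) singularity. The differences are cosmetic: you evolve the observable with $\widetilde\tau$ and the state with $T_s$ (the paper does the reverse), and you take the locality factor's supremum over $[0,T]$ with shell counting instead of keeping $(1+t-s)^{D+1}$ inside the time integral; both yield the stated bound.
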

	\begin{proof}
		Fix $A\in\cB(\cH_X)$ and write $\widetilde H-H=\sum_Z h_Z$, where $Z$ is a site or an edge. Proposition~\ref{main-prop:sobolev-regularizing} at order $k=4$ and Cauchy--Schwarz ineqaulity give
		\begin{equation*}
			\|[h_Z,\widetilde T_s(\rho)]\|_1
			\le C\varepsilon\left(1+s^{-2/(\ell-1)}\right)
		\end{equation*}
		for every state $\rho$ and $s>0$. Duhamel's formula, justified on finite total-number sectors as in Equation~\eqref{eq:adaptive-local-duhamel}, pairs this commutator with $\tau_{t-s}(A)$. For $Z$ disjoint from $X$, replace this observable by its restricted evolution on $X[r]$ with $r=\mathrm{dist}(X,Z)-1$. The replacement commutes with $h_Z$, and Corollary~\ref{cor:local-channel-approximation} bounds the remaining term. For $Z$ meeting $X$, use contractivity. Summing the spatial decay using $|X[r]|\le |X|(2r+1)^D$ gives
		\begin{equation*}
			\left|\tr[A(T_t-\widetilde T_t)(\rho)]\right|
			\le C\varepsilon |X|^2\|A\|_\infty
			\int_0^t\left(1+s^{-2/(\ell-1)}\right)(1+t-s)^{D+1}\,ds.
		\end{equation*}
		The integral is finite for $\ell>3$, extends the identity to arbitrary inputs, and yields Equation~\eqref{eq:local-hamiltonian-stability}. The argument is uniform in a finite reference as in Corollary~\ref{cor:local-channel-approximation}. If the  coefficients agree, each $h_Z$ has degree at most two, so the moment singularity is $s^{-1/(\ell-1)}$ instead.
	\end{proof}

	For an infinite graph $G$, the quasi-local algebra is the operator-norm closure of the union of $\cB(\cH_Y)$ over all finite $Y\subset G$, with local observables extended by the identity outside their support.

	\begin{cor}[Thermodynamic limit]\label{cor:thermodynamic-limit}
		Let $G$ be an infinite nearest-neighbor subgraph of $\mathbb Z^D$, and let the finite-volume generators be consistent restrictions of a fixed interaction on $G$, with the bounds in Theorem~\ref{thm:adaptive-moment-locality} uniform in the volume. Then there is a unique contractive, completely positive, unital semigroup $(\tau_t)_{t\ge0}$ on the quasi-local algebra. For every $A\in\cB(\cH_X)$ with finite $X$, $T>0$, integer $r\ge0$, and finite $V\supseteq X[r]$,
		\begin{equation*}
			\sup_{0\le t\le T}\|\tau_t^V(A)-\tau_t(A)\|_\infty
			\le 2C|X|(1+T)\|A\|_\infty
			\exp\!\left[-c\log^2\!\left(2+\frac{r}{1+T}\right)\right].
		\end{equation*}
		Here $C,c$ have the dependencies stated in Theorem~\ref{thm:adaptive-moment-locality}. In particular, the finite-volume evolutions converge uniformly on $[0,T]$. No norm continuity at $t=0$ is asserted for all bounded local observables.
	\end{cor}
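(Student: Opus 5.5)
The construction will use the finite-volume evolutions on an exhausting sequence of volumes. Each $\tau_t^{V_n}(A)$ will be shown to be Cauchy in operator norm, uniformly in $t\in[0,T]$, for every local $A$. The limit is then extended by density and continuity to the quasi-local algebra.

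\textbf{Comparison estimate.} Fix finite $X$, $A\in\cB(\cH_X)$, $T>0$, and two finite volumes $V\subseteq V'$ both containing $X[r]$. I will apply \eqref{eq:adaptive-log-gaussian-locality} twice, with $R=X[r]$ in each volume. Because the generators are consistent restrictions of one interaction, the restricted generator $\cL_{X[r]}$ is the same whether it is formed from $\cL_V$ or from $\cL_{V'}$. Consequently $\tau_t^{V,X[r]}=\tau_t^{V',X[r]}$, and the triangle inequality gives
\begin{equation*}
\sup_{0\le t\le T}\|\tau_t^V(A)-\tau_t^{V'}(A)\|_\infty\le 2C|X|(1+T)\|A\|_\infty\exp\!\left[-c\log^2\!\left(2+\tfrac{r}{1+T}\right)\right].
\end{equation*}
One point needs care here. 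Theorem~\ref{thm:adaptive-moment-locality} is stated on connected finite volumes, and the constants in it depend only on uniform model bounds, so they are the same for $V$ and $V'$. For the ball $X[r]$ intersected with $G$, I will use the componentwise remark in the preliminaries. The main obstacle is to check that the constants really are volume independent, including $Z_\kappa^*$, and that connectivity is harmless. I expect this to follow directly from the uniform-in-volume statement of the theorem.

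\textbf{Existence of the limit.} Take an exhausting sequence $V_n\uparrow G$. For each fixed $r$, eventually $V_n\supseteq X[r]$, so the bound above shows that $(\tau_t^{V_n}(A))_n$ is uniformly Cauchy on $[0,T]$. I define $\tau_t(A)$ as its limit, which lies in the quasi-local algebra because each $\tau_t^{V_n}(A)\in\cB(\cH_{V_n})$. Letting $V'=V_n\to\infty$ in the comparison estimate yields the stated bound with constant $2C$. The same limit is obtained along any other exhausting sequence, since two sequences can be interleaved; the limit is therefore independent of the choice.

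\textbf{Extension and structural properties.} Each $\tau_t^{V_n}$ is a contraction, so the limit satisfies $\|\tau_t(A)\|\le\|A\|$. This lets me extend $\tau_t$ by continuity to the norm closure of the local observables, which is the quasi-local algebra. The following properties pass to norm limits:
\begin{itemize}
\item Complete positivity and unitality hold for every $\tau_t^{V_n}$. For complete positivity, apply the same argument to $\tau_t\otimes\id_{M_k}$, using that the reference system does not change the bounds, as in Corollary~\ref{cor:local-channel-approximation}.
\item For the semigroup property, write
\begin{equation*}
\tau^{V_n}_{t+s}(A)=\tau^{V_n}_t\bigl(\tau^{V_n}_s(A)\bigr).
\end{equation*}
Uniform convergence together with contractivity gives $\|\tau_t^{V_n}(B_n)-\tau_t(B)\|\le\|B_n-B\|+\|\tau_t^{V_n}(B)-\tau_t(B)\|$. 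The right side tends to zero once convergence is known on quasi-local $B$, and that follows by approximating $B$ with local observables.
\end{itemize}
Uniqueness holds because any semigroup satisfying the stated estimate must agree with this limit on local observables, and local observables are dense. No continuity in norm at $t=0$ is claimed, so nothing further is needed there.
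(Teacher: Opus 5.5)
Your proposal is correct and follows essentially the same route as the paper: a triangle inequality through the common restricted evolution on $X[r]$ (identical in $V$ and $V'$ by consistency), applying \eqref{eq:adaptive-log-gaussian-locality} twice to get the $2C$ bound, the Cauchy property along exhausting volumes, then extension to the quasi-local algebra by contractivity, with complete positivity, unitality and the semigroup law passing to norm limits. Your extra details (independence of the exhaustion via interleaving, the $\varepsilon/3$ argument for the semigroup law, and uniqueness read as uniqueness among semigroups satisfying the stated estimate) only make explicit what the paper's short proof leaves implicit.
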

	\begin{proof}
		For finite volumes $V,W$ containing $R=X[r]$, Theorem~\ref{thm:adaptive-moment-locality}  gives
		\begin{equation*}
			\sup_{0\le t\le T}\|\tau_t^V(A)-\tau_t^W(A)\|_\infty
			\le2C|X|(1+T)\|A\|_\infty
			\exp\!\left[-c\log^2\!\left(2+\frac{r}{1+T}\right)\right].
		\end{equation*}
		The right side tends to zero as $r\to\infty$. Passing $W\uparrow G$ gives the displayed rate. Contractivity extends the limit to the quasi-local algebra, while complete positivity, unitality, and the semigroup law pass to the limit.
	\end{proof}

\subsection{Correlation spreading and preparation time}\label{subsec:correlation-spreading}
	Local channel approximation also limits the creation of correlations between distant regions. This gives lower bounds on the time needed to prepare correlated states from product inputs, independently of their initial occupation.

	\begin{cor}[Correlation spreading from product inputs]\label{cor:correlation-spreading}
		Under the hypotheses of Corollary~\ref{cor:local-channel-approximation}, let $\rho=\bigotimes_{x\in V}\rho_x$ be a product state and write $\rho_{t,Z}=\tr_{V\setminus Z}T_t(\rho)$, with $XY$ denoting $X\cup Y$. For nonempty disjoint regions $X,Y\subset V$ and $t\ge0$,
		\begin{equation}\label{eq:correlation-spreading}
			\|\rho_{t,XY}-\rho_{t,X}\otimes\rho_{t,Y}\|_1
			\le C(|X|+|Y|)(1+t)
			\exp\!\left[-c\log^2\!\left(2+\frac{\mathrm{dist}(X,Y)}{1+t}\right)\right].
		\end{equation}
		The constants have the dependencies stated in Corollary~\ref{cor:local-channel-approximation}. No initial moment bound is required.
	\end{cor}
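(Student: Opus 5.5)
The plan is to compare the full evolution of the product input with a \emph{decoupled} evolution in which $X$ and $Y$ evolve independently on disjoint buffers; under that decoupled dynamics the product structure of $\rho$ is preserved exactly. Let $d=\mathrm{dist}(X,Y)$. If $d\le 2$, or more generally $d\lesssim 1+t$, the bound follows from the trivial estimate $\|\rho_{t,XY}-\rho_{t,X}\otimes\rho_{t,Y}\|_1\le 2$ after enlarging $C$. Otherwise, choose $r=\lfloor (d-1)/2\rfloor$, so that $R_X=X[r]$ and $R_Y=Y[r]$ are disjoint and no edge of $E$ joins them. Put $R=R_X\cup R_Y$. Then $E(R)=E(R_X)\cup E(R_Y)$, so the restricted generator splits as $\cL_R=\cL_{R_X}+\cL_{R_Y}$ with commuting summands, and $T_t^R=T_t^{R_X}\circ T_t^{R_Y}$ acts as the identity outside $R$.

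Step 1 is to apply Corollary~\ref{cor:local-channel-approximation} with support $XY$ and buffer $(XY)[r]=R$. Evaluating the diamond-norm bound on the state $\rho$ gives
\begin{equation*}
\|\rho_{t,XY}-\tr_{V\setminus XY}T_t^R(\rho)\|_1\le C(|X|+|Y|)(1+t)\exp\!\left[-c\log^2\!\left(2+\tfrac{r}{1+t}\right)\right].
\end{equation*}
Applying the same corollary with support $X$ and buffer $X[r]=R_X$ yields the analogous estimate for $\rho_{t,X}$ against $\tr_{V\setminus X}T_t^{R_X}(\rho)$, and likewise for $Y$.

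Step 2 is to exploit the product structure. Because $\rho=\rho_{R_X}\otimes\rho_{R_Y}\otimes\rho_{\mathrm{rest}}$ and $T_t^R$ factorizes across these three tensor factors, the state $\tr_{V\setminus XY}T_t^R(\rho)$ equals $\sigma_X\otimes\sigma_Y$ with $\sigma_X=\tr_{R_X\setminus X}T_t^{R_X}(\rho_{R_X})$. This $\sigma_X$ is exactly the $X$-marginal of $T_t^{R_X}(\rho)$, and similarly for $\sigma_Y$.

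Step 3 is the triangle inequality:
\begin{equation*}
\|\rho_{t,XY}-\rho_{t,X}\otimes\rho_{t,Y}\|_1\le\|\rho_{t,XY}-\sigma_X\otimes\sigma_Y\|_1+\|\sigma_X-\rho_{t,X}\|_1+\|\sigma_Y-\rho_{t,Y}\|_1 .
\end{equation*}
Here I use $\|a\otimes b-a'\otimes b'\|_1\le\|a-a'\|_1+\|b-b'\|_1$ for states. Each of the three terms is bounded by Step 1. Since $r\ge d/2-1$, replacing $r$ by $d$ in $\log^2(2+r/(1+t))$ costs only adjustments of $c$ and $C$; the regime $d\le C(1+t)$ is again covered by the trivial bound.

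I expect the only delicate point to be justifying the factorization of $T_t^R$ on disjoint buffers with no connecting edges. This requires that the closure of $\cL_{R_X}+\cL_{R_Y}$ generates the tensor product of the two semigroups, which should follow from the core property of $\cT_f$ and the site-factorized construction used before Proposition~\ref{prop:bounded-interaction-lrb}. The corollary's uniformity in the input state means no moment bound on $\rho$ is needed.
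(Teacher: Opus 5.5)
Your strategy is the paper's: decouple $X$ and $Y$ by buffers that no edge joins, use the product structure of $\rho$ to identify the restricted output on $XY$ as $\sigma_X\otimes\sigma_Y$, and conclude with the local channel approximation and the triangle inequality. There is, however, an off-by-one error in your choice $r=\lfloor (d-1)/2\rfloor$, and it breaks Step~2 for odd $d$.

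If $d=2m+1\ge3$, then $r=m$. Take a geodesic $x=v_0,v_1,\dots,v_{2m+1}=y$ with $x\in X$ and $y\in Y$. Then $v_m\in X[m]$, $v_{m+1}\in Y[m]$, and $(v_m,v_{m+1})\in E$. This edge lies in $E(R)$, so its hopping term is kept in $\cL_R$. Hence the splitting $\cL_R=\cL_{R_X}+\cL_{R_Y}$ fails, and $\tr_{V\setminus XY}T_t^R(\rho)$ need not be a product. Disjointness of $X[r]$ and $Y[r]$ only needs $2r<d$, whereas the absence of a connecting edge needs $2r+1<d$. The paper takes $r=\lfloor (d-2)/2\rfloor$ for $d\ge2$. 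With this choice your argument goes through unchanged, and $r\ge (d-3)/2$ still lets you replace $r$ by $d$ inside the $\log^2$ after adjusting $c,C$.

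The remaining difference is how you control the marginal errors. You invoke the local channel approximation three times, with supports $XY$, $X$, $Y$ and buffers $R$, $X[r]$, $Y[r]$, and identify $\sigma_X$ with the $X$-marginal of $T_t^{X[r]}(\rho)$. The paper invokes it once, for $XY$, and then uses monotonicity of the trace norm under partial trace, $\|\rho_{t,X}-\sigma_X\|_1\le\|\rho_{t,XY}-\sigma_X\otimes\sigma_Y\|_1$, and likewise for $Y$. Combined with your tensor-product triangle inequality, this gives the factor $3\|\rho_{t,XY}-\sigma_X\otimes\sigma_Y\|_1$ directly. Both routes are valid; the paper's is shorter.

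The semigroup factorization you flag as delicate is simply asserted in the paper. It does follow as you suggest. The generator of the tensor-product semigroup extends $\cL_R$ on the core $\cT_f$, and a generator of a $C_0$-semigroup cannot properly extend another such generator.
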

	\begin{proof}
		Put $d=\mathrm{dist}(X,Y)$. For $d\ge2$, choose $r=\lfloor(d-2)/2\rfloor$ and $R=X[r]\cup Y[r]$. No edge joins the two neighborhoods, so the restricted output is $\sigma_X\otimes\sigma_Y$, where $\sigma_Z=\tr_{V\setminus Z}T_t^R(\rho)$ for $Z=X,Y$. Corollary~\ref{cor:local-channel-approximation} bounds its distance from $\rho_{t,XY}$. Contractivity under partial trace and the triangle inequality give
		\begin{equation*}
			\|\rho_{t,XY}-\rho_{t,X}\otimes\rho_{t,Y}\|_1
			\le3\|\rho_{t,XY}-\sigma_X\otimes\sigma_Y\|_1.
		\end{equation*}
		This proves Equation~\eqref{eq:correlation-spreading} after adjusting $C,c$. The case $d=1$ follows from the trace-distance bound $2$.
	\end{proof}

	\emph{Preparation time.} For observables $A_X,B_Y$ of norm at most one, define $\operatorname{Cov}_{\sigma}(A_X,B_Y)=\tr[A_XB_Y\sigma]-\tr[A_X\sigma]\tr[B_Y\sigma]$. Suppose a target state $\omega$ on $XY$ satisfies
	\begin{equation*}
		|\operatorname{Cov}_{\omega}(A_X,B_Y)|\ge\chi>0,
		\qquad \|\rho_{t,XY}-\omega\|_1\le\varepsilon<\chi/6.
	\end{equation*}
	The covariance changes by at most $3\varepsilon$, so its magnitude in $\rho_{t,XY}$ is at least $\chi/2$ and is bounded by the left side of Equation~\eqref{eq:correlation-spreading}. For fixed $|X|,|Y|$ and $\chi$, this implies, for sufficiently large $d=\mathrm{dist}(X,Y)$,
	\begin{equation}\label{eq:correlation-preparation-time}
		t\ge c_0d\exp\!\left[-C_0\sqrt{\log(2+d)}\right].
	\end{equation}
	Indeed, when $1+t\le d$, Equation~\eqref{eq:correlation-spreading} gives $c\log^2(2+d/(1+t))\le\log(2C(|X|+|Y|)d/\chi)$, and the case $1+t>d$ is immediate. The constants depend only on the fixed support sizes, $\chi$, and the model bounds. In particular, for every $\eta>0$, $t\ge c_\eta d^{1/(1+\eta)}$. Thus arbitrarily large initial occupations cannot circumvent this almost-linear preparation-time bound within the considered dynamics.

\subsection{Local cat-code adiabatic limit}\label{subsec:local-adiabatic-limit}
	Consider the model of Proposition~\ref{prop:cat-code-polynomial-lrb} on $\mathbb Z^D$, with $\gamma_x^{\mathrm{cat}}=1$ and on-site Hamiltonian degree at most four. Thus, on a finite volume $V$ and for $P_x$ being the projection onto $\ker L_x$,
	\begin{equation*}
		\cL_V^\gamma=-i[H_V^{\mathrm{cat}},\cdot]
			+\gamma\sum_{x\in V}\cL[L_x].
	\end{equation*}
	For every finite region $R\subseteq V$, set $P_R=\bigotimes_{x\in R}P_x$ and $\mathcal K_R^{\mathrm{proj}}=-i[P_RH_R^{\mathrm{cat}}P_R,\cdot]$.

	\begin{thm}[Local cat-code adiabatic estimate]\label{thm:local-cat-adiabatic-estimate}
		There is $C\ge1$ such that, for every finite $X\subset V$, $T>0$, $0<\eta<1$, $\gamma\ge1$, and state $\rho=P_V\rho P_V$,
		\begin{equation}\label{eq:local-cat-adiabatic-bound}
			\sup_{0\le t\le T}\left\|\tr_{V\setminus X}
				\left(e^{t\cL_V^\gamma}(\rho)
					-e^{t\mathcal K_V^{\mathrm{proj}}}(\rho)\right)\right\|_1
				\le Ce^{C/\eta}|X|^2(1+T)^{2D+1}\gamma^{-1+\eta}.
		\end{equation}
		The constant depends only on $D$, the fixed Sobolev weight, and the uniform amplitude and Hamiltonian coefficient bounds. It is independent of $T$, $\eta$, $X$, the volume, and $\gamma$.
	\end{thm}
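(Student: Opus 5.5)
We reduce the statement to a finite-region adiabatic estimate plus two locality comparisons. Fix $X$, $T$, and a buffer $R=X[r]$, with $r$ chosen at the end as a power of $\gamma$. By trace duality it suffices to bound $|\tr[A(e^{t\cL_V^\gamma}(\rho)-e^{t\mathcal K_V^{\mathrm{proj}}}(\rho))]|$ for $A\in\cB(\cH_X)$ with $\|A\|_\infty\le1$. We insert the two restricted evolutions and split the error into three terms:
\begin{equation*}
	\bigl(\tau_t^\gamma-\tau_t^{\gamma,R}\bigr)(A),\qquad
	\tr\!\left[A\bigl(e^{t\cL_R^\gamma}-e^{t\mathcal K_R^{\mathrm{proj}}}\bigr)(\rho)\right],\qquad
	\tr\!\left[A\bigl(e^{t\mathcal K_R^{\mathrm{proj}}}-e^{t\mathcal K_V^{\mathrm{proj}}}\bigr)(\rho)\right].
\end{equation*}
The first term is controlled by Proposition~\ref{prop:cat-code-polynomial-lrb}, uniformly in $\gamma$. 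By \eqref{eq:cat-locality-lattice} it is at most $C_{q,T,X}(1+r)^{-q}$ for any $q$. To track the dependence on $X$ and $T$, I would instead use \eqref{eq:cat-code-lrb} directly, with $|E_R|\le 2D|X|(2r+1)^D$, large $p$, and $\lambda\sim (r/(T^{1/3}+T))^{1/2}$.

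For the third term, note that $P_RH^{\mathrm{cat}}_RP_R$ and $P_VH^{\mathrm{cat}}_VP_V$ act on the code space, where every site is restricted to the four-dimensional span of the coherent states $\ket{\pm\alpha_x},\ket{\pm i\alpha_x}$. Since $\rho=P_V\rho P_V$, the evolution $e^{t\mathcal K_V^{\mathrm{proj}}}$ is unitary dynamics of a spin system with local dimension four. Its interactions are the projected on-site and edge polynomials, whose norms are bounded by the amplitude and coefficient bounds, since the projected number moments are bounded. The standard finite-range Lieb--Robinson bound (Proposition~\ref{prop:bounded-interaction-lrb} with no on-site part) then gives a contribution $C|X|e^{-r}(e^{vT}-1)$. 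If only a polynomial dependence on $T$ is desired, this is handled by choosing $r\ge C(1+T)$ plus the $\gamma$-dependent scale.

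The second term is the finite-region adiabatic estimate and is the main obstacle. On $R$, both generators live on finitely many modes. I would prove a standard adiabatic elimination bound of the form
\begin{equation*}
	\bigl\|e^{t\cL_R^\gamma}(\rho)-e^{t\mathcal K_R^{\mathrm{proj}}}(\rho)\bigr\|_1\le C|R|\,|E(R)\cup R|\,(1+t)\,\gamma^{-1}.
\end{equation*}
The route is as follows. Let $\mathcal P_R=\lim_{s\to\infty}e^{s\sum_{x\in R}\cL[L_x]}$ be the product of the on-site conditional expectations onto the cat manifold. Then use the local spectral gap of each single-mode dissipator $\cL[L_x]$ on suitable Sobolev spaces to invert it on the range of $\1-\mathcal P_R$. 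Finally, run a Dyson/Duhamel expansion to second order, in which the first-order term $\mathcal P_R\mathcal H\mathcal P_R$ reproduces $\mathcal K_R^{\mathrm{proj}}$ (using $\mathcal P_x(P_x\cdot P_x)$ structure) and the remainder carries one factor $\gamma^{-1}$ times norms of $\mathcal H(\cL^{-1})\mathcal H$ applied to moment-bounded states.

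The unbounded polynomial terms in this remainder are controlled by the $\gamma$-uniform Sobolev moment bounds \eqref{eq:cat-local-moments}. These stay bounded for code-space inputs, since $\rho$ has all moments bounded and Sobolev preservation applies. The key difficulty is that the remainder scales like the number of terms in $R$ times $t$, i.e. $|R|\sim|X|r^D$, and that the local gap inversion must be shown to be quasi-local so that the count is only $|R|$ rather than exponential. I would try the product structure of $\mathcal P_R$ and apply it term by term, paying $|R|$ from the sum over $h_Z$ and another factor from the second-order sum, restricted to neighbouring pairs, which is also $\sim|R|$.

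Finally we combine the three terms. The second gives $C|X|^2r^{2D}(1+T)\gamma^{-1}$, while the first and third decay faster than any power of $r$ with constants $C_q$. Choosing $r=\lceil(1+T)\gamma^{\eta/(2D)}\rceil$ bounds the second term by $C|X|^2(1+T)^{2D+1}\gamma^{-1+\eta}$. Taking $q\approx 2D/\eta$ makes the tails of the first and third terms at most $\gamma^{-1}$, with $C_q\le Ce^{Cq}=Ce^{C/\eta}$ by the $p$-dependence $C_p\le Ce^{Cp}$ stated in Proposition~\ref{prop:cat-code-polynomial-lrb}. This yields \eqref{eq:local-cat-adiabatic-bound}.
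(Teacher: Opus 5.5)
Your three-term decomposition (dissipative restriction error via Proposition~\ref{prop:cat-code-polynomial-lrb}, finite-region adiabatic error on $R=X[r]$, and projected restriction error via a Lieb--Robinson bound on the tensor-product code space) is exactly the paper's. Your target $C|R|^2(1+t)/\gamma$ for the middle term also matches Lemma~\ref{lem:cat-finite-adiabatic}.

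For that lemma you are aiming at more than is needed. You require no Lindbladian spectral gap on Sobolev spaces, no conditional expectation $\mathcal P_R$, and no moment bounds for dissipatively evolved states. Since $\sigma_s=e^{s\mathcal K_R^{\mathrm{proj}}}(\sigma)$ stays in the code space, the Duhamel source $(-i[H_R^{\mathrm{cat}},\cdot]-\mathcal K_R^{\mathrm{proj}})(\sigma_s)$ consists only of the off-diagonal blocks $-i(\1-P_R)H_R^{\mathrm{cat}}P_R\sigma_s$ and their adjoints. On these blocks $\sum_{x\in R}\cL[L_x]$ acts as left (resp.\ right) multiplication by $-\frac12L_R'$, where $L_R'=\sum_{x\in R}L_x^\dagger L_x$. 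Three inputs then suffice: the operator gap $L_R'\ge24(\1-P_R)$, relative boundedness of $H_R^{\mathrm{cat}}$ with respect to $L_R'$, and code-space moment bounds. One integration by parts gives $C(t|R|^2+|R|)/\gamma$, and the inverse is automatically local, so your quasi-locality worry does not arise. The paper obtains the lemma by feeding these operator bounds into the adiabatic theorem of \cite{Moebus.2025}.

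The genuine gap is the final $\eta$-bookkeeping. Your claim $C_q\le Ce^{Cq}$ does not follow from $C_p\le Ce^{Cp}$. The boundary term in \eqref{eq:cat-code-lrb} forces $\lambda^2\lesssim r/(C_pJ_{\mathrm{cat}}(T^{1/3}+T))$. The truncation term therefore contains $\lambda^{-(p-4)/2}\gtrsim\bigl(C_pJ_{\mathrm{cat}}(T^{1/3}+T)/r\bigr)^{(p-4)/4}$, i.e.\ a constant of order $C_p^{p/4}$, which is only bounded by $e^{Cp^2}$. Even the pure tail $\sup_{r}r^{D+q}e^{-r/2}$ grows like $e^{Cq\log q}$. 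With $p\approx8D/\eta$ and your choice $r=\lceil(1+T)\gamma^{\eta/(2D)}\rceil$, this yields $e^{C/\eta^2}$, not $e^{C/\eta}$.

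Your choice also fails for small $\gamma$. When $\gamma^{\eta/(2D)}$ is below a $p$-dependent threshold, your $r$ violates $\lambda\ge2$. It also leaves $e^{vT-r}$ exponentially large in $T$ in the projected term; you note the need for $r\ge C(1+T)$ but then drop it. Falling back on the trivial bound $2$ there costs $(C_pJ_{\mathrm{cat}})^{2D/\eta}$, again $e^{C/\eta^2}$.

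The missing step is to enlarge the buffer by an $\eta$-dependent factor, as the paper does. Take $q=2D(1-\eta)/\eta$, $p=4(D+q)+4$, and $r=\lceil L_\eta(1+T)\gamma^{\eta/(2D)}\rceil$. Choose $L_\eta=e^{C/\eta}$ large enough that $L_\eta\ge64C_pJ_{\mathrm{cat}}$, $L_\eta\ge2(D+q)$, and $vT-r\le-r/2$. Then:
\begin{itemize}
\item $\lambda=(r/(8C_pJ_{\mathrm{cat}}(T^{1/3}+T)))^{1/2}\ge2\gamma^{\eta/(4D)}$;
\item the factor $(8C_pJ_{\mathrm{cat}}(T^{1/3}+T))^{D+q}$ is absorbed by $r^{-q}$ and $L_\eta^{D}$;
\item both restriction errors are at most $C|X|[L_\eta(1+T)]^{D+1}\gamma^{-1+\eta}$;
\item the adiabatic term pays only $L_\eta^{2D}\le e^{C/\eta}$.
\end{itemize}
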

	\begin{proof}
		Put $R=X[r]$. The reduced state $\tr_{V\setminus R}\rho$ is supported on $P_R$, so Lemma~\ref{lem:cat-finite-adiabatic} bounds the difference between the dissipative and projected evolutions on $R$ by $C(T|R|^2+|R|)/\gamma$. Its code moment estimate gives $\|P_xP_yH_eP_xP_y\|_\infty\le C$ uniformly for $e=(x,y)$. On the tensor-product code space, Proposition~\ref{prop:bounded-interaction-lrb} and Duhamel's formula therefore bound the projected restriction error by $C|\partial_E R|e^{CT-r}$. Choose
		\begin{equation*}
			q=\frac{2D(1-\eta)}{\eta},\qquad p=4(D+q)+4,\qquad
			L_\eta=e^{C/\eta},\qquad
			r=\left\lceil L_\eta(1+T)\gamma^{\eta/(2D)}\right\rceil.
		\end{equation*}
		Since $C_p\le Ce^{Cp}$, we may enlarge $C$ so that $L_\eta\ge64C_pJ_{\mathrm{cat}}$, $L_\eta\ge2(D+q)$, and $CT-r\le-r/2$. For $J_{\mathrm{cat}}>0$, apply Equation~\eqref{eq:cat-code-lrb} with
		\begin{equation*}
			\lambda=\left(\frac{r}{8C_pJ_{\mathrm{cat}}(T^{1/3}+T)}\right)^{1/2}
			\ge2\gamma^{\eta/(4D)}.
		\end{equation*}
		Using $|E_R|,|\partial_E R|\le2D|R|\le2D|X|(2r+1)^D$, trace duality and the triangle inequality bound the left side of Equation~\eqref{eq:local-cat-adiabatic-bound} by
		\begin{equation*}
		\begin{aligned}
			&C C_pJ_{\mathrm{cat}}|X|r^D\lambda^{-(p-4)/2}(T^{1/3}+T)
				+C|X|r^De^{-r/2}+\frac{C(T|X|^2r^{2D}+|X|r^D)}{\gamma}.
		\end{aligned}
		\end{equation*}
		If $J_{\mathrm{cat}}=0$, the restriction errors vanish. Since $(p-4)/4-D=q$ and $q\eta/(2D)=1-\eta$, the first term is at most $C|X|[L_\eta(1+T)]^{D+1}\gamma^{-1+\eta}$. The second has the same bound, using $z^{D+q}e^{-L_\eta z/2}\le1$ for $z\ge1$. The last term is at most $C|X|^2L_\eta^{2D}(1+T)^{2D+1}\gamma^{-1+\eta}$. As $D+1\le2D$, these estimates prove the assertion after enlarging $C$.
	\end{proof}

\subsection{Global approximation and simulation}\label{subsec:global-simulation}
	We present a hybrid bosonic--qubit simulation of the dissipative Bose--Hubbard model under the hypotheses of Theorem~\ref{thm:adaptive-moment-locality}, with $\ell>3$. Physical multiphoton loss first regularizes an arbitrary bosonic input. After an occupation test and coherent encoding, a qubit circuit simulates the remaining evolution. This gives a diamond-norm error bound uniform over all input states. For $0<\delta<t$ and an integer cutoff $M\ge2$, the scheme proceeds as follows.\\[-1.5ex]
    \begin{figure}[H]
		\centering
        \vspace*{-2ex}
        \includegraphics[width=0.9\linewidth]{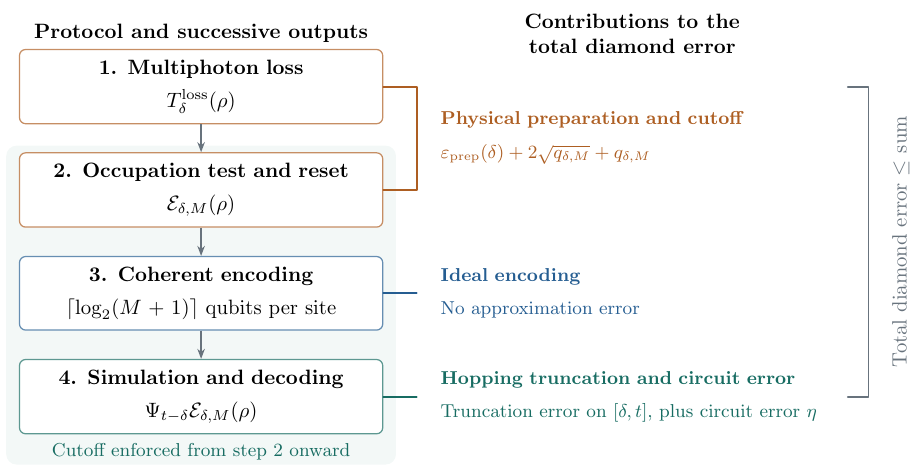}
        \vspace*{-2ex}
		\caption{Hybrid bosonic--qubit simulation at time $t$. The four steps implement $\Psi_{t-\delta}\mathcal E_{\delta,M}$. The right column summarizes the error contributions in Equation~\eqref{eq:global-bh-unrestricted-error}, whose sum bounds the diamond distance from $T_t$. Ideal encoding adds no error. Shading marks the stages after the occupation cutoff is enforced.}
		\label{fig:global-simulation-comparison}
	\end{figure}
    
	\begin{enumerate}[leftmargin=*,itemsep=1pt,topsep=3pt]
		\item Apply only the physical multiphoton loss for short time $\delta$ to regularize the input.
		\item At each site, test whether the occupation is at most $M$, without resolving occupations within either outcome. Keep the state if every test passes. Otherwise, reset the system to the vacuum.
		\item Coherently encode each occupation $0,\ldots,M$ in a binary register.
		\item Simulate the compressed Hamiltonian and loss for time $t-\delta$, then decode.
	\end{enumerate}~\\[-1.5ex]
	The same cutoff controls both the hopping and the register size. The on-site Hamiltonian is included in the circuit, and its omission during preparation is part of the error below.

	\begin{prop}[Dissipative preparation]\label{prop:global-onsite-preparation}
		Let $\delta>0$ and let $M\ge2$ be an integer. Write $T_u^{\mathrm{loss}}=e^{u\sum_{x\in V}\gamma_x\cL[a_x^\ell]}$, $P_V^{(M)}=\prod_{x\in V}P_x^{(M)}$, and $\cH_M=\operatorname{ran}P_V^{(M)}$. We identify states on $\cH_M$ with their natural embedding in $\cT_1(\cH_V)$. With $\ket{0_V}=\bigotimes_{x\in V}\ket0$, the preparation channel is
		\begin{equation}\label{eq:global-encoder}
			\mathcal E_{\delta,M}(\rho)
			=P_V^{(M)}T_\delta^{\mathrm{loss}}(\rho)P_V^{(M)}
			+\tr[(\1-P_V^{(M)})T_\delta^{\mathrm{loss}}(\rho)]\ketbra{0_V}{0_V}.
		\end{equation}
		For $u>0$ and $k\ge1$, set
		\begin{equation}\label{eq:global-preparation-quantities}
		\begin{aligned}
			B_k^{\mathrm{loss}}(u)
			&=\left(\frac{2k}{(\ell-1)\gamma_{\min}\ell u}\right)^{k/(\ell-1)}
			  +[\ell(\ell+1)]^k,\\
			q_{\delta,M}&=\min\left\{1,|V|\inf_{k\ge1}\frac{B_k^{\mathrm{loss}}(\delta)}{(M+2)^k}\right\},\\
			\varepsilon_{\mathrm{prep}}(\delta)
			&=\left(4J_*|E|+2\sum_{x\in V}|\mu_x|\right)
			  \int_0^\delta\sqrt{B_2^{\mathrm{loss}}(u)}\,du+\sum_{x\in V}|U_x|\int_0^\delta\sqrt{B_4^{\mathrm{loss}}(u)}\,du.
		\end{aligned}
		\end{equation}
		Then
		\begin{equation}\label{eq:global-preparation-error}
			\|T_\delta-\mathcal E_{\delta,M}\|_\diamond
			\le\varepsilon_{\mathrm{prep}}(\delta)+2\sqrt{q_{\delta,M}}+q_{\delta,M}.
		\end{equation}
	\end{prop}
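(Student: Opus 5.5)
By the triangle inequality we split $T_\delta-\mathcal E_{\delta,M}=(T_\delta-T_\delta^{\mathrm{loss}})+(T_\delta^{\mathrm{loss}}-\mathcal E_{\delta,M})$ and bound each part in diamond norm. Stability under a finite reference system costs nothing, since all generators act trivially on $\cH_{\mathrm{ref}}$ and the moment bounds are unchanged (as in Corollary~\ref{cor:local-channel-approximation}). So it suffices to test against states $\rho$ on $\cH_V\otimes\cH_{\mathrm{ref}}$.

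\textbf{First term.} We use the Duhamel identity comparing the full generator with the pure-loss generator, justified on finite total-number sectors exactly as in Equation~\eqref{eq:adaptive-local-duhamel}:
\begin{equation*}
T_\delta(\rho)-T_\delta^{\mathrm{loss}}(\rho)=\int_0^\delta T_{\delta-u}\bigl(-i[H,T_u^{\mathrm{loss}}(\rho)]\bigr)\,du .
\end{equation*}
Since $T_{\delta-u}$ is a trace-norm contraction, it suffices to bound $\|[H,\sigma_u]\|_1\le 2\|H\sigma_u\|_1$ termwise, where $\sigma_u=T_u^{\mathrm{loss}}(\rho)$.
\begin{itemize}
\item For a hopping term, the Cauchy--Schwarz step of Lemma~\ref{lem:hopping-truncation-error} without truncation gives $\|a_x^\dagger a_y\sigma\|_1\le \tr[(N_x+\1)(N_y+\1)\sigma]^{1/2}$. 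Using $(N_x+\1)(N_y+\1)\le\frac12((N_x+\1)^2+(N_y+\1)^2)$, this is controlled by second moments.
\item The $\mu_xN_x$ terms likewise need second moments.
\item The $U_xN_x(N_x-1)$ terms need fourth moments.
\end{itemize}
The loss semigroup factorizes over sites. Applying Proposition~\ref{main-prop:sobolev-regularizing} to it with $J_*=0$ on a single site, where the weight is trivial, gives $\tr[(N_x+\1)^k\sigma_u]\le B_k^{\mathrm{loss}}(u)$. This holds because $\Delta_{\ell,k}$ reduces to $\gamma_{\max}(\ell+1)\ell/\gamma_{\min}$. This should be compared with $\ell(\ell+1)$; a per-site normalization of the rates, or the single-mode version of the cited estimate, is the point to check. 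Summing over edges and sites yields $\varepsilon_{\mathrm{prep}}(\delta)$, with the stated factors $4J_*$, $2|\mu_x|$ and $|U_x|$ coming from two terms per edge and the commutator factor $2$. Integrability near $u=0$ is not needed, because the formula leaves the integral as is.

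\textbf{Second term.} Write $\sigma=T_\delta^{\mathrm{loss}}(\rho)$ and $P=P_V^{(M)}$. Then
\begin{equation*}
\sigma-\mathcal E_{\delta,M}(\rho)=(\1-P)\sigma P+P\sigma(\1-P)+(\1-P)\sigma(\1-P)-\tr[(\1-P)\sigma]\ketbra{0_V}{0_V}.
\end{equation*}
The two off-diagonal blocks each have trace norm at most $\tr[(\1-P)\sigma]^{1/2}$ by Cauchy--Schwarz. The last two terms together contribute at most $2\tr[(\1-P)\sigma]$. A sharper decomposition, or grouping $\sigma-P\sigma P$ via the gentle-measurement bound, should match the stated $2\sqrt q+q$; otherwise minor constant adjustment is needed.

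It remains to bound $q=\tr[(\1-P)\sigma]$ by a union bound and Markov's inequality. We have $\1-P\le\sum_x(\1-P_x^{(M)})$, and on the range of $\1-P_x^{(M)}$ one has $N_x+\1\ge M+1$. Markov's inequality then gives $\tr[(\1-P_x^{(M)})\sigma]\le B_k^{\mathrm{loss}}(\delta)/(M+1)^k$, and taking the infimum over $k$ together with the trivial bound $1$ gives $q_{\delta,M}$ (up to $M+1$ versus $M+2$, which is a bookkeeping detail of the strict threshold).

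The main obstacle is rigor rather than estimates. One must justify the Duhamel formula for the unbounded $H$ on arbitrary inputs; this is handled by the finite-sector plus dominated-convergence argument of the main proof. One must also match the exact constants in $\varepsilon_{\mathrm{prep}}$ and $B_k^{\mathrm{loss}}$ to the single-site moment bound.
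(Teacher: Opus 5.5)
Your proposal is correct and follows essentially the same route as the paper. Both split through $T_\delta^{\mathrm{loss}}$, bound the Hamiltonian part by Duhamel, trace-norm contractivity, Cauchy--Schwarz and per-site loss moments, and control the occupation test by Markov's inequality plus a union bound. The three constant points you flag all resolve as you suspect:
\begin{itemize}
\item Apply the one-site estimate with rate $\gamma_x$, so $\gamma_{\min}=\gamma_{\max}=\gamma_x$ there. This gives $\Delta_{\ell,k}=\ell(\ell+1)$, and $\gamma_x\ge\gamma_{\min}$ handles the singular term.
\item On $\operatorname{ran}(\1-P_x^{(M)})$ one has $N_x\ge M+1$, hence $N_x+\1\ge M+2$, which gives exactly the stated $(M+2)^k$.
\item Use the gentle-measurement bound $\|\sigma-P\sigma P\|_1\le2\sqrt q$ and then add the reset term $q$, as the paper does. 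Your four-block split gives at best $2\sqrt{q(1-q)}+2q$, which exceeds $2\sqrt q+q$ for small $q$.
\end{itemize}
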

	\begin{proof}
		Fix an arbitrary finite reference $\cH_{\mathrm{ref}}$ and a state $\rho\in\cT_f\otimes\cT_1(\cH_{\mathrm{ref}})$. Extend all system maps and operators by the identity on $\cH_{\mathrm{ref}}$.

		We first compare the full evolution with pure loss during preparation. Apply Proposition~\ref{main-prop:sobolev-regularizing} at each site with zero Hamiltonian and loss rate $\gamma_x$. Since $\gamma_x\ge\gamma_{\min}$, this gives $\tr[(N_x+\1)^kT_u^{\mathrm{loss}}(\rho)]\le B_k^{\mathrm{loss}}(u)$. Cauchy--Schwarz inequality and $(N_x+\1)N_y\le((N_x+\1)^2+(N_y+\1)^2)/2$ then give
		\begin{equation*}
		\begin{aligned}
			\|[H,T_u^{\mathrm{loss}}(\rho)]\|_1
			&\le\left(4J_*|E|+2\sum_{x\in V}|\mu_x|\right)\sqrt{B_2^{\mathrm{loss}}(u)}
			+\sum_{x\in V}|U_x|\sqrt{B_4^{\mathrm{loss}}(u)}.
		\end{aligned}
		\end{equation*}
		Duhamel's formula as in Equation~\eqref{eq:adaptive-local-duhamel} and contractivity give
		\begin{equation*}
		\begin{aligned}
			\|T_\delta(\rho)-T_\delta^{\mathrm{loss}}(\rho)\|_1
			&\le\int_0^\delta\|[H,T_u^{\mathrm{loss}}(\rho)]\|_1\,du
			\le\varepsilon_{\mathrm{prep}}(\delta).
		\end{aligned}
		\end{equation*}
		The dependence on $\delta$ follows from $\sqrt{x+y}\le\sqrt x+\sqrt y$, which gives
		\begin{equation*}
			\int_0^\delta\sqrt{B_j^{\mathrm{loss}}(u)}\,du
			\le\frac{2(\ell-1)}{2(\ell-1)-j}
			\left(\frac{2j}{(\ell-1)\gamma_{\min}\ell}\right)^{j/(2(\ell-1))}
			\delta^{1-j/(2(\ell-1))}+[\ell(\ell+1)]^{j/2}\delta,\qquad j=2,4.
		\end{equation*}
		Thus $\varepsilon_{\mathrm{prep}}(\delta)=\mathcal O(\delta^{(\ell-3)/(\ell-1)})$ as $\delta\downarrow0$. If all $U_x=0$, the exponent improves to $\frac{\ell-2}{\ell-1}$. Since $\ell>3$, the bound is finite and uniform in the input and extends in trace norm to arbitrary $\rho$.

		We next bound the error of the occupation test and vacuum reset. Put $\sigma=T_\delta^{\mathrm{loss}}(\rho)$. Markov's inequality and a union bound give, for every $k\ge1$,
		\begin{equation*}
			\tr[(\1-P_V^{(M)})\sigma]
			\le\sum_{x\in V}\tr[(\1-P_x^{(M)})\sigma]
			\le\frac{|V|B_k^{\mathrm{loss}}(\delta)}{(M+2)^k}.
		\end{equation*}
		Optimizing over $k$ and using that probabilities are at most one bounds this trace by $q_{\delta,M}$. The reset term is positive and has trace equal to this failure probability, also in the presence of reference space. The above estimate and $2\sqrt q+q\le3\sqrt q$ for $0\le q\le1$ therefore give, for every $k\ge1$,
		\begin{equation*}
		\begin{aligned}
			\|T_\delta^{\mathrm{loss}}(\rho)-\mathcal E_{\delta,M}(\rho)\|_1
			&\le\|\sigma-P_V^{(M)}\sigma P_V^{(M)}\|_1
			+\tr[(\1-P_V^{(M)})\sigma]\\
			&\le2\sqrt{q_{\delta,M}}+q_{\delta,M}\\
			&\le\frac{3\sqrt{|V|}}{(M+2)^{k/2}}
			\left[\left(\frac{2k}{(\ell-1)\gamma_{\min}\ell\delta}\right)^{k/(2(\ell-1))}
			+[\ell(\ell+1)]^{k/2}\right].
		\end{aligned}
		\end{equation*}
		Finally, taking the supremum over joint states and finite references and applying the triangle inequality gives
		\begin{equation*}
		\begin{aligned}
			\|T_\delta-\mathcal E_{\delta,M}\|_\diamond
			&\le\|T_\delta-T_\delta^{\mathrm{loss}}\|_\diamond
			+\|T_\delta^{\mathrm{loss}}-\mathcal E_{\delta,M}\|_\diamond\\
			&\le\varepsilon_{\mathrm{prep}}(\delta)+2\sqrt{q_{\delta,M}}+q_{\delta,M},
		\end{aligned}
		\end{equation*}
		which is Equation~\eqref{eq:global-preparation-error}.
	\end{proof}

	The on-site Hamiltonian preserves occupation, the jumps lower it, and the truncated hopping preserves $\cH_M$. Thus $\widehat{\cL}_M\coloneqq\cL_{V,M}|_{\cT_1(\cH_M)}$ is a finite-dimensional Lindblad generator, with Hamiltonian $P_V^{(M)}HP_V^{(M)}$ and jumps $\sqrt{\gamma_x}P_x^{(M)}a_x^\ell P_x^{(M)}$. Choose $b=\lceil\log_2(M+1)\rceil$, so that $M<2^b$. The binary expansion of each $0\le n\le M$ defines its encoding into $b$ qubits per site:
	\begin{equation*}
		n=\sum_{j=0}^{b-1}2^j\nu_j,\qquad \nu_j\in\{0,1\},\qquad
		\ket n\longmapsto\ket{\nu_{b-1}\cdots\nu_0}.
	\end{equation*}
	Strings encoding occupations above $M$ are invalid. Each Hamiltonian term and jump vanishes on invalid strings on its own support, so the encoded subspace is invariant. Decoding reverses the encoding on valid strings and maps invalid outputs to the vacuum.

	\begin{thm}[Finite-register simulation of arbitrary inputs]\label{thm:global-finite-register}
		Let $0<\delta<t$, $p>2$, $0<\eta\le1/2$, and let $M\ge2$ be an integer. Suppose the decoded circuit channel $\Psi_{t-\delta}$ approximates $e^{(t-\delta)\widehat{\cL}_M}$ to diamond error $\eta$. Then
		\begin{equation}\label{eq:global-bh-unrestricted-error}
		\begin{aligned}
			\|T_t-\Psi_{t-\delta}\mathcal E_{\delta,M}\|_\diamond
			&\le\varepsilon_{\mathrm{prep}}(\delta)+2\sqrt{q_{\delta,M}}+q_{\delta,M}+\eta+2C_pJ_*|E|M^{-(p-2)/2}
			\int_\delta^t b_p(u)^{p/2}\,du,
		\end{aligned}
		\end{equation}
		where $C_p\le4$ is given in Lemma~\ref{lem:hopping-truncation-error} and $b_p$ in \eqref{eq:aux-bp-fct}. For $0<\varepsilon_\diamond\le1$, the error is at most $\varepsilon_\diamond$ if
		\begin{equation}\label{eq:global-simulation-accuracy-choices}
		\begin{aligned}
			0<\delta&\le\min\left\{\frac t2,
				\left(\frac{\varepsilon_\diamond}{4(1+\varepsilon_{\mathrm{prep}}(1))}\right)^{(\ell-1)/(\ell-3)}\right\},
			\qquad 0<\eta\le\frac{\varepsilon_\diamond}{4},\\
			M&\ge\max\Biggl\{\biggl(\frac{144|V|B_p^{\mathrm{loss}}(\delta)}{\varepsilon_\diamond^2}\biggr)^{1/p}-2,\biggl(\frac{8C_pJ_*|E|}{\varepsilon_\diamond}
			\int_\delta^t b_p(u)^{p/2}\,du\biggr)^{2/(p-2)}\Biggr\}.
			\end{aligned}
		\end{equation}
		In particular, choosing $p=2(\ell-1)$ makes the dependence on $t$, $\delta$, and $M$ explicit: for $0<\delta\le1$ and $\delta<t$,
		\begin{equation*}
		\begin{aligned}
			\|T_t-\Psi_{t-\delta}\mathcal E_{\delta,M}\|_\diamond
			&\le\eta+\varepsilon_{\mathrm{prep}}(1)\delta^{(\ell-3)/(\ell-1)}+\mathcal O\!\left(\frac{\sqrt{|V|}}{\delta M^{\ell-1}}
			+\frac{J_*|E|}{M^{\ell-2}}\left[t-\delta+\log\frac t\delta\right]\right).
		\end{aligned}
		\end{equation*}
		The implicit constant depends only on $\ell$, $\gamma_{\min}$, $\gamma_{\max}$, $J_*$, $\Delta_G$, and the fixed Sobolev weight.

		For fixed $t$, $p$, $\varepsilon_\diamond$, graph, and model parameters, the cutoff conditions permit $M=\mathcal O(\delta^{-1/(\ell-1)})$ as $\delta\downarrow0$. Thus at most $\log_2(1/\delta)/(\ell-1)+\mathcal O(1)$ data qubits per site suffice.
	\end{thm}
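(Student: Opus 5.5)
The plan is a triangle inequality through the intermediate channel $e^{(t-\delta)\cL_{V,M}}\mathcal E_{\delta,M}$, where $\cL_{V,M}$ is the locally truncated generator \eqref{eq:local-truncated-generator} on $R=V$ with constant cutoff $M$. We split
\begin{equation*}
T_t-\Psi_{t-\delta}\mathcal E_{\delta,M}
=T_{t-\delta}(T_\delta-\mathcal E_{\delta,M})
+(T_{t-\delta}-e^{(t-\delta)\cL_{V,M}})\mathcal E_{\delta,M}
+(e^{(t-\delta)\cL_{V,M}}-\Psi_{t-\delta})\mathcal E_{\delta,M}.
\end{equation*}
The first term is bounded by contractivity of $T_{t-\delta}$ together with Proposition~\ref{prop:global-onsite-preparation}, which gives $\varepsilon_{\mathrm{prep}}(\delta)+2\sqrt{q_{\delta,M}}+q_{\delta,M}$. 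For the third term we note that $\mathcal E_{\delta,M}$ outputs states on $\cH_M$. On $\cT_1(\cH_M)$ the evolution $e^{(t-\delta)\cL_{V,M}}$ coincides with $e^{(t-\delta)\widehat\cL_M}$, because $\cH_M$ is invariant. Encoding and decoding are exact on valid strings, so the assumption on $\Psi_{t-\delta}$ contributes $\eta$. Every bound is taken with a finite reference system adjoined, exactly as in Corollary~\ref{cor:local-channel-approximation}.

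The middle term is the main step. I would use the Duhamel identity \eqref{eq:adaptive-local-duhamel} with $R=V$ and a constant schedule. Because there are no boundary edges, only the hopping tails remain. Integrating over $s\in[0,t-\delta]$, these tails are paired with $T_s$ applied to $\sigma=\mathcal E_{\delta,M}(\rho)$. Lemma~\ref{lem:hopping-truncation-error} then bounds the integrand by $C_pJ_*|E|M^{-(p-2)/2}\sup_z\tr[(N_z+\1)^pT_s\sigma]^{1/2}$. The obstacle is that $\sigma$ has already evolved under pure loss for time $\delta$, not under $T_\delta$, and the reset adds vacuum weight. This means we cannot directly apply the moment bound $b_p(s+\delta)$.

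I would try two routes. The first is to reorder the Duhamel comparison so that the tails are paired with the \emph{full} evolution, writing $T_{t-\delta}-e^{(t-\delta)\cL_{V,M}}=\int_0^{t-\delta} e^{(t-\delta-s)\cL_{V,M}}(\cL-\cL_{V,M})T_s\,ds$ and applying it to $T_\delta\rho$. In that case $T_sT_\delta\rho=T_{s+\delta}\rho$ has moments at most $b_p(s+\delta)^p$ by Proposition~\ref{main-prop:sobolev-regularizing}, which produces $\int_\delta^t b_p(u)^{p/2}du$. The price is replacing $\mathcal E_{\delta,M}\rho$ by $T_\delta\rho$ in the middle term. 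I would handle this by ordering the triangle inequality as $T_{t-\delta}T_\delta\to e^{(t-\delta)\cL_{V,M}}T_\delta\to e^{(t-\delta)\cL_{V,M}}\mathcal E_{\delta,M}$, so that the preparation error appears once, through contractivity of the truncated QMS. The second route gives the factor $2$ in the statement: run the comparison in the other order as well, or bound positive and negative parts separately, with $\rho$ split as in Lemma~\ref{lem:hopping-truncation-error}. That lemma needs $\sigma\ge0$, and real and imaginary parts of the observable double the bound.

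For the parameter choices in \eqref{eq:global-simulation-accuracy-choices}, I would make each term at most $\varepsilon_\diamond/4$. First, $\varepsilon_{\mathrm{prep}}(\delta)\le\varepsilon_{\mathrm{prep}}(1)\delta^{(\ell-3)/(\ell-1)}$ for $\delta\le1$, using the explicit integral bound in the proof of Proposition~\ref{prop:global-onsite-preparation}. Second, $3\sqrt q\le\varepsilon_\diamond/4$ holds once $|V|B_p^{\mathrm{loss}}(\delta)/(M+2)^p\le\varepsilon_\diamond^2/144$. Third, the tail term is at most $\varepsilon_\diamond/4$ under the second condition on $M$.

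For the explicit form I would take $p=2(\ell-1)$. Then $b_p(u)^{p/2}=\mathcal O(u^{-1}+1)$, which gives $t-\delta+\log(t/\delta)$, and $\sqrt{B^{\mathrm{loss}}_{2(\ell-1)}(\delta)}/M^{\ell-1}=\mathcal O(\delta^{-1}M^{-(\ell-1)})$. Finally, $M=\mathcal O(\delta^{-1/(\ell-1)})$ follows from the first cutoff condition. The second condition only grows logarithmically in $\delta$, so it is dominated for small $\delta$. This gives $\log_2(M+1)\le\log_2(1/\delta)/(\ell-1)+\mathcal O(1)$ qubits per site.
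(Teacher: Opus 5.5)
Your first route is exactly the paper's proof: the chain $T_t\to e^{(t-\delta)\cL_{V,M}}T_\delta\to e^{(t-\delta)\cL_{V,M}}\mathcal E_{\delta,M}\to\Psi_{t-\delta}\mathcal E_{\delta,M}$, with the Duhamel identity on $[\delta,t]$ paired against $T_u\rho$ so that $\int_\delta^t b_p(u)^{p/2}\,du$ appears, contractivity of the truncated QMS for the preparation error, and invariance of $\cH_M$ for the circuit error; your parameter bookkeeping also matches. Two small corrections. First, the factor $2$ in the tail term is just the sum over the two endpoints $z\in e$ in Lemma~\ref{lem:hopping-truncation-error}, which you replaced by $\sup_z$. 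It does not come from a reordering or a real/imaginary split, and no such split is needed, because contractivity bounds the trace norm directly and $T_u\rho\ge0$. Second, for $p>2(\ell-1)$ the second cutoff condition grows like $\delta^{-(p-2(\ell-1))/((p-2)(\ell-1))}$ rather than logarithmically, but this is still dominated by $\delta^{-1/(\ell-1)}$.
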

	\begin{proof}
		Figure~\ref{fig:global-simulation-comparison} summarizes the three approximations. The Duhamel identity \eqref{eq:adaptive-local-duhamel}, with $R=V$ and constant cutoff $M$, gives on $[\delta,t]$
		\begin{equation*}
			T_t-e^{(t-\delta)\cL_{V,M}}T_\delta
			=\int_\delta^t e^{(t-u)\cL_{V,M}}
			(\cL-\cL_{V,M})T_u\,du.
		\end{equation*}
		Lemma~\ref{lem:hopping-truncation-error} and Proposition~\ref{main-prop:sobolev-regularizing} bound its diamond norm by the integral term in Equation~\eqref{eq:global-bh-unrestricted-error}. As in Corollary~\ref{cor:local-channel-approximation}, the moment estimates are uniform in a finite reference. Since $\delta>0$, the integral is finite for every $p>2$. Now insert $e^{(t-\delta)\cL_{V,M}}\mathcal E_{\delta,M}$. Contractivity and Proposition~\ref{prop:global-onsite-preparation} bound the preparation error, while invariance of $\cH_M$ identifies this channel with $e^{(t-\delta)\widehat{\cL}_M}\mathcal E_{\delta,M}$. The circuit error adds at most $\eta$.

		For $0<\delta\le1$, changing variables in Equation~\eqref{eq:global-preparation-quantities} gives $\varepsilon_{\mathrm{prep}}(\delta)\le\delta^{(\ell-3)/(\ell-1)}\varepsilon_{\mathrm{prep}}(1)$. The choices in Equation~\eqref{eq:global-simulation-accuracy-choices} therefore make the preparation error at most $\varepsilon_\diamond/4$ and give $q_{\delta,M}\le(\varepsilon_\diamond/12)^2$. Since $2\sqrt q+q\le3\sqrt q$ for $0\le q\le1$, each of the projection, hopping, and circuit errors is also at most $\varepsilon_\diamond/4$.

		To make the time dependence of the hopping error explicit, Equation~\eqref{eq:aux-bp-fct} gives
		\begin{equation*}
			b_p(u)^{p/2}\le2^{p/2-1}\sqrt{Z_\kappa^*}\left[
				\left(\frac{2p}{(\ell-1)\gamma_{\min}\ell u}\right)^{p/(2(\ell-1))}
				+\Delta_{\ell,p}^{p/2}\right].
		\end{equation*}
		The second term in brackets integrates to $\Delta_{\ell,p}^{p/2}(t-\delta)$, while
		\begin{equation*}
			\int_\delta^t u^{-p/(2(\ell-1))}\,du
			=\begin{cases}
			\displaystyle\frac{t^{1-p/(2(\ell-1))}-\delta^{1-p/(2(\ell-1))}}
			{1-p/(2(\ell-1))},&p\ne2(\ell-1),\\[6pt]
			\log(t/\delta),&p=2(\ell-1).
			\end{cases}
		\end{equation*}
		For $p=2(\ell-1)$, this gives the stated asymptotic error bound, since $B_p^{\mathrm{loss}}(\delta)^{1/2}=\mathcal O(\delta^{-1})$ for $0<\delta\le1$. For general $p>2$, these estimates and $B_p^{\mathrm{loss}}(\delta)^{1/p}=\mathcal O(\delta^{-1/(\ell-1)})$ show that both cutoff conditions admit $M=\mathcal O(\delta^{-1/(\ell-1)})$ for fixed parameters. Taking the logarithm gives the stated qubit count.
	\end{proof}

	We now construct the circuit required in Theorem~\ref{thm:global-finite-register} and bound its resources using \cite[Thm.~1.2]{ChenGaoWangZhou.2026}. A circuit consists of one- and two-qubit gates, with auxiliary qubits that may be initialized and discarded. We allow gates between any pair of qubits.
    
    We first describe the unitary encodings needed by the algorithm. A \emph{block encoding} represents an operator, divided by a normalization factor, as a block of a unitary acting on the system and auxiliary qubits. The combined jump map $\sum_{x\in V}\ket x\otimes\sqrt{\gamma_x}P_x^{(M)}a_x^\ell P_x^{(M)}$ is encoded in a rectangular block, with an extra output register recording the site $x$. The cutoff Hamiltonian $P_V^{(M)}HP_V^{(M)}$ on $\cH_M$ and the above map admit encodings with normalization factors $\alpha_H$ and $\alpha_B$, where
    \begin{equation}\label{eq:global-encoding-normalizations}
        \begin{aligned}
            \alpha_H&=1+\sum_{x\in V}\max_{n=0,\ldots,M}
            \left|\frac{U_x}{2}n(n-1)-\mu_xn\right|+2J_*|E|M,\\
            \alpha_B^2&=1+M^\ell\sum_{x\in V}\gamma_x.
        \end{aligned}
    \end{equation}
    To construct these encodings, we use the action of the local terms in the occupation basis. The on-site Hamiltonian is diagonal, hopping changes two occupations, and loss changes one. A site or edge register controls which local operator is encoded. Reversible arithmetic computes its matrix elements, and controlled rotations and occupation shifts implement the encoding. The individual jump encodings are combined as in \cite[Rem.~1.3]{ChenGaoWangZhou.2026}. The Hamiltonian encoding unitary can be made Hermitian using one additional auxiliary qubit, as required by the theorem \cite[Sec.~1.1]{ChenGaoWangZhou.2026}.
    
    We next count the calls and gates used by the simulation algorithm. With $\tau=(t-\delta)(\alpha_H+\alpha_B^2)$, the cited theorem, applied with diamond error $\eta/2$, uses
    \begin{equation}\label{eq:chen-lindblad-queries}
        N_{\mathrm{qry}}=\mathcal O\!\left(\tau+
        \frac{\log(1/\eta)}{\log(e+\log(1/\eta)/\tau)}\right)
    \end{equation}
    calls to the encoding unitaries, their inverses, or controlled versions. If each call uses at most $g_{\mathrm{enc}}$ gates and the two encodings together use $a_{\mathrm{anc}}$ auxiliary qubits, the total gate count is
    \begin{equation}\label{eq:chen-lindblad-gates}
        \mathcal O\!\left(N_{\mathrm{qry}}\left[g_{\mathrm{enc}}+a_{\mathrm{anc}}
        +\log\!\left(\frac{e(1+\tau)}{\eta}\right)
        \left(\log(|V|+1)+\log\!\left(\frac{e(1+\tau)}{\eta}\right)\right)\right]\right).
    \end{equation}

    It remains to account for the cost and precision of the encodings. We approximate each encoding unitary in norm to accuracy $\eta/[4(1+N_{\mathrm{qry}})]$. For coefficients supplied classically to the required precision, selecting local terms and computing their amplitudes reversibly gives
    \begin{equation*}
        g_{\mathrm{enc}}+a_{\mathrm{anc}}
        =\mathcal O\!\left((|V|+|E|)\operatorname{poly}\!\left(
        \ell,\log(M+1),\log\!\frac{e(|V|+|E|)(1+N_{\mathrm{qry}})}{\eta}\right)\right).
    \end{equation*}
    A telescoping sum bounds the additional diamond error by $2N_{\mathrm{qry}}\eta/[4(1+N_{\mathrm{qry}})]\le\eta/2$. Together with the algorithmic error, this gives total diamond error at most $\eta$. Decoding does not increase this error and yields the channel $\Psi_{t-\delta}$ required in Theorem~\ref{thm:global-finite-register}.

    The circuit uses $|V|\lceil\log_2(M+1)\rceil$ data qubits, in addition to auxiliary qubits. The gate count covers the qubit simulation and excludes physical loss preparation, occupation tests, and coherent transfer between bosons and qubits.

    Finally, locality reduces the cost when only the output on $X$ is needed. Apply the same construction to $R=X[r]$ and add the locality error from Corollary~\ref{cor:local-channel-approximation}. All graph sizes and coefficient sums are then restricted to $R$. Since $|E(R)|\le\Delta_G|R|/2$, the cost depends on the simulated region and its coefficients, independently of the ambient volume.

\section{Discussion}\label{sec:discussion}
    We show that multiphoton loss restores operator-norm locality for the dissipative Bose--Hubbard model with $\ell>2$, uniformly over all input states. Moment regularization supplies the occupation bounds, while an adaptive cutoff controls their singularity at time zero. Together with a bounded-interaction Lieb--Robinson estimate insensitive to on-site terms, this gives local channel approximation and a quantitative thermodynamic limit without assumptions on the initial moments.
    
    The extension to quartic interactions with shifted four-photon loss gives a local adiabatic approximation for states in the cat-code space, uniformly in the volume. For $\ell>3$, physical loss preparation and an occupation test also yield a hybrid bosonic--qubit simulation scheme for arbitrary inputs, with explicit diamond-norm error bounds.

    An important open question is whether two-photon loss also yields locality uniformly over all input states. A possible approach is a short-time comparison that controls hopping and dissipative damping together. Such an estimate would need to overcome the nonintegrable singularity produced by the present moment bounds.

    Our bounds give almost-ballistic propagation at fixed accuracy, with radii growing as $t^{1+\eta}$ for every $\eta>0$. A finite-velocity exponential light cone remains open, since the constants grow with the moment order and do not permit taking $p\to\infty$ directly.

\vspace{2ex}
\emph{Acknowledgments.} We would like to thank Frank Verstraete for valuable feedback on the topic.

\section*{Declarations}

\emph{Funding.} This work was funded by the ERC Starting Grant ``MathQuantProp'' No.\ 101163620 (ML)\footnote{Views and opinions expressed are however those of the authors only and do not necessarily reflect those of the European Union or the European Research Council Executive Agency. Neither the European Union nor the granting authority can be held responsible for them}. Additional funding came from the Deutsche Forschungsgemeinschaft (DFG, German Research Foundation), Project-ID 470903074, TRR 352 (ML, TM), Project-ID 465199066 (ML), and Project-ID 575156903 (TM). The QuantERA II Programme received funding from the EU's H2020 research and innovation programme under GA No.\ 101017733 (TM).

\emph{Competing interests.} The authors declare that they have no competing interests relevant to this article.

\emph{Data availability.} No datasets were generated or analysed during this study.

\emph{AI declaration.}
The idea for this paper and the main steps of the proof of the central result were developed in discussions between the authors. OpenAI's Sol and Astra models, accessed through ChatGPT/Codex, assisted with finalizing proof details, proofreading, improving the clarity and consistency of the writing, and producing TikZ code for figures illustrating the authors' arguments. The authors verified the code and figures and take full responsibility for the mathematical arguments and the final manuscript.

\sloppy
\bibliographystyle{IEEEtran}
\bibliography{bibliography_v4}

\appendix
\addtocontents{toc}{\protect\setcounter{tocdepth}{1}}
\addcontentsline{toc}{section}{Appendices}
\addtocontents{toc}{\protect\setcounter{tocdepth}{0}}

\section{Sobolev Regularization and Moment Propagation}\label{sec:sobolev}
	\begin{prop}[Dissipative Bose--Hubbard]\label{prop:sobolev-regularizing}
		Let $(\cL,\cT_f)$ be the dissipative Bose--Hubbard model \eqref{eq:lindblad} on a lattice $(V,E)$ of uniformly bounded maximum degree $\Delta_G$ and hopping coefficients $J_*=\sup_{(i,j)\in E}|J_{ij}|<\infty$ for any $\ell\ge2$ and loss rates $0<\gamma_{\min}\le\gamma_j\le\gamma_{\max}<\infty$. Then the closure of $(\cL,\cT_f)$ generates a QMS on $\cT_1$, which is Sobolev preserving and regularizing in the senses of Definitions~\ref{def:multimode-sobolev-preserving-qms} and~\ref{def:multimode-sobolev-regularizing-qms}. For every $k\ge1$, $v\in V$, and $t>0$,
		\begin{equation*}
			\|e^{t\cL}\|_{1\to W_v^{k,1}}\le\left(\frac{2k}{(\ell-1)\gamma_{\min}\ell t}\right)^{k/(\ell-1)}+\Delta_{\ell,k}^k,
		\end{equation*}
		where
		\begin{equation*}
			\Delta_{\ell,k}=\frac{2}{\gamma_{\min}\ell}\left(\gamma_{\max}\frac{(\ell+1)\ell^2}{2}+2^kkJ_*\Delta_G(1+e^\kappa)\right).
		\end{equation*}
		The constants are independent of the finite volume and center $v$.
	\end{prop}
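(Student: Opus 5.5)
The plan is to verify a concentrated Sobolev regularization condition for the generator and then invoke the abstract machinery of \cite[Cor.~3.8 and Thm.~3.12]{Rico.2025}. This gives generation of a QMS on $\cT_1$, Sobolev preservation, and Sobolev regularization. The central estimate is a drift inequality of the form
\begin{equation*}
	\tr[M_{v,k}\cL(\rho)]\le-\mu\,\tr[M_{v,k+\ell-1}\rho]+c_k\tr[\rho],\qquad \rho\in\cT_f,\ \rho\ge0,
\end{equation*}
with $\mu=\gamma_{\min}\ell/2$ (up to the precise normalisation) and $c_k$ independent of the volume and of $v$. I would compute each contribution to $\cL^\dagger(M_{v,k})=\sum_i w_{v,i}\cL^\dagger((N_i+\1)^k)$ separately in the number basis, where $w_{v,i}=e^{-\kappa\,\mathrm{dist}(v,i)}/Z_v$.

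\textbf{On-site terms.} The on-site Hamiltonian commutes with $N_i$ and drops out. The loss term gives the diagonal function
\begin{equation*}
	\gamma_i\,\tfrac{n!}{(n-\ell)!}\bigl[(n-\ell+1)^k-(n+1)^k\bigr].
\end{equation*}
For $n\ge\ell$ this is bounded above by $-\gamma_i k\ell\, n^{\ell-1}(n+1)^{k-1}$ times a constant, which in turn is at most $-\gamma_{\min}\ell(n+1)^{k+\ell-1}/2$ plus a polynomial remainder of lower degree. Absorbing the finitely many small $n$, and using the constant $\gamma_{\max}(\ell+1)\ell^2/2$, this produces exactly the first part of $\Delta_{\ell,k}$.

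\textbf{Hopping terms.} For each edge $(i,j)$ I would compute $i[H,(N_i+\1)^k]$. Its matrix elements involve $a_i^\dagger a_j$ times the difference $(n_i+2)^k-(n_i+1)^k\le k2^k(n_i+1)^{k-1}$. Cauchy--Schwarz then bounds this by $k2^kJ_*\bigl((N_i+\1)^k+(N_j+\1)^k\bigr)$ in form sense. Moving the weight from $i$ to the neighbour $j$ costs $w_{v,i}\le e^\kappa w_{v,j}$. Summing over at most $\Delta_G$ neighbours gives a term at most $2^kkJ_*\Delta_G(1+e^\kappa)\tr[M_{v,k}\rho]$.

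\textbf{Closing the estimate.} Young's inequality absorbs this hopping term into half of the dissipative gain $\tr[M_{v,k+\ell-1}\rho]$, at the cost of a constant of order $\Delta_{\ell,k}^{k}$. This is where the second part of $\Delta_{\ell,k}$ comes from.

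Given the drift inequality, set $m(t)=\tr[M_{v,k}e^{t\cL}\rho]$. Jensen's inequality (or Hölder in the weighted measure) gives
\begin{equation*}
	\tr[M_{v,k+\ell-1}\rho]\ge m^{(k+\ell-1)/k}.
\end{equation*}
The Bernoulli-type differential inequality $m'\le-\mu' m^{1+(\ell-1)/k}+C$ then integrates to $m(t)\le\bigl(k/((\ell-1)\mu' t)\bigr)^{k/(\ell-1)}+\Delta^k_{\ell,k}$, independently of $m(0)$. This is the stated bound, and it extends to all of $\cT_1$ by density.

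\textbf{Main obstacle.} I expect the hardest step to be rigorously justifying the differential inequality for unbounded moments, i.e.\ the domain and core issues. I would handle it by citing Thm.~3.12 of \cite{Rico.2025}, which converts the drift condition directly into generation and regularization bounds. The explicit constants, in particular the $2^k k$ in the hopping commutator, need careful bookkeeping but are routine.
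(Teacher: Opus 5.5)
Your proposal is correct and is essentially the paper's proof. It uses the same on-site loss and hopping commutator estimates, the same weight transfer $q_w\le e^\kappa q_j$ over at most $\Delta_G$ neighbours, and the same absorption via $\sup_{x\ge0}(-x^\nu+\Delta_{\ell,k}x^{\nu-1})\le\Delta_{\ell,k}^\nu$ with $\nu=k+\ell-1$, before invoking \cite[Cor.~3.8, Thm.~3.12]{Rico.2025}; your Jensen/Bernoulli argument recovers exactly the stated regularization constants.

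There are two bookkeeping slips to fix:
\begin{itemize}
\item For $n\ge\ell$ the loss gain is $\ell\,n(n-1)\cdots(n-\ell+1)(n+1)^{k-1}$, which has degree $k+\ell-1$. What you wrote, $n^{\ell-1}(n+1)^{k-1}$, is one degree short.
\item The additive drift constant is $c_k=\tfrac{\gamma_{\min}\ell}{2}\Delta_{\ell,k}^{k+\ell-1}$, not something of order $\Delta_{\ell,k}^k$. The term $\Delta_{\ell,k}^k=(c_k/\mu)^{k/(k+\ell-1)}$ only appears after the ODE step.
\end{itemize}
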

	\begin{proof}
		First, the bound on the Hamiltonian part is given in Lemma~\ref{lem-aux:bose-hubbard-moment-propagation} for any $w\in V$ by
		\begin{equation*}
			i[H,f(N_w)]\leq 2^kkJ_*\sum_{j:(j,w)\in E}\left((N_w+\1)^{k}+(N_j+\1)^k\right)\,.
		\end{equation*}

		Second, the bound on the on-site photon-loss is given in Lemma~\ref{lem-aux:l-diss-moment-propagation} by
		\begin{equation*}
			\tr\big[\cL[a^\ell](\rho)(f(N))\big]\le -\ell\tr\big[\rho\,(N+\1)^{\ell+k-1}\,\big]+\frac{(\ell+1)\ell^2}{2}\tr\big[\rho\,(N+\1 )^{\ell+k-2}\big]\,.
		\end{equation*}

		Combining both bounds with respect to the concentrated Sobolev norm at $v$ gives the following weighted estimate. For neighboring sites $w\sim j$, the weights $q_z=e^{-\kappa\operatorname{dist}(v,z)}/Z_v$ satisfy $q_w\le e^\kappa q_j$ and $q_j\le e^\kappa q_w$, while each site has at most $\Delta_G$ neighbors. Hence the full generator satisfies
		\begin{equation*}
			\begin{aligned}
				\tr[\cW_v^k(\cL(\rho))]&=\sum_{w\in V} \frac{e^{-\kappa \, \mathrm{dist}(v,w)}}{Z_v}\tr\bigl[\cL(\rho)(N_w+\1)^{k}\bigr]\\
				&\leq\sum_{w\in V} \frac{e^{-\kappa \, \mathrm{dist}(v,w)}}{Z_v}\Bigl(-\gamma_{\min}\ell\tr\big[\rho\,(N_w+\1)^{\ell+k-1}\big]\\
				&\qquad\qquad\qquad\qquad\quad + \gamma_{\max}\frac{(\ell+1)\ell^2}{2}\tr\big[\rho\,(N_w+\1 )^{\ell+k-2}\big]\\
				&\qquad\qquad\qquad\qquad\quad +2^kkJ_*\Delta_G(1+e^\kappa)\tr[\rho(N_w+\1)^{k}]\Bigr)\\
				&\leq\sum_{w\in V} \frac{e^{-\kappa \, \mathrm{dist}(v,w)}}{Z_v}\Bigl(-\frac{\gamma_{\min}\ell}{2}\tr\big[\rho\,(N_w+\1)^{\ell+k-1}\big]-\frac{\gamma_{\min}\ell}{2}\tr\big[\rho\,(N_w+\1)^{\ell+k-1}\big] \\
				&\qquad\qquad\qquad\qquad\quad + \Bigl(\gamma_{\max}\frac{(\ell+1)\ell^2}{2}+2^kkJ_*\Delta_G(1+e^\kappa)\Bigr)\tr\big[\rho\,(N_w+\1 )^{\ell+k-2}\big]\Bigr)
			\end{aligned}
		\end{equation*}
		for all positive finite-rank states $\rho\in\cT_f$ with $\tr\rho=1$. Applying the scalar optimization on the spectrum of the number operators as in \cite[Lem.~S18]{Moebus.2024Learning} gives
		\begin{equation}\label{eq:optimization}
			\sup_{x\geq0}\left(-x^\nu+\Delta_{\ell,k} x^{\nu-1}\right)=\Delta_{\ell,k}^\nu\Bigl(\frac{(\nu-1)^{\nu-1}}{\nu^\nu}\Bigr)\leq\Delta_{\ell,k}^\nu
		\end{equation}
		for $\nu=k+\ell-1\geq1$ and $\Delta_{\ell,k} = \frac{2}{\ell\gamma_{\min}}\bigl(\gamma_{\max}\frac{(\ell+1)\ell^2}{2}+2^kkJ_*\Delta_G(1+e^\kappa)\bigr)$, which shows
		\begin{equation*}
			\begin{aligned}
				\tr[\cW_v^k(\cL(\rho))]&=\sum_{w\in V} \frac{e^{-\kappa \, \mathrm{dist}(v,w)}}{Z_v}\tr\bigl[\cL(\rho)(N_w+\1)^{k}\bigr]\\
				&\leq-\frac{\gamma_{\min}\ell}{2}\|\rho\|_{W^{k+\ell-1,1}_v}
				+\frac{\gamma_{\min}\ell}{2}\Delta_{\ell,k}^\nu
			\end{aligned}
		\end{equation*}
		These estimates hold for every $k\geq1$ for the considered generator and, by homogeneity, for every positive $\rho\in\cT_f$ with its constant term multiplied by $\tr\rho$. Defining
		\begin{equation*}
			\delta=\ell-1,\qquad \mu_k=\frac{\gamma_{\min}\ell}{2},\qquad
			c_k=\frac{\gamma_{\min}\ell}{2}\Delta_{\ell,k}^{k+\ell-1},
		\end{equation*}
		it gives
		\begin{equation*}
			\tr[M_{v,k}\cL(\rho)]
			\leq-\mu_k\tr[M_{v,k+\delta}\rho]+c_k\tr\rho
			\leq c_k\tr[M_{v,k}\rho],
		\end{equation*}
		since $M_{v,k}\geq\1$. The first inequality is the concentrated Sobolev regularization condition of \cite[Assumption~3.2]{Rico.2025} with $\delta=\ell-1$. The second is the concentrated Sobolev stability condition of \cite[Assumption~2.2]{Rico.2025}. The Hamiltonian is a symmetric polynomial on $\cH_f$, and the jump operators are polynomials. Thus \cite[Cor.~3.8]{Rico.2025}, applied with the unbounded set of orders $k\geq1$, generates the conservative QMS and its strongly continuous restrictions to $W_v^{k,1}$, with $\cT_f$ as a generator core and
		\begin{equation*}
			\|e^{t\cL}\|_{W_v^{k,1}\to W_v^{k,1}}\leq e^{c_kt}.
		\end{equation*}
		For a bounded degree $d=\max\{4,\ell\}$, \cite[Lem.~D.2]{Rico.2025} gives $W_v^{r,1}\subseteq\cD(\cL_{v,k})$ for $r\geq k+2d$, where $\cL_{v,k}$ denotes the weighted generator. Moreover, \cite[Thm.~3.12]{Rico.2025} shows Sobolev regularization with respect to the concentrated Sobolev norm:
		\begin{equation*}
			\| e^{t \cL} \|_{1 \to W^{k,1}_v} \le \left( \frac{2k}{(\ell-1)\gamma_{\min}\ell t} \right)^{\frac{k}{\ell-1}} +  \Delta_{\ell,k}^k\,.
		\end{equation*}
		The constants in the preservation and regularization bounds are independent of the finite volume and the center when the displayed local coefficient and degree bounds are uniform.
	\end{proof}

	The auxiliary moment bounds used above are the following.

	\begin{lem}\label{lem-aux:bose-hubbard-moment-propagation}
		Let $(H,\cH_f)$ be the Bose--Hubbard Hamiltonian on a lattice $(V,E)$ defined in Equation~\eqref{eq:bose-hubbard}. Then, for every $k\ge1$,
		\begin{align*}
			\begin{aligned}
				i[H,f(N_w)]&\leq 2^kk J_*\sum_{j:(j,w)\in E}\left((N_w+1)^{k}+(N_j+1)^k\right)\,,
			\end{aligned}
		\end{align*}
		as a quadratic-form inequality on $\cH_f$, for all $w\in V$ and $f(x)=(x+1)^{k}1_{x\ge -1}$, with $J_*=\sup_{(i,j)\in E}|J_{ij}|$.
	\end{lem}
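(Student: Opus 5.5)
The plan is to reduce the commutator to single-edge contributions and compute each one in the occupation basis. The on-site part of $H$ is a function of the $N_i$, so it commutes with $f(N_w)$. For the same reason, hopping terms on edges not containing $w$ commute with $f(N_w)$. Hence
\begin{equation*}
i[H,f(N_w)]=-i\sum_{j:(j,w)\in E}\bigl(J_{wj}[a_w^\dagger a_j,f(N_w)]+\overline{J_{wj}}[a_j^\dagger a_w,f(N_w)]\bigr).
\end{equation*}
Since $a_w^\dagger g(N_w)=g(N_w-\1)a_w^\dagger$ on $\cH_f$, each edge contributes $X_j+X_j^\dagger$ with
\begin{equation*}
X_j=-iJ_{wj}\,a_w^\dagger a_j\bigl(f(N_w)-f(N_w+\1)\bigr).
\end{equation*}
It therefore suffices to prove, for each neighbor $j$, the form bound $X_j+X_j^\dagger\le 2^kkJ_*\bigl((N_w+\1)^k+(N_j+\1)^k\bigr)$ on $\cH_f$.

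\textbf{Main step.} For this I would use the operator inequality $B^\dagger C+C^\dagger B\le \epsilon B^\dagger B+\epsilon^{-1}C^\dagger C$, with the operator factors split symmetrically between the two sites. Set $g(n)=f(n+1)-f(n)=(n+2)^k-(n+1)^k$. By the mean value theorem, $0\le g(n)\le k(n+2)^{k-1}\le k2^{k-1}(n+1)^{k-1}$. Write
\begin{equation*}
X_j=-iJ_{wj}\,a_w^\dagger\, g(N_w)^{1/2}\,\cdot\, g(N_w)^{1/2}a_j
\end{equation*}
(the factors commute appropriately), and choose
\begin{equation*}
B=g(N_w)^{1/2}a_w,\qquad C=-i\overline{J_{wj}}\,g(N_w)^{1/2}a_j,
\end{equation*}
so that $X_j=B^\dagger C$ up to sign conventions. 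Then
\begin{equation*}
B^\dagger B=a_w^\dagger g(N_w)a_w=N_w\,g(N_w-\1)\le k2^{k-1}(N_w+\1)^k.
\end{equation*}
For the other factor, $C^\dagger C=|J_{wj}|^2 g(N_w)N_j$.

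\textbf{Main obstacle.} This factor is mixed: $g(N_w)N_j\lesssim (N_w+\1)^{k-1}N_j$, which is not directly bounded by pure powers at each site. The fix is to take $\epsilon=|J_{wj}|$ and then apply Young's inequality for commuting operators, $x^{k-1}y\le \frac{k-1}{k}x^k+\frac1k y^k$ with $x=N_w+\1$ and $y=N_j+\1$. This yields
\begin{equation*}
X_j+X_j^\dagger\le |J_{wj}|\,k2^{k-1}\bigl(2(N_w+\1)^k+(N_j+\1)^k\bigr)\le 2^kkJ_*\bigl((N_w+\1)^k+(N_j+\1)^k\bigr).
\end{equation*}
Every operator involved preserves finite vectors, so all manipulations are legitimate quadratic-form statements on $\cH_f$. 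The only bookkeeping to watch is the shift $N_w\mapsto N_w-\1$ when commuting functions past $a_w$. The indicator $1_{x\ge-1}$ in $f$ plays no role, since it only guarantees that $f(N_w-\1)$ is well defined at $n=0$. Summing over the neighbors $j$ gives the lemma.
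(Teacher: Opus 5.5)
Your proof is correct and is essentially the paper's argument. The paper's Fock-basis block bound $\alpha\ket{e_1}\bra{e_2}+\overline{\alpha}\ket{e_2}\bra{e_1}\le|\alpha|(\ket{e_1}\bra{e_1}+\ket{e_2}\bra{e_2})$ is the Cauchy--Schwarz step you perform at operator level via $B^\dagger C+C^\dagger B\le\epsilon B^\dagger B+\epsilon^{-1}C^\dagger C$, followed by the same mean-value bound on $g$ and Young's inequality. The only differences are minor: you put $\sqrt{N_w}$ into $B$ and $\sqrt{N_j}$ into $C$, so Young enters with exponents $k/(k-1),k$ rather than $2k/(2k-1),2k$ and the diagonal sums need no reindexing; and with your $X_j$ the right choice is $C=iJ_{wj}\,g(N_w)^{1/2}a_j$, which is a harmless phase slip since only $C^\dagger C=|J_{wj}|^2g(N_w)N_j$ enters.
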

		\begin{proof}
			For the function $f(x)=(x+1)^{k}1_{x\ge -1}$, we use $a_wf(N_w)=f(N_w + \1)a_w$ to achieve
			\begin{align*}
				i[H,f(N_w)]&=-i\sum_{\substack{i,j\in V\\(i,j)\in E}}J_{ij}\left[a_i^\dagger\,a_j,\,f(N_w)\right]\\
				&=-i\sum_{j:(j,w)\in E}\left(J_{jw}a_wg(N_w)a^\dagger_j-J_{wj}a_jg(N_w)a^\dagger_w\right)
			\end{align*}
			for $g(x)=f(x)-f(x-1)$ and $J_{jw}=0$ for $(j,w)\notin E$. Next, we express the above discrete operator in the Fock basis and represent it using matrices of rank two:
			\begin{align*}
				i[H,f(N_w)]&=-i\sum_{j:(j,w)\in E}\sum_{n_w,n_j=0}^{\infty}J_{jw}g(n_w)a_w\ket{n_w}\bra{n_w}\otimes\ket{n_j}\bra{n_j}a^\dagger_j+ h.c.\\
				&=-i\sum_{j:(j,w)\in E}\sum_{n_w,n_j=1}^{\infty}J_{jw}g(n_w)\sqrt{n_w n_j}\,\ket{n_w-1,n_j}\bra{n_w,n_j-1}+ h.c.
			\end{align*}
			Interpreting each summand with $n_w,n_j\ge1$ as a $2\times2$ matrix defined by
			\begin{equation*}
				\begin{aligned}
					\alpha\ket{e_1}\bra{e_2}+\overline{\alpha}\ket{e_2}\bra{e_1}&=-iJ_{jw}g(n_w)\sqrt{n_w n_j}\ket{n_w-1,n_j}\bra{n_w,n_j-1}\\
					&\qquad+i\overline{J_{jw}}g(n_w)\sqrt{n_w n_j}\ket{n_w,n_j-1}\bra{n_w-1,n_j}\\
					&\leq |\alpha|(\ket{e_1}\bra{e_1}+\ket{e_2}\bra{e_2}).
				\end{aligned}
			\end{equation*}
			The last inequality follows from the eigenvalue decomposition, whose eigenvalues are $\pm|\alpha|$. Then,
			\begin{equation*}
				\begin{aligned}
					i[H,f(N_w)]&\leq\sum_{j:(j,w)\in E}\sum_{n_w,n_j=1}^{\infty}|J_{jw}|g(n_w)\sqrt{n_w n_j}\,\bigl(\ket{n_w\!-\!1,n_j}\bra{n_w\!-\!1,n_j}+\ket{n_w,n_j\!-\!1}\bra{n_w,n_j\!-\!1}\bigr)\\
					&\leq 2\sum_{j:(j,w)\in E}\sum_{n_w,n_j=0}^{\infty}|J_{jw}|g(n_w+1)\sqrt{(n_w+1)(n_j+1)}\ket{n_w,n_j}\bra{n_w,n_j}\\
					&\leq 2 k J_*\sum_{j:(j,w)\in E}\sum_{n_w,n_j=0}^{\infty}(n_w+2)^{k-1}\sqrt{(n_w+1)(n_j+1)}\ket{n_w,n_j}\bra{n_w,n_j}\,,
				\end{aligned}
			\end{equation*}
			using $g(x)\le k(x+1)^{k-1}$ for $x\ge0$ (see \cite[Lem.~C.2]{Gondolf.2024}). Young's inequality, i.e.
			\begin{equation*}
				ab\leq \frac{a^p}{p}+\frac{b^q}{q}\qquad\text{for all }a,b\geq0\,,
			\end{equation*}
			with conjugate exponents $p=\frac{2k}{2k-1}$ and $q=2k$, shows
			\begin{align*}
				\begin{aligned}
					i[H,f(N_w)]&\leq J_* 2^{k-1}\sum_{j:(j,w)\in E}\sum_{n_w,n_j=0}^{\infty}\left((2k-1)(n_w+1)^{k}+(n_j+1)^k\right)\ket{n_w,n_j}\bra{n_w,n_j}\\
					&{\leq} 2^kk J_*\sum_{j:(j,w)\in E}\left((N_w+1)^{k}+(N_j+1)^k\right)\,.
				\end{aligned}
			\end{align*}
			This concludes the first auxiliary result.
		\end{proof}

	\begin{lem}\label{lem-aux:l-diss-moment-propagation}
			Let $\cL[a^\ell]$ be the single-site $\ell$-photon dissipation term appearing in Equation~\eqref{eq:lindblad}, and let $k\ge 1$ and $\ell\ge 2$. Then, for any state $\rho\in\cT_f$,
			\begin{equation*}
				\tr\big[\cL[a^\ell](\rho)(f(N))\big]\le -\ell\tr\big[\rho\,(N+\1)^{\ell+k-1}\,\big]+\frac{(\ell+1)\ell^2}{2}\tr\big[\rho\,(N+\1 )^{\ell+k-2}\big]\,,
			\end{equation*}
			where $f(x)=(x+1)^{k}1_{x\ge -1}$.
		\end{lem}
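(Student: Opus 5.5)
The plan is to pass to the Heisenberg picture and reduce the claim to a scalar inequality on the spectrum of $N$. For $\rho\in\cT_f$ all operators act on finitely many Fock vectors, so cyclicity of the trace gives $\tr[\cL[a^\ell](\rho)f(N)]=\tr[\rho\,\cL[a^\ell]^\dagger(f(N))]$, with
\begin{equation*}
	\cL[a^\ell]^\dagger(f(N))=(a^\dagger)^\ell f(N)a^\ell-\tfrac12\{(a^\dagger)^\ell a^\ell,f(N)\}.
\end{equation*}
Two identities make this operator diagonal. First, $(a^\dagger)^\ell a^\ell=N^{(\ell)}:=N(N-\1)\cdots(N-(\ell-1)\1)$, the falling factorial. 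Second, the intertwining $f(N)a^\ell=a^\ell f(N-\ell\1)$ yields $(a^\dagger)^\ell f(N)a^\ell=N^{(\ell)}f(N-\ell\1)$. Hence
\begin{equation*}
	\cL[a^\ell]^\dagger(f(N))=N^{(\ell)}\bigl(f(N-\ell\1)-f(N)\bigr).
\end{equation*}
Since $\rho\ge0$, it suffices to prove the following inequality for every $n\in\mathbb N_0$, with $x=n+1$:
\begin{equation*}
	n^{(\ell)}\bigl(f(n-\ell)-f(n)\bigr)\le-\ell x^{\ell+k-1}+\tfrac{(\ell+1)\ell^2}{2}x^{\ell+k-2}.
\end{equation*}

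\textbf{Case $n\le\ell-1$.} Here $n^{(\ell)}=0$, so the left side vanishes. The right side equals $x^{\ell+k-2}\bigl(\tfrac{(\ell+1)\ell^2}{2}-\ell x\bigr)$. Since $x\le\ell$ and $\tfrac{\ell+1}{2}\ge1$, we have $\tfrac{(\ell+1)\ell^2}{2}\ge\ell^2\ge\ell x$, so the right side is nonnegative.

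\textbf{Case $n\ge\ell$.} Now $f(n-\ell)=(x-\ell)^k$ with $y:=x-\ell\ge1$. I will use the elementary bound, valid for $k\ge1$ and $x\ge y\ge0$,
\begin{equation*}
	x^k-y^k=x^{k-1}(x-y)+y(x^{k-1}-y^{k-1})\ge x^{k-1}(x-y).
\end{equation*}
It gives $f(n)-f(n-\ell)\ge\ell x^{k-1}$, and therefore the left side is at most $-\ell\,n^{(\ell)}x^{k-1}$. Next I bound the falling factorial from below:
\begin{equation*}
	n^{(\ell)}=\prod_{j=0}^{\ell-1}(x-1-j)=x^\ell\prod_{j=1}^{\ell}\Bigl(1-\frac{j}{x}\Bigr),
\end{equation*}
where every $j/x\in[0,1]$ because $x\ge\ell+1$. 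The Weierstrass product inequality $\prod(1-a_j)\ge1-\sum a_j$ for $a_j\in[0,1]$ then gives
\begin{equation*}
	n^{(\ell)}\ge x^\ell-\tfrac{\ell(\ell+1)}{2}x^{\ell-1}.
\end{equation*}
Multiplying by $-\ell x^{k-1}$ produces exactly $-\ell x^{\ell+k-1}+\tfrac{(\ell+1)\ell^2}{2}x^{\ell+k-2}$, which is the claim.

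Applying this scalar bound through functional calculus on the spectral decomposition of $\rho$ in the number basis then concludes the lemma. The only delicate points are the lower bound on $n^{(\ell)}$ with the sharp constant $\tfrac{\ell(\ell+1)}{2}$ and the low-occupation regime $n<\ell$. The algebraic identities are routine checks on $\cH_f$.
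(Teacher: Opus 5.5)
Your proposal is correct and follows essentially the same route as the paper: pass to the Heisenberg picture to get the diagonal operator $-N^{(\ell)}\bigl(f(N)-f(N-\ell\1)\bigr)$, bound the difference below by $\ell(N+\1)^{k-1}$ where $n\ge\ell$, bound the falling factorial below by $(N+\1)^\ell-\frac{\ell(\ell+1)}{2}(N+\1)^{\ell-1}$, and check the low-occupation regime $n<\ell$ separately. The differences are cosmetic. You prove the difference bound algebraically, where the paper uses an integral representation. You also derive the falling-factorial bound from the Weierstrass product inequality, where the paper cites Lemma~\ref{lem:bounds-ccr-l-product}.
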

		\begin{proof}
			For the function $f(x)=(x+1)^{k}1_{x\ge -1}$, we use $af(N+j\1)=f(N + (j+1)\1)a$ to achieve
			\begin{equation*}
				\begin{aligned}
					\cL[a^\ell]^\dagger(f(N))&=(a^\dagger)^\ell f(N)a^\ell-\frac{1}{2}\Big((a^\dagger)^\ell a^\ell f(N)+f(N)(a^\dagger)^\ell a^\ell\Big)=-(a^\dagger)^\ell a^\ell g_\ell(N)
				\end{aligned}
			\end{equation*}
			for $g_\ell(x)=f(x) - f(x - \ell)$. Next, we use $(a^\dagger)^\ell a^\ell=(N-(\ell-1)\1)\cdots(N-\1)N\eqqcolon N[-\ell+1:0]$ with $N[r:j]\coloneqq (N+r\1)\cdots (N+j\1)$ and the convention $N[r:j]=\1$ whenever $r>j$. Therefore,
			\begin{align*}
				\tr\big[\rho\,\cL[a^\ell]^\dagger(f(N))\big]&=-\tr\big[\rho\,N[-\ell+1:0]g_\ell(N)\big]\,.
			\end{align*}
			We lower bound the last trace as follows:
			\begin{equation*}
				\begin{aligned}
					\tr\big[\rho\,N[-\ell+1:0]\,g_\ell(N)\big]&\overset{(1)}{\geq}\ell\tr\big[\rho\,N[-\ell+1:0](N+\1)^{k-1}\big]\\
					&\overset{(2)}{\geq}\ell\,\tr\big[\rho\,(N+\1)^{\ell+k-1}\,\big]-\frac{(\ell+1)\ell^2}{2}\tr\big[\rho\,(N+\1 )^{\ell+k-2}\big]\,.
				\end{aligned}
			\end{equation*}
			In $(1)$, we used $g_\ell(x)\geq(x+1)^{k-1}\ell$ for $x>\ell-1$, which follows by
			\begin{equation*}
				\begin{aligned}
					(x+1)^{-(k-1)}g_\ell(x)&=k\ell\int_{0}^1\left(1-s\frac{\ell}{x+1}\right)^{k-1}ds\geq k\ell\int_{0}^1\left(1-s\right)^{k-1}ds\geq\ell\,.
				\end{aligned}
			\end{equation*}
			In $(2)$, we used that for
			\begin{align*}
				N[-\ell+1:0]&=\sum_{n\ge \ell}\,(n-\ell+1)\dots n\,\ketbra{n}{n}\\
				&\overset{(3)}{\ge} \sum_{n\ge \ell}\left((n+1)^\ell-\frac{(\ell+1)\ell}{2}(n+1)^{\ell-1}\right)\ketbra{n}{n}\\
				&\ge (N+\1)^\ell-\frac{(\ell+1)\ell}{2}\,(N+\1)^{\ell-1}\,,
			\end{align*}
			where $(3)$ comes from Lemma~\ref{lem:bounds-ccr-l-product} below, and the last inequality uses $(n+1)\leq\frac{(\ell+1)\ell}{2}$ for $n<\ell$ and $2\leq\ell$. Thus,
			\begin{equation}\label{eq-ex:l-dissipation-upper-bound}
				\begin{aligned}
					\tr\big[\cL[a^\ell](\rho)(f(N))\big]&\le -\ell\tr\big[\rho\,(N+\1)^{\ell+k-1}\,\big]+\frac{(\ell+1)\ell^2}{2}\tr\big[\rho\,(N+\1 )^{\ell+k-2}\big]\,.
				\end{aligned}
			\end{equation}
		\end{proof}
		Finally, we recall the technical lemma used in the above proof:
	\begin{lem}\label{lem:bounds-ccr-l-product}
			If $\ell\in\N$ and $x\geq \ell$, then
			\begin{equation*}
				\begin{aligned}
					(x+1)^\ell-\frac{(\ell+1)\ell}{2}(x+1)^{\ell-1}
					\leq ((x+1)-\ell)\cdots ((x+1)-1)
					\leq (x+1)^\ell .
				\end{aligned}
			\end{equation*}
	\end{lem}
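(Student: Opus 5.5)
We need to prove: for $\ell\in\N$ and $x\ge\ell$,
\[
(x+1)^\ell-\tfrac{(\ell+1)\ell}{2}(x+1)^{\ell-1}\le\prod_{j=1}^{\ell}((x+1)-j)\le(x+1)^\ell .
\]
Set $y=x+1\ge\ell+1$, so every factor $y-j$ with $1\le j\le\ell$ lies in $[y-\ell,y-1]\subset[1,y]$.

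\textbf{Upper bound.} Each factor satisfies $0<y-j\le y$. Multiplying these $\ell$ inequalities between positive numbers gives $\prod_j(y-j)\le y^\ell$ directly.

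\textbf{Lower bound.} Write $\prod_{j=1}^\ell(y-j)=y^\ell\prod_{j=1}^\ell(1-j/y)$. Each $u_j=j/y$ lies in $[0,1)$ because $j\le\ell<y$. We then use the Weierstrass product inequality $\prod_j(1-u_j)\ge1-\sum_j u_j$ for $u_j\in[0,1]$. It follows by induction on $\ell$: if $P\ge1-S$ and $u\in[0,1]$, then
\[
P(1-u)\ge(1-S)(1-u)=1-S-u+Su .
\]
Here $Su\ge0$, and $P\ge0$ together with $1-u\ge0$ justifies multiplying the inequality $P\ge1-S$ by $1-u$. Since $\sum_j u_j=\ell(\ell+1)/(2y)$, we obtain
\[
\prod_j(y-j)\ge y^\ell\Bigl(1-\tfrac{\ell(\ell+1)}{2y}\Bigr)=y^\ell-\tfrac{(\ell+1)\ell}{2}y^{\ell-1},
\]
which is the lower bound.

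The argument is elementary and has no real obstacle. The only point needing care is the sign bookkeeping in the induction: nonnegativity of each $1-u_j$ is what lets the inequality be multiplied through. The hypothesis $x\ge\ell$ guarantees this, and in fact $x\ge\ell-1$ would already suffice.
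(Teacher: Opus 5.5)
Your proof is correct. The paper gives no argument of its own for this lemma and only cites \cite[Lem.~C.3]{Gondolf.2024}, so your write-up serves as a self-contained replacement rather than a variant of an argument in the text.

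The upper bound is immediate from $0<y-j\le y$. For the lower bound, you write $\prod_j(y-j)=y^\ell\prod_j(1-j/y)$ and apply the Weierstrass product inequality $\prod_j(1-u_j)\ge 1-\sum_j u_j$. This yields exactly the constant $\frac{(\ell+1)\ell}{2}=\sum_{j=1}^\ell j$. The argument also shows, as you observe, that the inequality holds for all real $x\ge\ell-1$, not only $x\ge\ell$.

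One small correction to your bookkeeping remark. In the induction step, multiplying $P\ge 1-S$ by $1-u$ requires only $1-u\ge 0$. Dropping the term $Su$ requires only $S,u\ge 0$. The nonnegativity of $P$ is never used.
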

	\begin{proof}
			See \cite[Lem.~C.3]{Gondolf.2024}.
	\end{proof}

\section{Bounded-Interaction Dissipative Lieb--Robinson Bounds}\label{sec:lrb}
	We derive the bounded-interaction estimate used in Proposition~\ref{prop:bounded-interaction-lrb} and the adaptive locality proofs from \cite[Thm.~3.1]{Rubiliani.2026}. We retain the time dependence of the edge interactions and then remove the on-site cutoff.

	\begin{cor}[Nearest-neighbor bounded-interaction bound]\label{cor:lrb-bounded-interaction-graph}
		Let $(V,E)$ be a finite graph with maximum degree $\Delta_G$ and graph distance $d$, and fix $T>0$. Let $S_t$ be a support-preserving Heisenberg adjoint of a contractive $C_0$-semigroup on $\cT_1$. Assume that $S_t$ admits adjoint approximations $S_t^{(L)}$ generated by sums of bounded normal contraction generators supported on individual sites, whose preduals converge strongly on $\cT_1$, locally uniformly in time, to the predual of $S_t$.

		For $u\in[0,T]$, let $\Phi_e(u)$ be bounded normal superoperators supported on edges $e\in E$, each generating a contraction semigroup. Assume that each $\Phi_e$ is piecewise constant with finitely many jumps. Set
		\begin{equation*}
			\|\Phi(u)\|_1\coloneqq\sup_{x\in V}\sum_{\substack{e\in E\\x\in e}}\|\Phi_e(u)\|_{\infty\to\infty},
			\qquad \|\Phi\|_1\coloneqq\sup_{0\le u\le T}\|\Phi(u)\|_1.
		\end{equation*}
		Let $\tau_{s,t}$ be the Heisenberg propagator obtained by adding these edge terms to the on-site dynamics. Then, for $0\le s\le t\le T$, every bounded observable $A$ supported in $X$, and every bounded normal superoperator $K_Y$ supported in $Y$ with $X\cap Y=\varnothing$,
		\begin{equation}\label{eq:graph-bounded-interaction-lrb}
			\|K_Y\tau_{s,t}(A)\|_\infty
			\le
			\frac{1}{2}\|K_Y\|_{\infty\to\infty}\|A\|_\infty\min\{|X|,|Y|\}\,
			e^{-d(X,Y)}
			\left(e^{2\mathrm e\int_s^t\|\Phi(u)\|_1\,du}-1\right).
		\end{equation}
		For time-independent edge terms, the exponent is $2\mathrm e\|\Phi\|_1(t-s)$, with $\|\Phi\|_1\le\Delta_G\sup_e\|\Phi_e\|_{\infty\to\infty}$. The bound is independent of the on-site generators and the local Hilbert-space dimensions.
	\end{cor}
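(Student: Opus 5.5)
The plan is to derive Corollary~\ref{cor:lrb-bounded-interaction-graph} from the general dissipative Lieb--Robinson bound \cite[Thm.~3.1]{Rubiliani.2026}, first for the approximating dynamics with bounded on-site generators $S_t^{(L)}$, and then pass to the limit $L\to\infty$. For fixed $L$, all generators are bounded, so the Heisenberg propagator $\tau^{(L)}_{s,t}$ is norm-differentiable on each interval where $\Phi$ is constant. The standard interaction-picture argument then applies. Set
\begin{equation*}
	f(u)=K_Y\tau^{(L)}_{s,u}(A),
\end{equation*}
and write $\tau^{(L)}_{s,u}$ as the on-site semigroup perturbed by the edge terms. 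Duhamel's formula, with the site-supported semigroup factored out, gives an integral recursion. The on-site part is a product of contractions, each supported on a single site. It preserves supports, commutes with $K_Y$ off their common sites, and never enlarges supports, so it drops out of every norm estimate. Only edges $e$ with $e\cap Y\neq\varnothing$ contribute, each weighted by $\|\Phi_e(u)\|_{\infty\to\infty}$. The time dependence enters only through $\int\|\Phi(u)\|_1\,du$, because the recursion is linear in the edge norms at each time. Piecewise constancy with finitely many jumps lets us glue the interval propagators via the cocycle property.

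Iterating the recursion gives the usual path expansion. The $n$-th term is a sum over chains of edges $e_1,\dots,e_n$ with consecutive overlaps, connecting $Y$ to $X$. Its size is bounded by
\begin{equation*}
	\frac{\bigl(\int_s^t\|\Phi(u)\|_1\,du\bigr)^n}{n!}
\end{equation*}
times the number of such chains, counted with the $\|\Phi\|_1$ normalization; this is exactly the role of the sup-over-vertices norm. Such a chain needs length $n\ge d(X,Y)$, and choosing the endpoint from the smaller of $X$ and $Y$ gives the factor $\min\{|X|,|Y|\}$. We then insert the weight $e^{n}\ge e^{d(X,Y)}$ and its inverse and sum the series. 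With the commutator-type bound $\|K_Y\|\|A\|$ (the $\frac12$ comes from the normalization in \cite{Rubiliani.2026}, where edge terms act as commutators or Lindbladian differences), the series sums to
\begin{equation*}
	e^{-d(X,Y)}\bigl(e^{2\mathrm e\int\|\Phi\|_1}-1\bigr).
\end{equation*}
Here the factor $2$ accounts for the two endpoints of each edge. In practice I would simply verify the hypotheses of \cite[Thm.~3.1]{Rubiliani.2026} (a support-preserving contractive on-site part and bounded edge terms) and read off the constant, rather than redo the combinatorics.

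The main obstacle is the limit $L\to\infty$, since the true on-site dynamics is unbounded and $\tau_{s,t}$ is defined only by duality. I would handle it in three steps. First, the bound for $\tau^{(L)}_{s,t}$ is uniform in $L$. Second, the preduals of the full propagators converge strongly on $\cT_1$, locally uniformly in time; this follows by a bounded-perturbation Duhamel argument (the edge terms are bounded), using the assumed strong convergence of the on-site preduals together with dominated convergence. Consequently $\tau^{(L)}_{s,t}(A)\to\tau_{s,t}(A)$ weak*. Third, since $K_Y$ is normal, $K_Y\tau^{(L)}_{s,t}(A)\to K_Y\tau_{s,t}(A)$ weak*, and the operator norm is weak* lower semicontinuous. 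The uniform bound therefore passes to the limit. The final remark follows by bounding $\|\Phi\|_1\le\Delta_G\sup_e\|\Phi_e\|$, since each vertex lies in at most $\Delta_G$ edges.
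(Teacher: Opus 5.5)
Your proposal is correct and follows essentially the same route as the paper. You apply the path expansion of \cite[Thm.~3.1]{Rubiliani.2026} to the bounded on-site approximations while retaining the instantaneous edge norms, so that paths need $n\ge d(X,Y)$, the ordered time integral gives $\bigl(\int_s^t\|\Phi(u)\|_1\,du\bigr)^n/n!$, and the weight $e^{n-d(X,Y)}\ge1$ sums the series. You then remove the cutoff via strong convergence of the perturbed preduals, weak* convergence, normality of $K_Y$, and weak* lower semicontinuity of the operator norm.

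The only difference is at the jumps of the schedule. The paper handles them by convex interpolation and convergence in time-integrated norm. You instead run the recursion directly across the glued piecewise-constant propagator, which is equally valid because the edge terms are bounded, provided the expansion is done over all of $[s,t]$ at once rather than by gluing bounds interval by interval.
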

	\begin{proof}
		We first keep the on-site generators bounded and denote the resulting propagator by $\tau_{s,t}^{(L)}$. The adaptive-Dyson expansion in \cite[Thm.~3.1]{Rubiliani.2026} contains only paths of edge terms, since one-site terms cannot occur in a self-avoiding path. Retaining the instantaneous interaction norm in the path count of \cite[Lem.~4.2 and proof of Thm.~3.5]{Rubiliani.2026} gives
		\begin{equation*}
			\|K_Y\tau_{s,t}^{(L)}(A)\|_\infty
			\le
			\frac{1}{2}\|K_Y\|_{\infty\to\infty}\|A\|_\infty\min\{|X|,|Y|\}\,
			\sum_{n\ge d(X,Y)}\frac{\left(2\int_s^t\|\Phi(u)\|_1\,du\right)^n}{n!}.
		\end{equation*}
		Indeed, paths with $n$ edges require $n\ge d(X,Y)$, and the ordered time integral of $\prod_{j=1}^n\|\Phi(u_j)\|_1$ equals $\bigl(\int_s^t\|\Phi(u)\|_1\,du\bigr)^n/n!$. Multiplying each summand by $e^{n-d(X,Y)}\ge1$ and summing proves Equation~\eqref{eq:graph-bounded-interaction-lrb} for $\tau_{s,t}^{(L)}$. If $d(X,Y)=+\infty$, no connecting path exists and the left side vanishes. Convex interpolation at the finitely many jumps preserves the contraction-generator property. Convergence in the time-integrated norm then extends the bound to the stated schedules by Duhamel's formula. We next remove the on-site cutoff. The bounded-perturbation expansion on each time interval and convergence of finite products give strong convergence of the perturbed predual propagators. Hence $\tau_{s,t}^{(L)}(A)\to\tau_{s,t}(A)$ weak*. Normality of $K_Y$ and weak* lower semicontinuity of the operator norm give Equation~\eqref{eq:graph-bounded-interaction-lrb}.
	\end{proof}

	For both the Bose--Hubbard and cat-code models, the required on-site approximations are obtained by replacing each Hamiltonian and jump by $P_z^{(L)}H_zP_z^{(L)}$ and $P_z^{(L)}L_zP_z^{(L)}$, respectively. The resulting bounded GKSL generators on the full one-mode space eventually agree with the polynomial generator on every finite-Fock matrix unit. These matrix units span a generator core by \cite[Cor.~3.8]{Rico.2025}, so the semigroup approximation theorem \cite[Chap.~III, Sec.~4]{Engel.2000} gives strong trace-class convergence, locally uniformly in time. Finite tensor products preserve this convergence and support preservation.

	For a cutoff schedule $M$, take the on-site dynamics on $R$, extended by the identity outside $R$, and set $\Phi_e(u)=i[H_e^{(M(u))},\cdot]$ on $E(R)$ and zero otherwise. The interaction bounds in the main proofs give
	\begin{equation*}
		\int_s^t\|\Phi(u)\|_1\,du\le
		\begin{cases}
			4\Delta_GJ_*\int_s^t M(u)\,du,&\text{Bose--Hubbard},\\[3pt]
			C\Delta_GJ_{\mathrm{cat}}\int_s^t M(u)^2\,du,&\text{four-photon polynomial model}.
		\end{cases}
	\end{equation*}
	Thus Equation~\eqref{eq:graph-bounded-interaction-lrb}, applied with $K_Y=i[H_e^{(M(s))},\cdot]$ for $e\in\partial_E R$ disjoint from $X$, gives the boundary estimates in the adaptive proofs. Bounded projected code dynamics is included by taking the on-site approximations equal to the original semigroup.

\par
\section{Finite-region adiabatic estimate}\label{app:local-adiabatic-proof}
	\begin{lem}\label{lem:cat-finite-adiabatic}
		Consider the cat model in Section~\ref{subsec:local-adiabatic-limit}. Then the dissipative evolution is approximated by the projected Hamiltonian evolution with error
		\begin{equation}\label{eq:cat-finite-adiabatic}
			\|e^{t\cL_R^\gamma}(\sigma)
				-e^{t\mathcal K_R^{\mathrm{proj}}}(\sigma)\|_1
				\le \frac{C(t|R|^2+|R|)}{\gamma}
		\end{equation}
		for every finite region $R$, $t\ge0$, $\gamma>0$, and state $\sigma=P_R\sigma P_R$. The constant $C<\infty$ depends only on $\sup_x|\alpha_x|$, $J_{\mathrm{cat}}$, the uniform bound on the coefficients of $H_x$, and $\Delta_G$. It is independent of $R$, $t$, and $\gamma$. For every fixed $k\ge1$, $\|(N_x+\1)^kP_x\|_\infty$ is uniformly bounded in $x$.
	\end{lem}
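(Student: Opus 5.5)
The plan is a second-order adiabatic-elimination argument. The strong dissipator $\gamma\cD_R$ with $\cD_R=\sum_{x\in R}\cL[L_x]$ annihilates the code block, and its pseudo-inverse is available explicitly on the off-diagonal blocks that the Hamiltonian creates.

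\emph{Code-space moments and the gap.} Each $\ker L_x$ is spanned by the four coherent states $\ket{i^m\alpha_x}$, $m=0,\dots,3$ (for $\alpha_x=0$ it is $\mathrm{span}\{\ket0,\dots,\ket3\}$). So $P_x$ has rank four and $\|(N_x+\1)^kP_x\|_\infty$ is bounded uniformly for $|\alpha_x|\le\sup_y|\alpha_y|$, because coherent-state moments are polynomial in $|\alpha|$. This gives the last assertion of the lemma.

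Next, $G_x=L_x^\dagger L_x$ has compact resolvent, and its kernel is $\operatorname{ran}P_x$. Its first nonzero eigenvalue $g(\alpha_x)$ is positive and depends continuously on $\alpha_x$. Compactness of the amplitude range therefore gives $g_*>0$ with $G_x\ge g_*(\1-P_x)$. Set $G=\sum_{x\in R}G_x$. For a local term $h$ (an $H_x$ or an $H_e$), $hP_R$ is bounded by the moment bound, and its range lies in $\operatorname{ran}P_{R\setminus Z}$ with $Z$ the support of $h$. On the complement of $P_R$ inside that range, $G\ge g_*$. Hence $G^{-1}Q_RhP_R$ is bounded with norm at most $C$, where $Q_R=\1-P_R$.

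\emph{Key identity.} Put $\rho_s=e^{s\mathcal K_R^{\mathrm{proj}}}(\sigma)$. This stays of the form $P_R\rho_sP_R$, since $P_RH_R^{\mathrm{cat}}P_R$ generates a unitary group on $\operatorname{ran}P_R$. For any $\omega=P_R\omega P_R$, $\cD_R(\omega)=0$, and for $X=Y P_R$ with $Y=Q_RY$ one gets $\cD_R(XP_R)=-\tfrac12 GX$ (the jump term vanishes since $L_xP_R=0$).

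The generator difference on code states is
\begin{equation*}
(-i[H_R^{\mathrm{cat}},\cdot]-\mathcal K_R^{\mathrm{proj}})(\rho_s)=-iQ_RH_R^{\mathrm{cat}}\rho_s+i\rho_sH_R^{\mathrm{cat}}Q_R ,
\end{equation*}
which is purely off-diagonal. It therefore equals $\cD_R(W_s)$ with
\begin{equation*}
W_s=2iG^{-1}Q_RH_R^{\mathrm{cat}}\rho_s-2i\rho_sH_R^{\mathrm{cat}}Q_RG^{-1}.
\end{equation*}
Summing the local bounds gives $\|W_s\|_1\le C|R|$.

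\emph{Duhamel and integration by parts.} Write $\cL_R^\gamma=\mathcal K+\gamma\cD_R$ with $\mathcal K=-i[H_R^{\mathrm{cat}},\cdot]$. Duhamel gives
\begin{equation*}
e^{t\cL_R^\gamma}\sigma-\rho_t=\int_0^te^{(t-s)\cL_R^\gamma}\cD_R(W_s)\,ds .
\end{equation*}
Substitute $\cD_R=\gamma^{-1}(\cL_R^\gamma-\mathcal K)$ and use $\partial_s e^{(t-s)\cL_R^\gamma}=-e^{(t-s)\cL_R^\gamma}\cL_R^\gamma$. The $\cL_R^\gamma$ part integrates by parts to
\begin{equation*}
\gamma^{-1}\Bigl(-W_t+e^{t\cL_R^\gamma}W_0+\int_0^te^{(t-s)\cL_R^\gamma}\partial_sW_s\,ds\Bigr).
\end{equation*}
The boundary terms are bounded by $C|R|/\gamma$ by contractivity. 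Next, $\partial_sW_s=W(\mathcal K_R^{\mathrm{proj}}\rho_s)$, where $\|P_RH_R^{\mathrm{cat}}P_R\|\le C|R|$. Together with $\|\mathcal K(W_s)\|_1$, this is bounded by $C|R|^2$ via products of two local terms $h\,G^{-1}Q_Rh'P_R$. Each such product is bounded using the code-space moments and the fact that $G^{-1}$ preserves the occupation-number structure up to bounded moment growth. These terms contribute $Ct|R|^2/\gamma$, which yields \eqref{eq:cat-finite-adiabatic}.

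\emph{Main obstacle.} The delicate step is the uniform operator bounds on $h\,G^{-1}Q_Rh'P_R$. Here $h$ is unbounded, so I need $G^{-1}$ to map the finite-moment vectors $Q_Rh'P_R\psi$ to vectors with controlled moments. I would get this from the fact that $G_x\ge c(N_x+\1)^4-C$, so that $(N_x+\1)^2G^{-1}$ is bounded on the relevant complement and absorbs the degree-four $h$.

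The rigorous justification of Duhamel and of the integration by parts also has to be handled, since $W_s$ need not lie in the generator domain. For this I would use the invariant high-order Sobolev spaces from \cite[Lem.~D.2]{Rico.2025}, as in the proof of Proposition~\ref{prop:cat-code-polynomial-lrb}.
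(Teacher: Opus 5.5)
Your argument is correct after one small repair (below). It has the same architecture as the paper's proof:
\begin{itemize}
\item code-space moment bounds;
\item a uniform gap for $G=\sum_{x\in R}L_x^\dagger L_x$ on $\operatorname{ran}(\1-P_R)$;
\item relative boundedness of $H_R^{\mathrm{cat}}$ with respect to $G$;
\item two Duhamel identities, justified through the invariant Sobolev spaces of \cite[Lem.~D.2]{Rico.2025}.
\end{itemize}
The paper does not write out the adiabatic elimination. It takes the gap $G\ge 24(\1-P_R)$ and the final estimate from \cite[Lem.~E.5, E.11, Thm.~A.1]{Moebus.2025} and only checks that theorem's hypotheses, including the additional bound $\|L'_RH_R^{\mathrm{cat}}P_R\|_\infty\le C|R|$ with $L'_R=G$. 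It obtains the code moments from stationarity of code states and the one-site regularization bound \eqref{eq:cat-local-moments} as $t\to\infty$. Your explicit pseudo-inverse $W_s$ and integration by parts make the proof self-contained. They also show that only three inputs enter: the gap, $\|H_R^{\mathrm{cat}}P_R\|_\infty\le C|R|$, and the relative bound. Your gap constant $g_*$ is non-explicit, coming from continuity in $\alpha_x$ plus compactness. That suffices, since $C$ may depend on $\sup_x|\alpha_x|$.

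\textbf{The repair concerns the last assertion of the lemma.} Bounded moments of each coherent state $\ket{i^m\alpha_x}$ do not bound $\|(N_x+\1)^kP_x\|_\infty$ uniformly as $\alpha_x\to0$. All four states tend to $\ket{0}$, so unit vectors in their span can have unbounded coefficients. Use instead the orthogonal basis of $\ker L_x$ indexed by residues mod four,
\[
\phi_\mu\propto\sum_{n\ge0}\alpha_x^{4n}\bigl(\mu!/(4n+\mu)!\bigr)^{1/2}\ket{4n+\mu},\qquad \mu=0,\dots,3.
\]
These tend to $\ket{\mu}$ and have uniformly bounded moments for bounded $|\alpha_x|$. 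Alternatively, use the paper's stationarity argument.

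\textbf{Your stated obstacle closes directly.} You need no claim that $G^{-1}$ preserves occupation structure.
\begin{itemize}
\item The form bound $G_x\ge c(N_x+\1)^4-C$ and $\langle\psi,G_x\psi\rangle\le\langle\psi,G\psi\rangle\le\|\psi\|\,\|G\psi\|$ give $\|(N_x+\1)^2\psi\|\le C(\|G\psi\|+\|\psi\|)$.
\item Hence $\|H_R^{\mathrm{cat}}\psi\|\le C|R|(\|G\psi\|+\|\psi\|)$.
\item Combined with $G\ge g_*Q_R$, this yields $\|H_R^{\mathrm{cat}}G^{-1}Q_R\|_\infty\le C|R|(1+g_*^{-1})$.
\end{itemize}
This bounds all products $h\,G^{-1}Q_Rh'P_R$ at once, and it is the same relative bound the paper uses.
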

	\begin{proof}
		We first bound moments on the code space. A one-site code state is stationary under $\cL[L_x]$. Apply Equation~\eqref{eq:cat-local-moments} at order $2k$ to the one-site model with zero Hamiltonian and unit loss rate, and let time tend to infinity. Taking the supremum over code states bounds $\|(N_x+\1)^kP_x\|_\infty$ uniformly for bounded $|\alpha_x|$.

		Next, we verify a gap for $L'_R=\sum_{x\in R}L_x^\dagger L_x$ and bound the Hamiltonian relative to it. The gap estimates \cite[Lem.~E.5, E.11]{Moebus.2025} give $L'_R\ge24(\1-P_R)$. The one-site polynomial estimates \cite[Lem.~C.1 and Cor.~C.4]{Moebus.2025} imply, uniformly in $x$,
		\begin{equation*}
			\|(N_x+\1)^2\psi\|
				\le C\bigl(\|L_x^\dagger L_x\psi\|+\|\psi\|\bigr).
		\end{equation*}
		Every Hamiltonian monomial has degree at most four and support on at most two sites. Its norm is bounded by a constant times the sum of these number weights on its support. Since the positive $L_x^\dagger L_x$ commute, $\|L_x^\dagger L_x\psi\|\le\|L'_R\psi\|$ on $\operatorname{dom}(L'_R)$. Summing $\mathcal O(|R|)$ local terms gives
		\begin{equation*}
			\|H_R^{\mathrm{cat}}\psi\|
			\le C|R|\bigl(\|L'_R\psi\|+\|\psi\|\bigr).
		\end{equation*}
		We also need bounds for Hamiltonian terms acting on code states. The code moment bounds give $\|H_R^{\mathrm{cat}}P_R\|_\infty\le C|R|$. In $L'_RH_R^{\mathrm{cat}}P_R$, the loss factor at $x$ annihilates a Hamiltonian term supported away from $x$. The remaining $\mathcal O(|R|)$ terms have uniformly bounded norm, so
		\begin{equation*}
			\|L'_RH_R^{\mathrm{cat}}P_R\|_\infty\le C|R|.
		\end{equation*}
		Finally, we apply the adiabatic estimate. The all-order Sobolev estimates and \cite[Lem.~D.2]{Rico.2025} justify the two Duhamel identities in the proof of \cite[Thm.~A.1, case $\mathcal K=0$]{Moebus.2025} for this generated dynamics. Its gap-dependent estimate, with the three bounds above, gives Equation~\eqref{eq:cat-finite-adiabatic}.
	\end{proof}

\section{Failure of uniform locality for free hopping}\label{app:free-hopping-obstruction}
	For the free-hopping example in Figure~\ref{fig:free-hopping-counterexample}, based on \cite[Supplemental Material, Section~I]{FaupinLemmSigalZhang.2025}, we prove that the local approximation error is one for every $0<t\le1$ and every spatial cut between $0$ and $r$, even as the buffer radius grows.

	Let $r\ge1$ be an integer and consider the finite chain $V=\{0,\ldots,r\}$ with Hamiltonian
	\begin{equation*}
		H=-\frac12\sum_{x=0}^{r-1}(a_x^\dagger a_{x+1}+a_{x+1}^\dagger a_x).
	\end{equation*}
	Let $\ket{\Omega}$ be the vacuum and $\ket{\psi_n}=(a_0^\dagger)^n\ket{\Omega}/\sqrt{n!}$. Let $h$ be the one-particle hopping matrix, with $h_{xy}=-1/2$ when $|x-y|=1$ and $h_{xy}=0$ otherwise. On each particle-number sector, $H$ is the sum of $h$ acting on each particle. The second-quantization identities in \cite[Thms.~13(1) and~18(3)]{Derezinski.2006} therefore give, on finite-Fock vectors,
	\begin{equation*}
		e^{-itH}a_0^\dagger e^{itH}=\sum_{x=0}^r u_{x0}(t)a_x^\dagger,
		\qquad u_{x0}(t)=(e^{-ith})_{x0}.
	\end{equation*}
	Since $H\ket{\Omega}=0$ and $\bra{\Omega}a_xa_y^\dagger\ket{\Omega}=\delta_{xy}$, these coefficients satisfy $u_{x0}(t)=\bra{\Omega}a_xe^{-itH}a_0^\dagger\ket{\Omega}$. Hence
	\begin{equation*}
		e^{-itH}\ket{\psi_n}
		=\frac{1}{\sqrt{n!}}\left(\sum_{x=0}^r u_{x0}(t)a_x^\dagger\right)^n\ket{\Omega}.
	\end{equation*}
	Thus the occupation at $r$ has a binomial distribution with parameters $n$ and $|u_{r0}(t)|^2$ \cite[Supplemental Material, Equation~(S1)]{FaupinLemmSigalZhang.2025}. The observable $A_r=\1-\ketbra{0}{0}_r$ detects a nonzero occupation at $r$. For any integer $1\le j\le r$, the evolution restricted to $R=\{j,\ldots,r\}$ preserves the initial vacuum on $R$. Hence
	\begin{equation*}
		\bra{\psi_n}\bigl(\tau_t(A_r)-\tau_t^R(A_r)\bigr)\ket{\psi_n}
		=1-\bigl(1-|u_{r0}(t)|^2\bigr)^n.
	\end{equation*}
	The right side is independent of the cut position $j$. To obtain a time independent of $r$, fix $0<t\le1$. In the Taylor series for $u_{r0}(t)$, the shortest walk from $0$ to $r$ contributes $(it/2)^r/r!$. Every remaining walk has length $r+2k$, and there are at most $\binom{r+2k}{k}$ such walks. Thus
	\begin{equation*}
		\left|u_{r0}(t)-\frac{(it/2)^r}{r!}\right|\le\frac{(t/2)^r}{r!}\sum_{k\ge1}\frac{(t^2/4)^k r!}{k!(r+k)!}\le\frac{(t/2)^r}{r!}\bigl(e^{t^2/4}-1\bigr)<\frac{(t/2)^r}{r!}.
	\end{equation*}
	Consequently, $u_{r0}(t)\ne0$. Choosing $n\ge\lceil|u_{r0}(t)|^{-2}\rceil$ gives the explicit lower bound
	\begin{equation*}
		\|\tau_t(A_r)-\tau_t^R(A_r)\|_\infty
		\ge\bra{\psi_n}\bigl(\tau_t(A_r)-\tau_t^R(A_r)\bigr)\ket{\psi_n}
		\ge1-e^{-1}.
	\end{equation*}
	Letting $n\to\infty$ gives $\|\tau_t(A_r)-\tau_t^R(A_r)\|_\infty=1$, since both evolved observables are positive contractions. Here $R=\{r\}[r-j]$, so no operator-norm local approximation bound can decay with the buffer radius uniformly over the initial occupation and finite volume.

	This is a spatial instance of a known channel obstruction. With the other sites initially in the vacuum, the channel from $0$ to $r$ is a pure-loss attenuator, up to a phase rotation. Any such channel with nonzero transmissivity has diamond distance two from vacuum replacement, as shown in \cite[Prop.~1]{Winter.2017}.

\par
\end{document}